\theoremstyle{plain}
\newtheorem{theorem}{Theorem}
\newaliascnt{lemma}{theorem}
\newtheorem{lemma}[lemma]{Lemma}
\newtheorem{invariant}{Invariant}
\theoremstyle{definition}
\newtheorem{definition}{Definition}
\newtheorem{observation}{Observation}
\newtheorem{remark}{Remark}
\numberwithin{equation}{lemma}
\newcommand{\rb}[2]{\raisebox{#1 mm}[0mm][0mm]{#2}}
\newcommand{\ignore}[1]{}
\newcommand{\ang}[1]{\left\langle #1 \right\rangle}
\newcommand{\bydef}{\stackrel{\mathrm{def}}{=}}
\newcommand{\ack}{\textbf{ack}}
\newcommand{\last}{\ensuremath{\operatorname{last}}}
\newcommand{\complete}{\ensuremath{\operatorname{complete}}}
\newcommand{\maxlast}{\ensuremath{\overline{\last}}}
\def\E{\ensuremath{\mathbf{E}}}
\def\BB{\ensuremath{\mathsf{BB}}}
\def\sgn{\ensuremath{\mathrm{sgn}}}
\def\CORR{\ensuremath{\operatorname{corr}}}
\def\DEV{\ensuremath{\operatorname{dev}}}
\def\PDEV{\ensuremath{{\DEV^{(p)}}}}
\def\PCORR{\ensuremath{\CORR^{(p)}}}
\newcommand{\pBB}{\ensuremath{\BB^{(p)}}}
\newcommand{\XMAX}{\ensuremath{X_{\mathrm{max}}}}
\newcommand{\KMAX}{\ensuremath{K_{\mathrm{max}}}}
\newcommand{\BrachaAgreement}{\textsf{Bracha-Agreement}}
\newcommand{\CoinFlip}{\ensuremath{\textsf{Coin-Flip}}}
\newcommand{\ReliableBroadcast}{\textsf{Reliable-Broadcast}}
\newcommand{\IteratedBlackboard}{\textsf{Iterated-Blackboard}}
\newcommand{\ceil}[1]{\left\lceil{#1}\right\rceil}
\newcommand{\poly}{\operatorname{poly}}
\title{Byzantine Agreement in Polynomial Time\\
with Near-Optimal Resilience\thanks{This work was supported by NSF grant CCF-1815316.}}
\author{Shang-En Huang\\
University of Michigan
\and
Seth Pettie\\
University of Michigan
\and 
Leqi Zhu\\
University of Michigan}
\date{}
\begin{document}
\maketitle

\begin{abstract}
It has been known since the early 1980s that Byzantine Agreement
in the full information, asynchronous model is impossible
to solve deterministically against even one crash fault~\cite{FischerLP85},
but that it \emph{can} be solved with probability 1~\cite{Ben-Or83},
even against an adversary that controls the scheduling of all messages
and \emph{corrupts} up to $f<n/3$ players~\cite{Bracha1987}.  The main downside of~\cite{Ben-Or83,Bracha1987} is that they terminate in 
$2^{\Theta(n)}$ rounds in expectation whenever $f=\Theta(n)$.

King and Saia~\cite{KingS2016,KingS2018} developed a polynomial protocol (polynomial rounds, polynomial computation) that is resilient to $f < (1.14\times 10^{-9})n$ 
Byzantine faults.  The new idea in their protocol
is to detect---and \emph{blacklist}---coalitions of likely-bad players 
by analyzing the deviations of random variables generated by those players 
over many rounds.  

In this work we design a simple 
collective coin-flipping protocol 
such that if any coalition of faulty players repeatedly does \emph{not}
follow protocol, then they will eventually be 
detected by one of two simple statistical 
tests.  
Using this coin-flipping protocol, 
we solve Byzantine Agreement in a polynomial
number of rounds, even in the presence of
up to $f<n/4$ Byzantine faults.  
This comes close to the $f<n/3$ upper bound 
on the maximum number of faults~\cite{BrachaT85,FischerLM86,LamportSP82}.
\end{abstract}

\section{Introduction}\label{sect:introduction}

The field of \emph{forensic accounting} is concerned with 
the detection of \emph{fraud} in financial transactions, 
or more generally, 
finding evidence of fraud, malfeasance, or fabrication in data sets.  
Some examples include detecting faked digital images~\cite{BonettiniBMT20},
suspicious reports of election data~\cite{Roukema14}
and political fundraising~\cite{GamermannA17},
fraudulent COVID numbers,\footnote{\url{https://theprint.in/opinion/benfords-law-detects-data-fudging-so-we-ran-it-through-indian-states-covid-numbers/673085/}}
and manipulated economic data~\cite{TildenJ12,Kauko19,Abrantes-Metz13} 
via Newcomb-Benford's law~\cite{KilaniG21}, 
detecting fabricated data sets\footnote{\url{http://datacolada.org/98}} 
in social science research~\cite{Simonsohn13,SimonsohnSN15},
or detecting match-fixing in sumo wrestling~\cite{DugganL02}.

Theoretical computer science has a strong tradition of embracing a fundamentally \emph{adversarial} view of the universe that borders on being outright paranoid.  Therefore it is somewhat surprising 
that TCS is, as a whole,  
credulous when it comes to adversarial manipulation of data and transactions.
In other words, fraud detection does not play a significant part in most 
algorithm design, \emph{even in multi-party 
models that explicitly posit the existence of malicious parties}.

\medskip

To our knowledge, the only work in TCS that has explicitly 
adopted a forensic accounting mindset is 
King and Saia's~\cite{KingS2016,KingS2018} 
breakthrough in Byzantine Agreement in the most challenging model:
the full-information (no crypto) asynchronous model
against an adaptive adversary.  In this problem there are $n$
players, each with initial input bits in $\{-1,1\}$, up to $f$
of which may fail (i.e., be \emph{adaptively 
corrupted} by the adversary) and behave arbitrarily.  
They must each \textbf{decide} on a bit 
in $\{-1,1\}$ subject to:
\begin{description}
\item[Agreement:] All non-corrupted players \textbf{decide} the same value $v$.
\item[Validity:] If all players begin with the same value $v$, all non-corrupted players \textbf{decide} $v$.
\end{description}
See \autoref{sect:themodel} for details of the model.
Prior to King and Saia's work~\cite{KingS2016,KingS2018},
it was known from Bracha~\cite{Bracha1987} (see also Ben-Or~\cite{Ben-Or83})
that the problem could be solved with probability 1 in $2^{\Theta(n)}$ 
time in expectation even if $f<n/3$ players fail, 
that $f<n/3$ cannot be improved~\cite{LamportSP82,BrachaT85,FischerLM86},
and by Fischer, Lynch, and Patterson's impossibility result~\cite{FischerLP85}, that no
\emph{deterministic} protocol exists even against a single crash failure.

King and Saia~\cite{KingS2016} reduce the problem to a certain coin-flipping game,
in which all players---good and adversarial---attempt to generate a (global) 
unbiased coin flip and agree on its outcome. Coin flipping games
have been studied extensively under adversarial manipulation (see Section~\ref{sect:related-work}),
but the emphasis is always on bounding the 
power of the adversarial players
to \emph{bias} 
the coin flip in their desired direction.  
King and Saia recognized that the  
primary \underline{long term} advantage 
of the adversary is \emph{anonymity}.
In other words, it can bias 
the outcome of 
coin flips at will, 
in the short term, 
but its advantage
simply evaporates if good players 
can merely \emph{identify} who the adversarial players are,
by detecting likely fraud via a statistical 
analysis of their transactions.
Good players can \emph{blacklist} (ignore) the adversarial players, removing their influence 
over the game.
If a sufficient number of fraudulent players are blacklisted, collective
coin-flipping by a set of good players becomes easy.

The journal version of King and Saia's work~\cite{KingS2016}
presents two methods for blacklisting players, which leads
to different fault tolerance levels.  The first protocol
has a polynomial round complexity and requires a polynomial
amount of local computation; it is claimed to be resilient
to $f<(4.25 \times 10^{-7})n$ Byzantine faults.  The second
protocol is tolerant to $f<n/400$ Byzantine faults, 
but requires exponential local computation.  
In response to some issues raised by Melynyk, Wang, and Wattenhofer (see Melynyk's Ph.D. thesis~\cite[Ch.~6]{Melnyk20}),
King and Saia~\cite{KingS2018} published a corrigendum, 
reducing the tolerance of the first protocol
to $f < (1.14\times 10^{-9})n$.

\begin{table}[]
    \centering
    \begin{tabular}{l|l|l}
    \textsf{Citation} & \textsf{Byzantine Faults ($f$)} & \textsf{Expected Rounds / Computation Per Round}\\\hline\hline
    Fischer, Lynch, Patterson \hfill 1983 & $f\geq 1$ & impossible deterministically\\\hline
    \cite{LamportSP82,BrachaT85,FischerLM86} \hfill 1982 & $f\geq n/3$ & impossible, even with randomization\\\hline
    \rb{-3}{Ben-Or} \hfill \rb{-3}{1983} & $f<n/5$ & $\exp(n)$ / $\poly(n)$\\
                    & $f<O(\sqrt{n})$ & $O(1)$ / $\poly(n)$\\\hline
    Bracha \hfill 1984 & $f<n/3$ & $\exp(n)$ / $\poly(n)$\\\hline
    \rb{-3}{King \& Saia} \hfill \rb{-3}{2016} & $f<n/400$ & $\poly(n)$ / $\exp(n)$\\
                             & $f<n/(1.14^{-1} \times 10^9)$ & $\poly(n)$ / $\poly(n)$\\\hline 
    \textbf{new} \hfill 2021 & $f<n/4$ & $\poly(n)$ / $\poly(n)$\\\hline\hline
    \end{tabular}
    \caption{Byzantine Agreement in the full information model against an adaptive adversary.}
    \label{tab:history}
\end{table}

\subsection{New Results}\label{sect:new-results}

In this paper we solve Byzantine Agreement in the full-information,
asynchronous model against an adaptive adversary, by 
adopting the same forensic accounting paradigm 
of King and Saia~\cite{KingS2016}.  We design 
a coin-flipping protocol and two simple statistical
tests such that if the Byzantine players continually
foil attempts to flip a fair coin, they will be
detected in a polynomial number of rounds 
by at least one of the tests, so long
as $f < n/4$.
(The tests measure individual deviation in 
$l_2$ norm and pair-wise correlation.)
Our analysis is tight inasmuch as these two particular
tests may not detect anything when $f\geq n/4$.

One factor contributing to the low resiliency of 
King and Saia's protocols~\cite{KingS2016,KingS2018} is that two good players 
may blacklist different sets of players, making
it easier for the adversary to induce disagreements
on the outcome of the shared coin flip.  
A technical innovation in our protocol is a method
to drastically reduce the level of disagreement between
the views of good players.  
First, we use a \emph{fractional}
blacklisting scheme.
Second, to ensure better consistency across 
good players, we extend King and Saia's~\cite{KingS2016} 
\emph{Blackboard} to 
an \emph{Iterated Blackboard} primitive that 
drastically reduces good players' disagreements of 
the historical transaction 
record by allowing retroactive corrections to the record.

\subsection{Related Work}\label{sect:related-work}

The approach of King and Saia~\cite{KingS2016} was foreshadowed 
several years earlier by Lewko~\cite{Lewko11}, who showed that 
protocols broadly similar to Ben-Or and Bracha must take an exponential 
number of rounds.  The key assumption is that messages are taken
at face value, without taking into account the \emph{identity}
of the sender, nor the \emph{history} of the sender's messages.

Byzantine agreement has been studied in synchronous and asynchronous
models, against computationally bounded or unbounded adversaries,
and with adaptive or non-adaptive adversaries. (In particular, a special case of the problem that restricts attention to crash failures, called \emph{consensus}, has been very extensively studied.) 
We refer the 
reader to~\cite{aspnes2003randomized,attiya2008tight,bjbo1998,ben2006byzantine,correia2011byzantine,king2011breaking} for some key 
results and surveys of the literature.
A result that is fairly close to ours is that of 
Kapron et al.~\cite{KapronKKSS10}.
They proved that against a \emph{non-adaptive} adversary (all corruptions made in advance)
Byzantine agreement can be solved asynchronously, against $f<n/(3+\epsilon)$ faults.

\medskip

Collective coin flipping has an illustrious history in computer science,
as it is a key concept in cryptography, distributed computing,
and analysis of boolean functions.  The problem was apparently first raised
by Blum~\cite{Blum81}, who asked how two mutually untrusted parties could
flip a shared coin over the telephone.  His solution used cryptography.
See~\cite{Cleve86,HaitnerT17,MoranNS16,BeimelOO15,Dachman-SoledMM14,HaitnerMO18,BuchbinderHLT21} for some recent work on coin flipping using cryptography.

\medskip

Ben-Or and Linial~\cite{Ben-OrL85} initiated a study of \emph{full information}
protocols for coin-flipping.  The players broadcast messages one-by-one in 
a specific order, and the final coin flip is a function of these messages.
The goal is to minimze the \emph{influence} of a coalition of $k$ bad players,
which is, roughly speaking, the amount by which they can bias the outcome towards
\emph{heads} or \emph{tails}.  Ben-Or and Linial's~\cite{Ben-OrL85} protocol
limits $k<n^{\log_3 2}$ bad players to influence $O(k/n)$.  Saks~\cite{Saks89}
and Ajtai and Linial~\cite{AjtaiL93} improved it to $O(k/n)$ influence with up to 
$k=O(n/\log n)$ players, 
and Alon and Naor~\cite{AlonN93} achieved optimum $O(k/n)$ influence for $k$ even linear in $n$.  The message size in these protocols is typically more than a single bit. 
If only single-bit messages are allowed and each player speaks once, 
the problem is equivalent to bounding the influence of variables in a boolean function~\cite{KahnKL88}.
Russel, Saks, and Zuckerman~\cite{RussellSZ02} considered parallel coin-flipping
protocols. The proved that any protocol that uses 1-bit messages and is resilient 
to linear-size coalitions must use $\Omega(\log^* n)$ rounds.

Aspnes~\cite{Aspnes98} considered a sequential coin-flipping game where $n$ coins
are flipped sequentially and the outcomes broadcast, but up to $t$ of these may be
\emph{suppressed} by the adversary.  Regardless of which function is used to map
the coin-flip sequence to a shared coin, the adversary can bias it whenever $t=\Omega(\sqrt{n})$.  
Very recently Haitner and Karidi-Heller~\cite{HaitnerK20}
resolved the complexity of  Ben-Or-Linial-type sequential coin flipping games
against an \emph{adaptive} adversary, that can corrupt players at will, as information is revealed. 
They proved that \emph{any}
such shared coin can be fixed to a desired outcome with probability $1-o(1)$  
by adaptively corrupting $\tilde{O}(\sqrt{n})$ parties.

\subsection{Organization}

In Section~\ref{sect:preliminaries} we review the model,
the reliable broadcast primitive, and Bracha's Byzantine agreement protocol,
and introduce the \emph{Iterated Blackboard} primitive, which generalizes~\cite{KingS2016,Kimmett2020}.

In Section~\ref{sect:iterated-coin-flipping} we begin with a simplified iterated 
coin-flipping
game and then proceed to study a more complicated iterated coin-flipping game that
can be implemented in the asynchronous distributed model and used within Bracha's algorithm.

Appendix~\ref{sect:appendix-prelim-proofs} contains proofs from Section~\ref{sect:preliminaries}
on reliable broadcast and the iterated blackboard.
Appendix~\ref{sect:tailbounds} reviews some standard concentration
inequalities and other theorems.
Appendix~\ref{section:rising-tide-proofs} contains some 
proofs showing that a certain fractional matching algorithm
has a Lipschitz property.

\section{Preliminaries}\label{sect:preliminaries}

\subsection{The Model}\label{sect:themodel}

There are $n$ \emph{processes}, $p_{1},\dots,p_{n}$, and $2n^2$ 
\emph{message buffers}, 
$\operatorname{In}_{j\rightarrow i}$ and 
$\operatorname{Out}_{i\rightarrow j}$ for all $i,j\in[n]$.
All processes are initially \emph{good} (they obey the protocol)
and the adversary may dynamically
\emph{corrupt} up to $f$ processes.
A \emph{bad}/\emph{corrupted} process is under complete 
control of the adversary and may behave arbitrarily.
The adversary controls the pace at which progress is made by scheduling
two types of \emph{events}.
\begin{itemize}
    \item A $\operatorname{compute}(i)$ event lets $p_i$ process all
    messages in the buffers $\operatorname{In}_{j\rightarrow i}$, 
    deposit new messages in $\operatorname{Out}_{i\rightarrow j}$, 
    and change state.
    \item A $\operatorname{deliver}(i,j)$ event removes a message from
    $\operatorname{Out}_{i\rightarrow j}$ and moves it to
    $\operatorname{In}_{i\rightarrow j}$.
\end{itemize}
Note that the adversary may choose a malicious order of events, but
cannot, for example, misdeliver or forge messages.
The adversary must eventually allow some good process to make progress.
In particular, we can assume without loss of generality that
the scheduling sequence is of the form
\[
A_0,A_1,A_2,\ldots,
\]
where each $A_k$ contains a finite number of events, including
either the first $\operatorname{compute}(i)$ event for good process $i$,
or the delivery of a message from $\operatorname{Out}_{i\rightarrow j}$
to $\operatorname{In}_{i\rightarrow j}$ 
followed later by $\operatorname{compute}(j)$, for some good process $i$.
Each $A_k$ can contain an arbitrary number of $\operatorname{compute}$ events
for bad processes, or the delivery of messages sent by bad processes.

The adversary is computationally unbounded and is aware, at all times, of 
the internal state of all processes.  Thus, cryptography is not helpful, but
randomness potentially is, since the adversary cannot predict the outcome 
of future coin flips.

In this model, the \emph{communication time} or \emph{latency} is defined w.r.t.~a hypothetical execution in which all local computation occurs instantaneously and all messages have latency in $[0,\Delta]$.  The latency of the algorithm is $L$ if all non-corrupt processes finish by time $L\Delta$.  Note that in this hypothetical, $\Delta$ is unknown and cannot influence the execution of the algorithm.

\subsubsection{Reliable Broadcast}\label{sect:reliable-broadcast}

The goal of \ReliableBroadcast{} is to simulate a broadcast channel
using the underlying point-to-point message passing system.
In Byzantine Agreement protocols, each process initiates a series of
$\ReliableBroadcast$s.  Call $m_{p,\ell}$ the $\ell$th message 
broadcast by process $p$.  

\begin{restatable}{theorem}{reliablebroadcastthm}\label{thm:reliable-broadcast}
If a good process $p$ initiates the \ReliableBroadcast{} of $m_{p,\ell}$, then
all good processes $q$ eventually \emph{accept} $m_{p,\ell}$.
Now suppose a bad process $p$ does so and some good $q$ \emph{accepts} $m_{p,\ell}$.
Then all other good $q'$ will eventually accept $m_{p,\ell}$, and
no good $q'$ will accept any other $m'_{p,\ell}\neq m_{p,\ell}$.
Moreover, all good processes accept $m_{p,\ell-1}$ before $m_{p,\ell}$, if $\ell>1$.
\end{restatable}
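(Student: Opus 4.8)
The plan is to realize \ReliableBroadcast{} with Bracha's classical double-echo protocol, tagged with per-sender sequence numbers, and to verify its four guarantees; throughout we use $f<n/3$ (so $n\ge 3f+1$) and the fact that the schedule is fair, so every message sent from a good process to a good process is eventually delivered and processed. Recall the protocol: to broadcast $m$ under tag $(p,\ell)$, process $p$ sends $(\textsf{init},(p,\ell),m)$ to everyone; a good process sends $(\textsf{echo},(p,\ell),m)$ upon receiving that $\textsf{init}$ from $p$; it sends $(\textsf{ready},(p,\ell),m)$ once it has received $(\textsf{echo},(p,\ell),m)$ from more than $(n+f)/2$ distinct processes or $(\textsf{ready},(p,\ell),m)$ from more than $f$ distinct processes; and it \emph{accepts} $m_{p,\ell}$ once it has received $(\textsf{ready},(p,\ell),m)$ from more than $2f$ distinct processes. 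For the FIFO clause we add one restriction: for $\ell>1$, a good process neither sends $(\textsf{echo},(p,\ell),\cdot)$ nor accepts $m_{p,\ell}$ before it has accepted $m_{p,\ell-1}$. The arithmetic we use repeatedly is $n-f>(n+f)/2>f$ and $n-f>2f$ (all equivalent to $n>3f$), together with the fact that any two subsets of $[n]$ of size exceeding $(n+f)/2$ meet in more than $f$ processes, hence in a good one.

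The core is a no-equivocation lemma: for a fixed tag $(p,\ell)$, good processes send $\textsf{ready}$ for at most one value. Consider the first good process to send any $\textsf{ready}$ for $(p,\ell)$; it cannot have been triggered by the amplification rule (which needs more than $f$ prior $\textsf{ready}$s, at least one from a good process, contradicting minimality), so it saw an $\textsf{echo}$-quorum of size exceeding $(n+f)/2$ for its value. The same applies to the first good process to $\textsf{ready}$ a \emph{second} value, and the two $\textsf{echo}$-quorums then share a good process that echoed two distinct values for $(p,\ell)$ --- impossible. The agreement part of the theorem is now immediate: if good $q$ accepts $m_{p,\ell}$ and good $q'$ accepts $m'_{p,\ell}$, each saw more than $2f$, hence more than $f$ good, $\textsf{ready}$ messages, so some good process $\textsf{ready}$ed both values and $m=m'$; this also gives ``no good $q'$ accepts any $m'_{p,\ell}\ne m_{p,\ell}$.''

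For totality, suppose some good process accepts $m_{p,\ell}$ with value $m$. Then at least $f+1$ good processes sent $(\textsf{ready},(p,\ell),m)$; every good process eventually receives these, and since the amplification rule is not gated by the FIFO restriction, every good process sends $(\textsf{ready},(p,\ell),m)$, so at least $n-f>2f$ good $\textsf{ready}$ messages eventually reach every good process. If $p$ is good and broadcasts $m_{p,\ell}$, its $\textsf{init}$ reaches all $\ge n-f>(n+f)/2$ good processes, each of which echoes $m$, so every good process gets an $\textsf{echo}$-quorum and sends $\textsf{ready}$; and no good process ever echoes or $\textsf{ready}$s any $m'\ne m$, because a good process echoes only on $p$'s $\textsf{init}$ (which carries $m$), and, as in the no-equivocation argument, the first good $\textsf{ready}$ for $m'$ would require a good echo of $m'$, of which there are none. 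Together with the FIFO clause this yields full totality, modulo the claim that the FIFO gate eventually clears, settled next.

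Finally, FIFO is a strong induction on $\ell$ with hypothesis: if any good process accepts $m_{p,\ell}$, then every good process eventually accepts $m_{p,\ell}$, and (for $\ell>1$) accepts $m_{p,\ell-1}$ first. The case $\ell=1$ is exactly totality. For $\ell>1$, if a good process accepts $m_{p,\ell}$ then, as above, some good process sent $\textsf{ready}$ for $(p,\ell)$, so (taking the first such) some good process saw an $\textsf{echo}$-quorum for $(p,\ell)$, so some good process echoed $(p,\ell)$, which by the FIFO echo restriction means that process had already accepted $m_{p,\ell-1}$. The induction hypothesis for $\ell-1$ then ensures that every good process eventually accepts $m_{p,\ell-1}$, so the FIFO gate for $m_{p,\ell}$ eventually clears everywhere, and the explicit FIFO accept restriction forces $m_{p,\ell-1}$ to be accepted before $m_{p,\ell}$. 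I expect this last step --- ruling out a deadlock in which a good process waits forever on $m_{p,\ell-1}$ when $p$ is Byzantine --- to be the only delicate point, and it is handled precisely because the echo restriction makes the act of accepting $m_{p,\ell}$ witness a good process having accepted $m_{p,\ell-1}$, which totality then propagates; everything else is quorum-intersection bookkeeping.
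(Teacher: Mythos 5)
Your proof is correct and follows the same Bracha-style quorum-intersection approach as the paper: a no-equivocation lemma via two intersecting echo-quorums, totality via the $f+1$ good ready amplification, and FIFO handled by an induction on $\ell$. The only cosmetic difference is that you gate only the echo and accept steps on having accepted $m_{p,\ell-1}$, whereas the paper's Algorithm~\ref{alg:reliable-broadcast} gates the entire broadcast instance on Line~1 (making the FIFO claim immediate); your version requires, and correctly supplies, the extra step of tracing back through a good echoer to a witness that $m_{p,\ell-1}$ was accepted.
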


The property that $m_{p,\ell}$ is only accepted after $m_{p,\ell-1}$ is accepted is sometimes called FIFO broadcast.  This property is explicitly used in the \IteratedBlackboard{} algorithm
outlined in \autoref{sect:blackboard}.
See Appendix~\ref{sect:proof-reliable-broadcast} 
for a proof of \autoref{thm:reliable-broadcast}.

\begin{algorithm}
    \caption{$\ReliableBroadcast(p,\ell)$}
    \label{alg:reliable-broadcast}
    \begin{algorithmic}[1]
        \State{\textbf{if} $\ell>1$ \textbf{then} \textbf{wait} until $m_{p,\ell-1}$ has been accepted.}
        
        \State{\textbf{if} I am process $p$ \textbf{then} generate $m_{p,\ell}$ and send 
        $(\operatorname{init},m_{p,\ell})$ to all processes.}
        \State{\textbf{wait} until receipt of one $(\operatorname{init},m_{p,\ell})$ 
        message from $p$, or more than $(n+f)/2$ $(\operatorname{echo},m_{p,\ell})$
        messages, or $f+1$ $(\operatorname{ready},m_{p,\ell})$ messages.
        {\par}
        send $(\operatorname{echo}, m_{p,\ell})$ to all processes.}
        \State \textbf{wait} until the receipt of $(n+f)/2$ $(\operatorname{echo},m_{p,\ell})$
        messages or $f+1$ $(\operatorname{ready},m_{p,\ell})$ messages.
        {\par}
        send $(\operatorname{ready},m_{p,\ell})$ to all processes.
        \State \textbf{wait} until receipt of $2f+1$ $(\operatorname{ready},m_{p,\ell})$ messages. 
        {\par}
        \textbf{accept} $m_{p,\ell}$.
    \end{algorithmic}
\end{algorithm}

\subsubsection{Validation and Bracha's Protocol}

Consider a protocol $\Pi$ of the following form.
In each round $r$ , each process reliably 
broadcasts its \emph{state} 
to all processes, waits until it has 
accepted at least $n-f$ 
\emph{validated} messages from round $r$, 
then processes all validated messages, 
changes its state, and advances to round $r+1$.
A good process \emph{validates} 
a round-$r$ state (message) $s_{q,r}$ accepted 
from another process $q$ 
only if 
(i) it has validated the state $s_{q,r-1}$ of $q$ at round $r-1$,
and
(ii) it has accepted $n-f$ messages that, if they \emph{were}
received by a correct $q$, 
would cause it to transition from $s_{q,r-1}$ to $s_{q,r}$.
The key property of validation (introduced by \cite{Bracha1987}) is:

\begin{lemma}
A good process $p$ validates the message of another process $q$
in an admissible execution $\alpha$ of $\Pi$ if and only if there is an execution
$\beta$ of $\Pi$ in which $q$ is a good process and the state of
every other good process (including $p$) is the same in $\alpha$
and $\beta$ (with respect to their validated messages).
\end{lemma}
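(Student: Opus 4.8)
The plan is to prove the two directions separately; the ``$\Leftarrow$'' direction is short, and the ``$\Rightarrow$'' direction---constructing the indistinguishable execution $\beta$---is where the real work lies. Before either direction I would record an auxiliary fact: \emph{in any admissible execution, any message that some good process validates is eventually validated by every good process; in particular every message of a good process is validated by all good processes.} I would prove this by strong induction on the round number $r$ of the message $s_{x,r}$ in question. The base case $r=1$ holds because $s_{x,1}$ is a legal initial state. For the step: if a good process validates $s_{x,r}$ it has accepted $n-f$ messages it has validated that force the transition $s_{x,r-1}\to s_{x,r}$ (and it has validated $s_{x,r-1}$); by the induction hypothesis every good process validates all of those round-$(r-1)$ messages, and by \autoref{thm:reliable-broadcast} every good process eventually accepts them, so every good process validates $s_{x,r}$ via condition (ii). Also, for this lemma to make literal sense across two different executions I would fix ``the same state (with respect to validated messages)'' to mean: each good process has validated the same set of messages (this set is monotone and, in an admissible execution, eventually stable), and hence is in the same induced local state.

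For ``$\Leftarrow$'': suppose such a $\beta$ exists. In $\beta$ the process $q$ is good, so by the auxiliary fact every good process---in particular $p$---validates every message of $q$ in $\beta$. Since by hypothesis $p$ has validated the same set of messages in $\alpha$ as in $\beta$, $p$ validates the message of $q$ in $\alpha$ too. (Contrapositively: if $p$ does not validate $q$ in $\alpha$, then in no execution in which $q$ is good can all the other good processes match their $\alpha$-states.)

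For ``$\Rightarrow$'': suppose $p$ validates the round-$r^\star$ message $s_{q,r^\star}$ of $q$ in $\alpha$ (the argument is identical for any round). Let $V$ be the set of all messages that $p$ has both accepted and validated in $\alpha$, restricted to rounds $\le r^\star$. By the definition of validation, $V$ is ``certificate-closed'': for each $s_{x,r}\in V$ it contains the $n-f$ messages $p$ used to validate it; in particular $V$ contains, for every $r\le r^\star$, a set $M_r$ of $n-f$ messages whose round-ordered delivery to an honest $q$ in state $s_{q,r-1}$ yields exactly $s_{q,r}$, and $M_1$ also certifies that $s_{q,1}$ is a legal initial state with some input bit $b$. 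Now build $\beta$: the adversary corrupts exactly the processes it corrupts in $\alpha$ \emph{except} $q$, which is good with input $b$; it schedules every event not involving $q$ exactly as in $\alpha$; and it schedules $q$ to receive, round by round, precisely the messages of $V$ that do not originate from $q$. I would then show, by induction on $r\le r^\star$, that (a) by certificate-closure of $V$ the good $q$ validates every message of $V$, so it never stalls, and (b) $q$ is in state $s_{q,r}$ at the end of round $r$, hence reliably broadcasts exactly the message $m_{q,r}$ that the bad $q$ caused good processes to accept in $\alpha$ (``$m_{q,r}$'' is well defined because, by \autoref{thm:reliable-broadcast}, even a bad $q$ induces a single FIFO-ordered sequence of accepted messages among the good processes). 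Since $q$ emits the same messages in $\beta$ as in $\alpha$ and every other process's incoming stream is untouched, every good process $x\neq q$ (in particular $p$) validates the same set of messages in $\beta$ as in $\alpha$; and $\beta$ is admissible because every message of $V$ was accepted by the good process $p$ in $\alpha$ and so, by \autoref{thm:reliable-broadcast} and the fact that no process other than $q$ is rescheduled, is generated and deliverable to $q$ in $\beta$.

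The main obstacle will be making the construction of $\beta$ fully rigorous: checking that the schedule imposed on $q$ is realizable (the messages of $V$ not originating from $q$ are actually produced on time in $\beta$), that $q$'s induced view lets it validate $\ge n-f$ messages each round and therefore progresses, and that $q$'s broadcasts in $\beta$ agree message-for-message with the broadcasts attributed to $q$ in $\alpha$. All three hinge on combining the consistency and FIFO guarantees of \autoref{thm:reliable-broadcast} with the recursive, certificate-based structure of validation, and on managing the interplay of the two inductions (the strong induction on rounds behind the auxiliary fact, and the round induction tracking $q$'s state in $\beta$); the bookkeeping is routine but needs care. A minor additional point is the one already flagged: pinning down the comparison of ``states'' across the two executions.
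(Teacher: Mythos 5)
The paper itself does not prove this lemma; it states it and attributes it to Bracha~\cite{Bracha1987}, so there is no in-paper argument to compare against. Taken on its own, your plan is the standard and, I believe, correct one: the certificate-closure of the validated set is exactly the structure that lets you re-execute $q$ as a good process, and the auxiliary ``every validated message is eventually validated by all good processes'' fact (proved by strong induction on rounds, using \autoref{thm:reliable-broadcast}) is the right lemma to dispose of the $\Leftarrow$ direction. This is essentially Bracha's original argument.

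A few places where the bookkeeping is less ``routine'' than you suggest and would need explicit attention. First, validation condition~(ii) only certifies that $s_{q,r}$ is \emph{one possible} outcome of a correct $q$ processing those $n-f$ messages; in $\BrachaAgreement$ the transition on Line~\ref{line:coinflip} is randomized, so in $\beta$ you must also fix $q$'s private coin flips so that it lands in $s_{q,r}$. You implicitly do this (``$q$ is in state $s_{q,r}$ at the end of round $r$''), but it should be said, since it is the reason the quantifier in (ii) is ``would cause it'' rather than ``does cause it.'' Second, switching $q$ from bad to good affects the \emph{low-level} $\ReliableBroadcast$ traffic: a good $q$ will echo and ready every broadcast it sees, which could push some broadcast over its acceptance threshold earlier (or at all) for some good $x$, changing $x$'s validated set. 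You need the scheduler in $\beta$ to withhold $q$'s low-level messages until the snapshot, which is consistent with a \emph{finite prefix} but means $\beta$ as constructed is not itself an admissible (infinite, eventually-delivering) execution. Third, and related: the comparison ``same state with respect to validated messages'' only makes sense at corresponding finite cut points, and your construction only controls $q$ through round $r^\star$; once $q$ continues past $r^\star$ its good behavior will generally diverge from what the bad $q$ did in $\alpha$. So the lemma should be read (and your $\beta$ presented) as a statement about finite prefixes that can then be extended to admissible executions arbitrarily, with the state comparison taken at the cut. With those three points made explicit, the argument goes through; without them, a careful referee would push back on exactly the ``realizability'' step you flagged.

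One smaller remark: your $\Leftarrow$ argument invokes the auxiliary ``eventually validates'' fact inside $\beta$, but the hypothesis only equates validated sets at a fixed cut point of $\alpha$ and $\beta$. To make that direction airtight you want to take the cut in $\beta$ late enough that $p$ has already validated $q$'s round-$r^\star$ message there, and observe that validated sets are monotone so the equality of cuts still forces $p$ to have validated it in $\alpha$. As written, there is a mild tense mismatch between ``eventually'' and ``the state at the cut.''
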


To recap, reliable broadcast prevents the adversary from sending
conflicting messages to different parties (i.e., it is forced to 
participate as if the communication medium were a broadcast channel)
and the validation mechanism forces its internal state transitions
to be consistent with the protocol.  Its remaining power is limited
to (i) substituting deterministic outcomes for coin flips in bad processes,
(ii) dynamic corruption of good processes, and (iii) malicious scheduling.

Bracha's protocol improves the resilience of 
Ben-Or's protocol to the optimum $f<n/3$.
Each process $p$ initially holds a value $v_p\in \{-1,1\}$. 
It repeats the same steps until it \textbf{decides} a value $v\in\{-1,1\}$ (Line~\ref{line:decide-v}).
As we will see, if some process \textbf{decides} $v$, 
all good processes
will \textbf{decide} $v$ in this or the following iteration. 
Thus, good processes
continue to participate in the protocol 
until all other good processes have
executed Line~\ref{line:decide-v}.
Here $\sgn(x) = 1$ if $x\geq 0$ and $-1$ if $x<0$.

\begin{algorithm}[H]
    \caption{\BrachaAgreement() \ \emph{from the perspective of process $p$}}
\begin{algorithmic}[1]
\Require{$v_p\in \{-1,1\}$.}
\Loop
\State
reliably broadcast $v_{p}$ and \textbf{wait} until $n-f$ messages are validated from some processes $S$.
{\par \enspace}
set $v_p := \sgn(\sum_{q\in S} v_q)$. \label{line:Bracha1}

\State
reliably broadcast $v_{p}$ and \textbf{wait} until $n-f$ messages are validated.  
{\par \enspace}
\textbf{if} more than $n/2$ messages have some value $v$ \textbf{then} set $v_p := (\operatorname{dec}, v)$.
\label{line:Bracha2}

\State
reliably broadcast $v_{p}$ and \textbf{wait} until $n-f$ messages 
are validated. 
{\par \enspace}
let $x_p$ be the number of $(\operatorname{dec},v)$ messages validated by $p$.
\label{line:Bracha3}

\If{$x_p \geq 1$}\label{line:firstif}
    \State set $v_p := v$. \label{line:fplus1}
\EndIf

\If{$x_p \geq f+1$}\label{line:secondif}
    \State \textbf{decide} $v$. \label{line:decide-v}
\EndIf

\If{$x_p=0$} \label{line:thirdif}
    \State $v_p := \CoinFlip()$. \label{line:coinflip}
\Comment{Returns value in $\{-1,1\}$.}
\EndIf
\EndLoop
\end{algorithmic}
\end{algorithm}

\paragraph{Correctness.}
Suppose that at the beginning of an iteration, 
there is a set of at least $(n+f+1)/2$ good processes 
who agree on a value $v\in\{-1,1\}$.\footnote{Note that this is always numerically possible 
since $(n+f+1)/2 \leq n-f$ with equality if $f=(n-1)/3$.}
It follows that in Line~\ref{line:Bracha1}, every process
hears from at least $(n+f+1)/2-f > (n-f)/2$ of these good processes,
i.e., a strict majority in any set of $n-f$.
Thus, every good process broadcasts $v$ in Line~\ref{line:Bracha2},
and due to the validation mechanism, any bad process that
wishes to participate in Line~\ref{line:Bracha2} \emph{also}
must broadcast $v$.  Thus, every good process $p$ will eventually
validate $n-f > n/2$ votes for $v$ and set $v_p := (\operatorname{dec},v)$
indicating it is prepared 
to decide $v$ in this iteration. 
By the same reasoning,
every good process $p$ will set 
$x_p := n-f \geq f+1$ and 
\textbf{decide} $v$ in Line~\ref{line:decide-v}.

It is impossible for $p$ to validate two 
messages $(\operatorname{dec},v)$ and $(\operatorname{dec},v')$
in Line~\ref{line:Bracha3} with $v\neq v'$.  To validate
such messages, $p$ would need to receive strictly greater than
$n/2$ ``$v$'' and ``$v'$'' messages in Line~\ref{line:Bracha2},
meaning some process  successfully broadcast two distinct 
messages with
the same timestamp.  By \autoref{thm:reliable-broadcast} this is impossible.

Now suppose that in some iteration $p$ \textbf{decides} $v$ in Line~\ref{line:decide-v}.
This means that $p$ validated $n-f$ messages 
in Line~\ref{line:Bracha3} and set $x_p \geq f+1$.  
Every other good process $q$ must have validated
at least $n-2f$ of the messages that $p$ validated, 
and therefore set $x_q \geq 1$, forcing it to 
set $v_q := v$ in Line~\ref{line:fplus1}.  Thus, at
the beginning of the next iteration $n-f$ good processes 
agree on the value $v$ and all \textbf{decide} 
$v$ (Line~\ref{line:decide-v}) in that iteration.\footnote{Bracha~\cite{Bracha1987} sets the thresholds in Line~\ref{line:firstif} and \ref{line:secondif} to be $f+1$ and $2f+1$.
The idea was to guarantee that if $x_p\geq f+1$ then at least one \emph{good} process sent $p$ a $(\operatorname{dec},v)$ message. However, 
because of the validation mechanism this is not important.  A corrupt process can \emph{try} to send a $(\operatorname{dec},v)$ message but it will not be validated unless $v$ does, in fact, have a strict majority ($>n/2$) of messages sent in Line~\ref{line:Bracha2}.}

The preceding paragraphs establish correctness.
Turning to efficiency, consider any iteration in which
no process \textbf{decides} $v$ in Line~\ref{line:decide-v}.
We can partition the good population into $G_{\mbox{\scriptsize\ref{line:fplus1}}}$ and $G_{\mbox{\scriptsize\ref{line:coinflip}}}$,
depending on whether they execute 
Line~\ref{line:fplus1} (setting $v_p := v$) 
or Line~\ref{line:coinflip}.
If a sufficiently large number of calls to $\CoinFlip()$ 
made by $G_{\mbox{\scriptsize\ref{line:coinflip}}}$-processes returns $v$
(specifically, $(n+f+1)/2 - |G_{\mbox{\scriptsize\ref{line:fplus1}}}|$)
then by the argument above, all processes will \textbf{decide} $v$ 
(Line~\ref{line:decide-v}) in the next iteration.
Call this happy event $\mathcal{E}$.
If $G_{\mbox{\scriptsize\ref{line:fplus1}}}=\emptyset$ 
then both values of $v$ are acceptable,
which just increases the likelihood of $\mathcal{E}$.

Bracha~\cite{Bracha1987} and Ben-Or~\cite{Ben-Or83} 
implement $\CoinFlip$ by each process
privately flipping an independent, unbiased coin.
Thus, for any $f<n/3$, $\Pr(\mathcal{E}) \geq 2^{-(n-f-1)}$
and the expected number of iterations is at most $2^{\Theta(n)}$.
If there were a mechanism to implement $\CoinFlip$ as 
a roughly unbiased \emph{shared} coin
(all processes in $G_{\mbox{\scriptsize\ref{line:coinflip}}}$ see the same
value; see Rabin~\cite{Rabin83} and Toueg~\cite{Toueg84}), then $\Pr(\mathcal{E})$ is constant and we only need $O(1)$ iterations in expectation.
Efficient collective coin-flipping is 
therefore the heart of the 
Byzantine Agreement problem in this model.

\subsection{The Iterated Blackboard Model}\label{sect:blackboard}

King and Saia~\cite{KingS2016} implemented a $\CoinFlip()$ routine
using a \emph{blackboard} primitive, which weakens the power of the
scheduling adversary to give drastically 
different views to different processes.\footnote{For example, in Line~\ref{line:Bracha1} of \BrachaAgreement, the scheduling adversary
can show $p$ \emph{any} $n-f$ messages $S$, and therefore have significant
control over the value of $\sgn(\sum_{q\in S}v_q)$.}
Their blackboard protocol is resilient to $f<n/4$ faults.  
Kimmett~\cite{Kimmett2020} simplified and improved this protocol 
to tolerate $f<n/3$ faults.  
In this section, we describe a useful extension of the 
Kimmett-King-Saia style 
blackboard that \emph{further reduces} the 
kinds of disagreements that good processes can have.

In the original model~\cite{KingS2016,Kimmett2020}, a \emph{blackboard} is an $m \times n$ matrix $\BB$, initially all blank ($\perp$), 
such that column $\BB(\cdot,i)$ is only written to by process $i$.
Via reliable broadcasts, process $i$ attempts to sequentially write non-$\perp$ values to $\BB(r,i)$, $r\in [m]$.  The scheduling
power of the adversary allows it to control the rate at which 
different processes write values.  Because there could be up to $f$ crash-faults, no process can count on $\BB$ containing more than $n-f$
\emph{complete} columns (those $i$ for which $\BB(m,i)\neq \perp$).
The final $\BB$-matrix may therefore contain up to $f$ \emph{partial} columns.

The main guarantee of~\cite{KingS2016,Kimmett2020} is that every 
process $p$ has a mostly accurate \emph{view} $\BB^{(p)}$ 
that agrees with 
the ``true'' blackboard $\BB$ in all but at most $f$ locations.
In particular, the last non-$\perp$ entry of each partial column
in $\BB$ may still be $\perp$ in $\BB^{(p)}$.
If we were to generate a sequence of blackboards with \cite{KingS2016,Kimmett2020}, the views from two processes
could differ by $f$ locations in \emph{each} blackboard.

\medskip

An \emph{iterated blackboard} is an endless series 
$\BB = (\BB_1,\BB_2,\ldots)$ of $m \times n$ blackboards,
such that process $i$ only attempts to write its column 
in $\BB_t$ once it completes participation in $\BB_{t-1}$. 
After $p$ regards $\BB_{t}$ as complete, $p$ obtains a view
of the full history
$\BB^{(p,t)} = (\BB_1^{(p,t)},\ldots, \BB_t^{(p,t)})$
that differs from $(\BB_1,\ldots,\BB_t)$ in $f$ locations 
\emph{in total}.  As a consequence, $\BB^{(p,t-1)}$ may
not be identical to the first $t-1$ matrices of $\BB^{(p,t)}$, 
i.e., $p$ could record ``retroactive'' updates to previous
matrices while it is actively participating in the construction 
of $\BB_t$.

The following theorem is proved in Appendix~\ref{sect:appendix-blackboard}.

\begin{restatable}{theorem}{blackboardthm}
\label{thm:blackboard}
There is a protocol for $n$ processes to generate an iterated 
blackboard $\BB$ that is resilient to $f<n/3$ Byzantine failures.
For $t \geq 1$, the following properties hold:
\begin{enumerate} 
    \item \label{thm:blackboard-full-partial} 
    Upon completion of the matrix $\BB_t$, each column consists of a 
    prefix of non-$\bot$ values and a suffix of all-$\bot$ values. 
    Let $\last(i) = (t',r)$ be the position of the last value written by
    process $i$, i.e., $\BB_{t'}(r,i)\neq \bot$ and if $t'<t$ then $i$ has not 
    written to any cells of $\BB_t$.
    When $\BB_{t}$ is complete, 
    it has at least $n-f$ full columns and up to
    $f$ partial columns.  
    \item \label{thm:blackboard-ambiguous} 
    Once $\BB_{t}$ is complete, each process $p$ forms a history 
    $\BB^{(p,t)} = (\BB_1^{(p,t)},\ldots,\BB_t^{(p,t)})$ 
    such that for every $t'\in [t]$, $i\in [n]$, $r\in [m]$,
    \[
    \BB_{t'}^{(p,t)}(r,i) \left\{
    \begin{array}{l@{\hspace{1cm}}l}
    = \BB_{t'}(r,i)       & \mbox{if $\last(i)\neq (t',r)$}\\
    \in \{\BB_{t'}(r,i), \bot\} & \mbox{otherwise}
    \end{array}\right.
    \]

    \item If $q$ writes any non-$\bot$ value to $\BB_{t+1}$,
    then by the time any process $p$ fixes $\BB^{(p,t+1)}$, 
    $p$ will be aware of $q$'s view $\BB^{(q,t)}$ of the history up to blackboard $t$.
\end{enumerate}
\end{restatable}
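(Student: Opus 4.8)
The plan is to exhibit the protocol explicitly and then verify the three properties in turn, using only the two guarantees of \autoref{thm:reliable-broadcast}: FIFO delivery within each sender's stream (message $\ell$ is accepted only after message $\ell-1$), and uniqueness together with agreement (if one good process accepts $m_{p,\ell}$ then all eventually do, and no good process accepts any conflicting $m'_{p,\ell}$). In the protocol each process $i$ uses a single FIFO reliable-broadcast stream: in round $t$ it first broadcasts a \emph{header} $H_i^t$ carrying its current view $\BB^{(i,t-1)}$ of the history $(\BB_1,\dots,\BB_{t-1})$, then broadcasts the $m$ values of its column $\BB_t(1,i),\dots,\BB_t(m,i)$, one per message. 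The ``true'' blackboard $\BB$ is defined cell by cell as the value (if any) good processes accept for the corresponding message; by \autoref{thm:reliable-broadcast} this is well defined, and FIFO forces each column to be a prefix of non-$\bot$ entries. A process $p$ regards $\BB_t$ as \emph{complete} once it has accepted the round-$(t+1)$ header $H_j^{t+1}$ from $n-f$ distinct processes $j$ (equivalently, has watched $n-f$ processes move on); at that instant it freezes $\BB^{(p,t)}$ by setting each cell to the unique non-$\bot$ value appearing in at least one of those $n-f$ headers $H_j^{t+1}=\BB^{(j,t)}$, and to $\bot$ if none does. While $p$ is collecting these headers it is also accepting cells of $\BB_1,\dots,\BB_t$ directly; those late arrivals are the ``retroactive corrections'' mentioned in \autoref{sect:blackboard}.

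Property~1 is immediate: FIFO gives the prefix-of-non-$\bot$ shape of every column; liveness of reliable broadcast (first sentence of \autoref{thm:reliable-broadcast}) guarantees every never-corrupted process eventually has a full column accepted by everyone, so $\BB_t$ has at least $n-f$ full columns, and since at most $f$ processes are ever corrupted there are at most $f$ partial columns, which also makes $\last(i)$ well defined. For Property~2, the easy half is that the reconstruction is never \emph{wrong}: by uniqueness of reliable broadcast no two headers can disagree on a non-$\bot$ cell, and a cell that is $\bot$ in $\BB$ is $\bot$ in every good process's view, so the only possible discrepancies of $\BB^{(p,t)}$ from $\BB$ are cells that are non-$\bot$ in $\BB$ yet $\bot$ in all $n-f$ reconstructing headers.

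The substance of Property~2 is then to show that every such discrepancy sits at a $\last(i)$ position and that there are at most $f$ of them, which I would do by induction on the pair (round $t$, wall-clock time at which a view is frozen), noting that every $\BB^{(j,t)}$ used to build $\BB^{(p,t)}$ was frozen strictly earlier. The count follows because each column contributes at most one ``frontier'' discrepancy and the at-least-$n-f$ full columns contribute none, their last cell being reachable via FIFO from the round-$(t+1)$ headers $p$ used. The crux --- and the step I expect to be the main obstacle --- is showing that every \emph{non-terminal} cell $(t',r,i)$, i.e.\ one that has a successor message in $i$'s stream, is pinned down in at least one of $p$'s $n-f$ reconstructing headers: since $i$ broadcast a later message, any good process far enough along $i$'s stream has by FIFO already accepted $(t',r,i)$, and one wants a quorum-intersection argument (any two $(n-f)$-sets overlap in at least $n-2f$ members, at least $n-3f\ge 1$ of them good since $f<n/3$) together with the completion rule to force such a good process into $p$'s set of $n-f$ headers. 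Making that implication airtight is exactly where a Bracha-style \emph{validation} condition on headers is needed --- a process should validate $H_j^{t+1}$ only when $j$'s claimed view is mutually consistent with enough other claimed views and with the full columns the validator already holds --- and one must separately handle the borderline case of a full column of $\BB_t$ whose owner continues into round $t+1$, so that $(t,m,i)$ is that column's last round-$\le t$ write but not its global $\last$.

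Property~3 then drops out of FIFO and the completion rule: if $q$ writes any non-$\bot$ cell to $\BB_{t+1}$, FIFO forces $q$ to have already broadcast its header $H_q^{t+1}=\BB^{(q,t)}$, which is therefore a non-terminal message of $q$'s stream; and when $p$ fixes $\BB^{(p,t+1)}$ it has completed $\BB_{t+1}$ and hence accepted round-$(t+2)$ headers $\BB^{(j,t+1)}$ from $n-f$ processes, each of which --- by Property~2 applied one level up, with headers tracked under the same guarantee as cells --- records $q$'s non-terminal contribution to $\BB_{t+1}$ and in particular makes $p$ aware of $\BB^{(q,t)}$.
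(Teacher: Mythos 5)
Your protocol differs materially from the one the paper actually analyzes, and the places where it differs are exactly the places where your sketch leaves gaps.

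The paper's \IteratedBlackboard{} protocol has three interlocking mechanisms that your design drops. First, a process does not broadcast cell $\BB_t(r+1,i)$ until it has validated $n-f$ acknowledgements of its write to $\BB_t(r,i)$; second, a process stops acknowledging once it considers $\BB_t$ complete ($n-f$ full columns each with $n-f$ acks); third, upon completion a process broadcasts a small $\last$-vector of positions, and the final view $\BB^{(p,t)}$ is determined by the pointwise maximum of $n-f$ such vectors, with a precise validation rule (no one forwards a $\last^{(q)}_t$ vector until they have accepted every cell it references). The ack mechanism is not optional bookkeeping: it is what guarantees that whenever $i$ is allowed to write cell $(t',r+1)$, a quorum of $n-2f$ \emph{good} processes has already accepted $(t',r)$. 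That $(n-2f)$-quorum is then intersected against the $(n-2f)$-quorum of good processes contributing $\last$-vectors to $p$, and $n>3f$ yields a common good witness. This is precisely the two-quorum argument your sketch says you ``want,'' but without acks there is no second quorum to intersect: reliable broadcast alone only guarantees $f+1$ good participants before a message is accepted, and $f+1$ plus $n-2f$ need not overlap in a good process. The ``Bracha-style validation on headers'' you invoke to close this gap is the missing idea; the paper's $\last$-vector validation rule is the concrete, workable instance of it, and it depends on the ack quorums you have removed.

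Separately, your protocol as stated has a bootstrapping problem: process $p$ freezes $\BB^{(p,t)}$ only after accepting $n-f$ round-$(t+1)$ headers $H_j^{t+1}$, but each $H_j^{t+1}$ is defined to carry $\BB^{(j,t)}$, which $j$ can only produce after $j$ itself has accepted $n-f$ headers $H_{j'}^{t+1}$. Nobody can be first. The paper avoids this by making the round-$(t+1)$ ``header'' (the row-0 write $\BB_{t+1}(0,p)\gets\maxlast^{(p)}_t$) depend only on $\last$-vectors, which in turn are emitted upon an ack-based completion event, not upon seeing other headers. So the dependency chain in the paper is acyclic: acks $\to$ completion $\to$ $\last$-vectors $\to$ fixed view $\to$ row-0 write of the next blackboard. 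You would need a similar acyclic restructuring before your protocol is even well-defined, and in any case the substance of Property~2 --- that discrepancies are confined to the $\last(i)$ frontier and number at most $f$ in total --- is exactly the step you flag as the ``main obstacle,'' and it is the part carried in the paper by Lemmas~\ref{lem:nminusf-acks} and~\ref{lem:bb-disagreement} using the machinery above.
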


\section{Iterated Coin Flipping Games}\label{sect:iterated-coin-flipping}

We begin in \autoref{sect:simplified-coin-flipping}
with a simplified coin-flipping game 
and extend it in \autoref{sect:real-coin-flipping-game} 
to the real coin-flipping game we use to implement 
$\CoinFlip()$ in $\BrachaAgreement$.
In the real coin-flipping 
game we assign \emph{weights} to the processes,
which is a measure of trustworthiness.  
\autoref{sect:weight-update}
explains how the weights are updated 
and \autoref{sect:bounding-error}
bounds numerical inconsistencies in different processors views.

\subsection{A Simplified Game}\label{sect:simplified-coin-flipping}

In this game there are $n$ players partitioned into $n-f$ \emph{good} 
players $G$ and $f=n/(3+\epsilon)$ \emph{bad} players $B$, 
for some small $\epsilon>0$.
The good players are unaware of the partition $(G,B)$.
The game is played up to $T$ times in succession according to the 
following rules.  Let $t\in[T]$ be the current iteration.
\begin{itemize}
    \item The adversary privately picks 
            an \emph{adversarial direction} $\sigma(t)\in\{-1,1\}$.\footnote{In the context of $\BrachaAgreement$, $\sigma$ would be $-v$, where 
            $v$ is the value set by processes executing Line~\ref{line:fplus1}.}
    \item Each good player $i\in G$ picks $X_i(t)\in\{-1,1\}$ uniformly at random.  The bad players see these values then generate their
    values $\{X_i(t)\}_{i\in B}$, each in $\{-1,1\}$, as they like.
    \item If the \emph{outcome} of the coin flip,
    $\sgn(\sum_{i\in [n]} X_i(t))$,  is equal to $\sigma(t)$, the game continues to iteration $t+1$.
\end{itemize}
From the good players' perspective, the nominal goal of this game is
to eventually achieve the outcome 
$\sgn(\sum_{i\in [n]} X_i(t)) \neq \sigma(t)$, but the adversary
can easily foil this goal if $T=\poly(n)$. We consider a secondary goal:
namely to \emph{identify} bad players based solely on the historical data $\{X_i(t)\}_{i,t}$.
This turns out to be a tricky problem, but we can identify a \emph{pair} 
of processes, at least one of which is bad, w.h.p.

\begin{lemma}
\label{lem:correlation-easy}
Suppose the game does not end after $T$ iterations. 
If $T=\tilde{\Theta}((n/\epsilon)^2)$, then 
the pair $(i,j)\in [n]^2$, $i\neq j$, maximizing
\[
\ang{X_{i},X_{j}} = \sum_{t=1}^T X_{i}(t)X_{j}(t)
\]
has $B\cap \{i,j\}\neq\emptyset$.
\end{lemma}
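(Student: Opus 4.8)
The plan is to argue by contradiction: suppose all the "high-correlation" pairs are in fact pairs of good players, and derive that the game must have ended within $T$ iterations with high probability. The key observation is that if the game survives $T$ rounds, then in every round $t$ the adversary steered the outcome $\sgn(\sum_i X_i(t))$ to equal $\sigma(t)$. Since the $n-f$ good players' votes $\sum_{i\in G} X_i(t)$ form a sum of independent $\pm1$ variables, it concentrates around $0$ with fluctuations of order $\sqrt{n}$; the $f$ bad players contribute at most $f$ to the sum, and since $f = n/(3+\epsilon) < n/3$ while the good majority is $n-f > 2n/3$, the bad players alone cannot flip the sign — they need help from an atypically small good-sum. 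Concretely, for the bad players to force the outcome to be $\sigma(t)$ on a round where the good sum has sign $-\sigma(t)$, we need $|\sum_{i\in G} X_i(t)| \le f$. So the event "$|\sum_{i\in G} X_i(t)|\le f$ OR $\sgn(\sum_{i\in G}X_i(t)) = \sigma(t)$" must hold every round. The first disjunct has probability only $O(f/\sqrt{n}) = O(\sqrt{n})$... — that by itself isn't small enough, so the real leverage has to come from correlation.

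First I would set up the pairwise correlation statistics among good players. For a fixed pair $i,j\in G$ of good players, $X_i(t)X_j(t)$ are i.i.d.\ uniform $\pm1$ across rounds (independent because each good player flips independently), so $\ang{X_i,X_j}$ is a sum of $T$ i.i.d.\ Rademacher variables, hence by Hoeffding $|\ang{X_i,X_j}| = O(\sqrt{T\log n})$ with probability $1 - n^{-\omega(1)}$; a union bound over all $\binom{n}{2}$ good pairs keeps this simultaneously. So under the contradiction hypothesis, $\max_{i\ne j}\ang{X_i,X_j} = O(\sqrt{T\log n})$. Next I would show this forces the good-players' empirical behavior to be "too random" to have let the adversary survive. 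The cleanest route: consider the random variable $S_G(t) = \sum_{i\in G} X_i(t)$, so $\sum_{t} S_G(t)^2 = \sum_{i,j\in G}\ang{X_i,X_j} = (n-f)T + \sum_{i\ne j\in G}\ang{X_i,X_j} = (n-f)T + O(n^2\sqrt{T\log n})$. With $T = \tilde\Theta((n/\epsilon)^2)$ the off-diagonal error term is $O(n^2 \cdot (n/\epsilon)\sqrt{\log n}) = \tilde O(n^3/\epsilon)$, which is $o(\epsilon \cdot (n-f)T)$ once $T \gg (n/\epsilon)^2$ with the right polylog factors — so $\sum_t S_G(t)^2 \le (1+o(1))(n-f)T$, meaning the good sums are, on average over rounds, of size $\sqrt{n-f}$, not systematically large. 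I would then combine this second-moment bound with the survival constraint: on every round either $|S_G(t)|\le f$ or $\sgn(S_G(t)) = \sigma(t)$. A counting/averaging argument over the $T$ rounds, using $\sum_t S_G(t)^2 = O(nT)$ to bound how often $|S_G(t)|$ can be large, together with anti-concentration of $S_G(t)$ (a sum of $n-f$ Rademachers satisfies $\Pr[|S_G(t)|\le f \mid \text{history}] = O(f/\sqrt{n})$ when $f = \Theta(n)$... wait, that's $\Theta(1)$, not small) — this is exactly where the subtlety bites.

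The hard part will be turning "good pairs have low correlation" into "the adversary could not have survived," because the naive per-round anti-concentration probability $\Pr[\sgn(S_G(t)) \ne \sigma(t)]$ is only a constant bounded away from $1$ (roughly $1/2$), so over $T$ rounds the adversary survives with probability $2^{-\Theta(T)}$ — which is what we want, but this argument uses only independence across rounds, not the correlation statistic at all. So the genuine content of the lemma must be different: I expect the correlation test is needed precisely to rule out the strategy where bad players \emph{copy} good players (or copy each other) to reduce variance or create bias, and the clean statement we actually prove is: \emph{if} $\max_{i\ne j}\ang{X_i,X_j}$ is achieved by a good pair, then in particular all good-good correlations are $\Theta(\sqrt{T\log n})$-small, hence the good coalition alone behaves like $n-f$ fresh independent coins each round, hence $\Pr[\text{survive } T \text{ rounds}] \le (1 - c)^T = n^{-\omega(1)}$ for the stated $T$, conditioning round-by-round on the adversary's choice of $\sigma(t)$ and all past randomness. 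I would carry this out as: (1) union-bound to get all good-pair correlations small, w.h.p.; (2) on the complementary (high-probability) event, show each round is survived with probability $\le 1 - c$ conditioned on the past, using that $S_G(t)$ is a fresh Rademacher sum symmetric about $0$ so both signs are roughly equally likely regardless of $\sigma(t)$ and the bounded bad contribution; (3) multiply over $T$ rounds to get survival probability $\le (1-c)^T$, which is $o(1)$ — indeed $n^{-\omega(1)}$ — for $T = \tilde\Theta((n/\epsilon)^2)$ (in fact even $T = \omega(\log n)$ would suffice for this crude bound, so the precise exponent $\tilde\Theta((n/\epsilon)^2)$ is presumably dictated by the $l_2$-deviation test elsewhere and stated here for uniformity); (4) conclude that on the event "game survives $T$ rounds," the high-probability event above must have failed, i.e., the max-correlation pair includes a bad player. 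The main obstacle is making step (2) fully rigorous in the adaptive setting — $\sigma(t)$ and the identities $B$ are chosen adversarially with full knowledge of the past, so I need a clean filtration argument showing the good players' round-$t$ randomness is independent of everything the adversary has committed to by the start of round $t$.
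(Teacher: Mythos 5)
Your step (1) (union-bounding good-good correlations to $\tilde O(\sqrt{T})$) matches the paper. But the pivot in steps (2)--(4) to ``show the game cannot survive $T$ rounds'' is fundamentally broken, and you half-noticed why before talking yourself out of it. The per-round survival probability is not $1-c$ for a constant $c>0$; it is $1-\exp(-\Theta(n))$. The game ends in round $t$ only if $-\sigma(t)S_G(t) > f$, i.e.\ the good-player sum must land on the wrong side of zero by more than $f$. Since $|G|=n-f$ gives $S_G(t)$ standard deviation $\Theta(\sqrt{n})$ while $f=n/(3+\epsilon)=\Theta(n)\gg\sqrt{n}$, the event $|S_G(t)|>f$ is exponentially rare, so the adversary survives every round essentially for free. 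Consequently your claimed bound $\Pr[\text{survive }T\text{ rounds}]\le(1-c)^T$ is false, the contradiction never materializes, and the argument collapses regardless of how cleanly you set up the filtration. (Your own parenthetical --- ``$\Pr[|S_G(t)|\le f]=O(f/\sqrt n)$ when $f=\Theta(n)$\dots wait, that's $\Theta(1)$'' --- already flagged this: worse than $\Theta(1)$, it is $1-o(1)$, since the anti-concentration bound $O(a/\sigma)$ is vacuous once $a\gg\sigma$.)

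The paper does not try to show the game ends. Instead it shows that survival has a detectable cost. The key identity is $\sum_{(i\neq j)\in B^2}X_i(t)X_j(t)=\left(\sum_{i\in B}X_i(t)\right)^2-f$, and the key inequality is that on any round where $\sgn(S_G(t))\neq\sigma(t)$ --- which happens with probability $\ge 1/2$ each round by symmetry --- the bad players must inject $\left|\sum_{i\in B}X_i(t)\right|\ge|S_G(t)|$ to flip the sign, so $\left(\sum_{i\in B}X_i(t)\right)^2\ge(\max\{0,-\sigma(t)S_G(t)\})^2$, whose expectation is at least $(n-f)/2$. Summing over $t$ and subtracting the per-round $f$ leaves a positive bad-bad correlation budget of order $T(n-3f)=\Theta(\epsilon n T)$. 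Dividing by the $f(f-1)$ bad pairs gives an average bad-bad correlation of order $\epsilon T/f$, which overwhelms the good-pair ceiling $\tilde O(\sqrt{T})$ precisely when $T=\tilde\Theta((n/\epsilon)^2)$ --- this is exactly where the exponent comes from, not from the $\ell_2$-deviation test as you guessed. In short: your step (1) is needed, but the rest of the proof has to lower-bound the bad pairs' correlation directly, not prove the game terminates. Your early calculation $\sum_t S_G(t)^2\approx(n-f)T$ is pointing in the right direction (it is the ``good players look like fresh coins'' ingredient), but it must be paired with the constraint on $\sum_{i\in B}X_i(t)$, not used to argue termination.
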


\begin{proof}
If $i,j\in G$ are good, by a Chernoff-Hoeffding bound 
(\autoref{thm:hoeffding}, Appendix~\ref{sect:tailbounds})
$\ang{X_i,X_j} \leq \beta = \tilde{O}(\sqrt{T})$ with high probability, 
thus every pair whose inner product exceeds $\beta$ must contain
at least one bad process.
We now argue that there exists an $i^\star,j^\star\in B$ 
such that $\ang{X_{i^\star},X_{j^\star}}$ exceeds $\beta$.
Observe that
\begin{align}
    \sum_{(i\neq j)\in B^2} X_i(t)X_j(t) &= \left(\sum_{i \in B} X_i(t)\right)^2 - \sum_{i \in B} (X_i(t))^2 = \left(\sum_{i \in B} X_i(t)\right)^2 - f.\label{eqn:bad-corr}
\end{align}
Let $S(t)=\sum_{i\in G} X_i(t)$ be the sum of the good processes in iteration $t$.
The bad players force the sign of the sum to be $\sigma(t)$,
i.e., $\sgn(S(t)+\sum_{i \in B} X_i(t))\sigma(t) = 1$. Thus,
\begin{align}
\left(\sum_{i \in B} X_i(t)\right)^2 
    &\geq 
    \begin{cases}
    (S(t))^2            &\textrm{if $\sgn(S(t))\neq \sigma(t)$}\\
    0                   &\textrm{otherwise}
    \end{cases} \nonumber\\
    &= (\max \{0, -\sigma(t)S(t)\})^2.\label{eqn:bad-sum}
\end{align}
Let $Z(t) = (\max \{0, -\sigma(t)S(t)\})^2$. 
By a Chernoff-Hoeffding bound (\autoref{thm:hoeffding}, Appendix~\ref{sect:tailbounds}), 
w.h.p.~$Z(t) \leq \gamma = \tilde{O}(n)$ 
for every $t$. Moreover, since the distribution
of $S(t)$ is symmetric around the origin,
\begin{align}
\E[Z(t)] 
&\geq \frac{1}{2}\E[S(t)^2 \mid -\sigma(t)S(t)\geq 0]
= \frac{1}{2}(n-f).
\intertext{Thus, by linearity of expectation and Chernoff-Hoeffding (\autoref{thm:hoeffding}, Appendix~\ref{sect:tailbounds}), 
we have, w.h.p.,}
\sum_{t=1}^T Z(t) 
&\geq
\frac{1}{2}T(n-f) - 
\gamma \cdot \tilde{O}(\sqrt{T}) 
\;=\;
\frac{1}{2}T(n-f) - \tilde{O}(n\sqrt{T}).\label{eqn:Z-sum}
\end{align}
Combining Eqns.~(\ref{eqn:bad-corr}), (\ref{eqn:bad-sum}), and (\ref{eqn:Z-sum}), 
we have, w.h.p.,
\begin{align}
\sum_{(i\neq j)\in B^2} \ang{X_i,X_j} 
 = \sum_{t\in [T]}\left(\left(\sum_{i \in B} X_i(t)\right)^2 - f\right)
 &\geq \sum_{t\in [T]} (Z(t) - f)\nonumber\\
 &\geq \frac{1}{2}T(n-3f) - \tilde{O}(n\sqrt{T})
 \;=\; \epsilon n T/2 - \tilde{O}(n\sqrt{T}).\label{eqn:total-corr}
\end{align}
We lower bound the average correlation score
within $B$ by dividing
Eqn.~(\ref{eqn:total-corr}) 
by the $f(f-1)$ distinct pairs $i,j\in B^2$.
Using the fact that $f=n/(3+\epsilon)$, we have
\begin{align*}
    \max_{i^*,j^*\in B, i^*\neq j^*} \ang{X_{i^*},X_{j^*}} \geq \frac{1}{f(f-1)}\left(\epsilon nT/2 - \tilde{O}(n\sqrt{T})\right)
    \geq \frac{(3+\epsilon)\epsilon}{2(f-1)} T - \tilde{O}(\sqrt{T}/n)
\end{align*}
Note that the $\tilde{O}(\sqrt{T}/n)$ term 
is negligible and that
$\frac{(3+\epsilon)\epsilon}{2(f-1)} T 
\gg \beta = \tilde{O}(\sqrt{T})$
whenever $T = \tilde{\Theta}((n/\epsilon)^2)$.
\end{proof}

\subsection{The Real Coin-Flipping Game}\label{sect:real-coin-flipping-game}

In this section we describe a protocol for calling
$\CoinFlip()$ iteratively in the context of 
$\BrachaAgreement$.  
It is based on a coin-flipping
game that differs from the simplified game of 
\autoref{sect:simplified-coin-flipping} in several
respects, most of which stem from the power of the
adversarial scheduler to give good players slightly different
views of reality. The differences are as follows.
\begin{itemize}
\item  
The bad players are \emph{not} 
fixed in advance, but may be 
corrupted at various times. 
\item Rather than picking $X_i(t)\in\{-1,1\}$, the 
processes generate an iterated blackboard $\BB$ where each write is a value in $\{-1,1\}$, chosen
uniformly at random \emph{if the writing process is good}. Each blackboard $\BB_t$ has $n$ columns and $m = \Theta(n/\epsilon^2)$ rows.
When $\BB_t$ is complete, let
$X_i(t)$ be the sum of all non-$\bot$ 
values in column $\BB_t(\cdot,i)$.
Every player's view of reality is 
slightly different.
$X_i^{(p)}(t)$ refers to $p$'s most up-to-date view of $X_i(t)$, which is initially 
the sum of column $\BB^{(p,t)}_t(\cdot,i)$.
By Theorem~\ref{thm:blackboard}, 
$\sum_{i\in [n]}\left|X_i^{(p)}(t)-X_i(t)\right|\leq f$ for any $p,t$.
\item 
Each process $i$ has a \emph{weight} $w_i\in [0, 1]$, initially 1, which is non-increasing over time. At all times, the processes maintain \emph{complete agreement} on the weights of the actively participating processes, i.e., those who broadcast coin flips. 
This is accomplished as follows. 
By Theorem~\ref{thm:blackboard}(3), if any process $q$ writes to $\BB_t$,
every other process $p$ learns $\BB^{(q,t-1)}$ by the time they finish computing $\BB_t$.  Based on the history $\BB^{(q,t-1)}$, $p$ can locally compute the weight vector $(w_i^{(q)})_{i\in [n]}$ of $q$. However, due to different views of the history, $(w_i^{(q)})_{i\in [n]}$ may be slightly different than $(w_i^{(p)})_{i\in[n]}$.  
We reconcile this by defining the weight of 
each participating process based on its \emph{own} view of history, i.e.~
\[
w_i = \left\{\begin{array}{ll}
w_i^{(i)}  & \mbox{ if $w_i^{(i)} > w_{\min}$}\\
0           & \mbox{ otherwise.}
\end{array}\right.
\]
In other words, $w_i$ is drawn from the weight vector computed by process $i$. Thus, by Theorem~\ref{thm:blackboard}(3), the weight $w_i$ 
of any process participating in $\BB_t$ is \emph{common knowledge}.  
(It is fine that the weights of non-participating processes remain uncertain.)
For technical reasons, a weight is rounded down to 0 if it is less than a small threshold, $w_{\min}=\sqrt{n\ln n}/T$,
where $T$ is defined below. 
\item 
In iteration $t$, process $p$ sets its own output of $\CoinFlip()$ to be 
$\sgn\left(\sum_{i\in [n]} w_iX_i^{(p)}(t)\right)$.
If this quantity is $-\sigma(t)$ for every good process $p$,
the game ends ``naturally.''  (In the next iteration of Bracha's algorithm, all processes will {\bf decide} on a common value.)
\item 
The iterations are partitioned into $O(f)$ \emph{epochs}, 
each with $T = \Theta(n^2\ln^3 n/\epsilon^2)$ iterations, 
where the goal of each epoch is to 
either end the game naturally
or gather enough statistical evidence to 
reduce the weight of 
some processes before the next epoch begins.
This can be seen as \emph{fractional blacklisting}.

\item Because the scheduling adversary can avoid delivering 
messages from $f$ good processes, 
the resiliency of the protocol drops 
to $f = n/(4+\epsilon)$.  Any positive $\epsilon>0$ suffices,
so we can tolerate $f$ as high as $(n-1)/4$.  In some places
we simplify calculations by assuming $\epsilon \leq 1/2$.
\end{itemize}

Throughout $c$ is an arbitrarily large constant.
All ``with high probability'' bounds hold with probability $1-n^{-\Omega(c)}$. Since each process flips at most $m$ coins in each iteration,
by a Chernoff-Hoeffding bound (\autoref{thm:hoeffding}, Appendix~\ref{sect:tailbounds}) we have
\[
|X_i(t)|\leq \sqrt{cm\ln n} \bydef X_{\max}
\]
holds for all $i,t$, with high probability.
To simplify some arguments we will actually 
enforce this bound deterministically.
If $X_i(t)$ is not in the interval $[-\XMAX,\XMAX]$,
map it to the nearest value of $\pm\XMAX$.

With high probability, 
the weight updates always respect \autoref{inv:weights},
which says that the total weight-reduction of
good players is at most the weight reduction of bad players, up 
to an additive error of $\epsilon^2f/8$.
This error term arises from the fact that we are integrating 
slightly inconsistent weight vectors $(w_i^{(p)})$ for each $p$ to yield $(w_i)$.
With the assumption $\epsilon \le 1/2$, 
\autoref{inv:weights} implies that the total 
weight of good processes is always $\Omega(n)$.

\begin{invariant}\label{inv:weights}
Let $G$ and $B$ denote the set of good and bad processes at any given time. 
Then, 
\[
\displaystyle\sum_{i\in G} (1-w_i) \le \sum_{i\in B} (1-w_i) + \epsilon^2f/8.
\]
\end{invariant}

Whereas pairwise correlations alone suffice to detect bad players in the simplified game, the bad players
can win the real coin-flipping game without being detected by this particular test.  
As we will see, this can only be accomplished if 
$\{X_i(t)\}_{t\in [T]}$ differs significantly 
from a binomial distribution, for some $i\in B$.
Thus, in the real game we
measure individual deviations in the $l_2$-norm in addition to pairwise correlations.
Define $\DEV(i)$ and $\CORR(i,j)$ at the end of a particular 
epoch as below.  The iterations of the epoch are indexed by 
$t\in [T]$ and throughout the epoch the weights $\{w_i\}$ 
are unchanging.
\begin{align*}
    \DEV(i) &= \sum_{t\in [T]} (w_i X_i(t))^2,\\
    \CORR(i,j) &= \sum_{t\in [T]} w_i w_j X_i(t) X_j(t).
\end{align*}
Naturally each process $p$ estimates these quantities using
its view of the historical record; let them be $\PDEV(i)$ and $\PCORR(i,j)$.

The Gap Lemma says that if we set the deviation 
and correlation thresholds $(\alpha_T, \beta_T)$ 
properly, no good player will exceed its deviation
budget, no pairs of good players will exceed their
correlation budget, but some bad player or pair
involving a bad player will be detected by one of these tests.  One subtle point to keep in mind in this section
is that random variables that depend on the coins flipped by good players can still
be heavily manipulated by the scheduling power of the adversary.\footnote{For example, by Doob's optional stopping theorem for martingales,
it is true that $\E[X_i(t)]=0$, 
but not true that the distribution of $X_i(t)$
is symmetric around 0, or that 
it is close to binomial, 
or that we can say anything about $X_i(t)$ after 
conditioning on some 
natural event, 
e.g., that it was derived
from summing the values in a \emph{full} column of $\BB_t(\cdot,i)$.}
See~\cite[Ch.~6]{Melnyk20} for further discussion 
of this issue.


\begin{lemma}[The Gap Lemma]\label{lem:gap-lemma}
Consider any epoch in which the game does not end,
and let $\{w_i\}_{i \in [n]}$ be process weights.
Let $G$ and $B$ be the good and bad processes at the end
of the epoch. 
With high probability, 
\begin{enumerate}
    \item 
    Every good $i\in G$ has $\DEV(i) \le w_i^2 \alpha_T$, where $\alpha_T = m(T + \sqrt{T(c\ln n)^3})$. \label{lem:gap-lemma-good-dev}
    \item 
    \label{lem:gap-lemma-good-corr}
    Every pair $i,j\in G$ has $\CORR(i, j) \le w_iw_j \beta_T$, where $\beta_T = m\sqrt{T(c\ln n)^3}$.
    \item If the weights satisfy Invariant~\ref{inv:weights} and no processes were added to $B$ in this epoch, then 

    \[
    \sum_{i\in B} \max\{0,  \DEV(i) - w_i^2\alpha_T\} + \sum_{(i\neq j)\in B^2} \max\{0, \CORR(i, j) - w_iw_j\beta_T\} \ge \frac{\epsilon}{16} f\alpha_T.
    \]
\end{enumerate}
\end{lemma}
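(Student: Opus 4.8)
The plan is to prove the three parts of the Gap Lemma by, respectively, (1) a direct concentration argument for a single good process, (2) a concentration argument for a pair of good processes, and (3) the harder part — showing that if the bad players are *not* flagged by either test, they cannot have forced the game to survive an entire epoch. Parts 1 and 2 follow the template already set by the simplified game (Lemma~\ref{lem:correlation-easy}): for a fixed good $i$, the weighted coin flips $w_iX_i(t)$ are, across iterations, a sum of bounded independent-enough increments with $\E[(w_iX_i(t))^2]\le w_i^2 m$, so $\DEV(i)=\sum_t (w_iX_i(t))^2$ concentrates around at most $w_i^2 mT$ with deviation $\tilde O(w_i^2 m\sqrt T)$, which is exactly the slack built into $\alpha_T = m(T+\sqrt{T(c\ln n)^3})$; a union bound over $i$ (and over the at most $\poly(n)$ epochs) gives part 1 with high probability. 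For part 2, for a fixed good pair $i,j$ the products $X_i(t)X_j(t)$ are mean-zero and bounded by $X_{\max}^2=O(m\ln n)$, so a Chernoff–Hoeffding bound gives $\CORR(i,j)\le w_iw_j\cdot m\sqrt{T(c\ln n)^3}=w_iw_j\beta_T$ with high probability, again with a union bound over pairs and epochs. The one subtlety to state carefully is that a good process's coins are independent and unbiased *even conditioned on the adversary's scheduling and corruptions so far*, because the adversary must fix its choices before seeing a good process's current write; this is the standard martingale/optional-stopping caveat flagged in the footnote, and it is enough to justify applying \autoref{thm:hoeffding}.

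**The heart of the proof is part 3.** The plan is an averaging argument in the spirit of Lemma~\ref{lem:correlation-easy}, but now charging the adversary's ``work'' against a budget of size $\Theta(\epsilon f\alpha_T)$. Let $S(t)=\sum_{i\in G} w_iX_i(t)$ be the weighted good sum and $A(t)=\sum_{i\in B} w_iX_i(t)$ the weighted bad sum. Since the game does not end, for (at least one good player, hence by the blackboard consistency for essentially all) we have $\sgn(S(t)+A(t))=\sigma(t)$ in every iteration $t\in[T]$, which forces $A(t)^2\ge (\max\{0,-\sigma(t)S(t)\})^2=:Z(t)$ exactly as in Eqn.~(\ref{eqn:bad-sum}). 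Because \autoref{inv:weights} keeps the total good weight $\Omega(n)$ and the good coins are symmetric, $\E[S(t)^2]=\sum_{i\in G}w_i^2\,\E[X_i(t)^2]\ge m\cdot\Omega(n)$ — here I would be careful that $\E[X_i(t)^2]$ is at least $\Theta(m)$, which holds because a good column has $m-O(\cdot)$ truly random $\pm1$ entries — so $\E[Z(t)]\ge \tfrac12 m\,\Omega(n)$, and by Chernoff–Hoeffding on the bounded quantities $Z(t)\in[0,O(mn\ln n)]$, with high probability $\sum_t Z(t)\ge \tfrac12 mT\cdot\Omega(n) - \tilde O(mn\sqrt T)=\Omega(mnT)$ for a suitable constant (shrinking to something like $\ge c' m T(n-f)$ after being conservative). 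On the other hand, expanding $A(t)^2$,
\[
A(t)^2=\sum_{i\in B} w_i^2 X_i(t)^2 + \sum_{(i\neq j)\in B^2} w_iw_jX_i(t)X_j(t),
\]
so summing over $t$, $\sum_{i\in B}\DEV(i)+\sum_{(i\neq j)\in B^2}\CORR(i,j)=\sum_t A(t)^2\ge \sum_t Z(t)$. Now I subtract the ``allowed'' parts: the good-style bound would have given $\DEV(i)\le w_i^2\alpha_T$ and $\CORR(i,j)\le w_iw_j\beta_T$ for each bad index, so
\[
\sum_{i\in B}\max\{0,\DEV(i)-w_i^2\alpha_T\} + \sum_{(i\neq j)\in B^2}\max\{0,\CORR(i,j)-w_iw_j\beta_T\}
\;\ge\; \sum_t A(t)^2 \;-\; \sum_{i\in B} w_i^2\alpha_T \;-\; \sum_{(i\neq j)\in B^2} w_iw_j\beta_T.
\]
Using $w_i\le 1$, $|B|=f$, so $\sum_{i\in B}w_i^2\alpha_T\le f\alpha_T$ and $\sum_{(i\neq j)\in B^2}w_iw_j\beta_T\le f^2\beta_T$. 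Since $\alpha_T=\Theta(mT)$ and $\beta_T=\tilde\Theta(m\sqrt T)$, we have $f^2\beta_T = \tilde O(f^2 m\sqrt T) = o(f\alpha_T)$ once $T=\tilde\Omega(f^2)$ (which holds: $T=\Theta(n^2\ln^3 n/\epsilon^2)\gg f^2$ up to logs — this is exactly why $T$ was chosen this size). Thus the right-hand side is at least $\Omega(mnT) - f\alpha_T - o(f\alpha_T)$. Plugging $\alpha_T=\Theta(mT)$ and $n-3f=\epsilon f/(1+\ldots)$ à la the $\epsilon nT$-term in Eqn.~(\ref{eqn:total-corr}), the leading term $\tfrac12 mT(n-f)-\ldots$ minus $f\alpha_T\approx mTf$ leaves a surplus of order $\epsilon\cdot mTf = \Theta(\epsilon f\alpha_T)$, and tracking constants conservatively yields the claimed $\ge \tfrac{\epsilon}{16} f\alpha_T$.

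**The main obstacles** are two bookkeeping issues rather than deep ones. First, the blackboard views are not exactly consistent: each process $p$ works with $X_i^{(p)}(t)$, and $\DEV,\CORR$ as defined use the ``true'' $X_i(t)$, while the tests are actually run on $\PDEV,\PCORR$. I would absorb this by Theorem~\ref{thm:blackboard}(2), which bounds $\sum_i|X_i^{(p)}(t)-X_i(t)|\le f$ per iteration, so each of $\DEV$ and $\CORR$ moves by at most $O(f X_{\max}\cdot \text{(per-term perturbation)})$ summed appropriately — a lower-order error on the scale of $f\alpha_T$, which is why the slack in $\alpha_T,\beta_T$ and the $\epsilon/16$ (rather than, say, $\epsilon/2$) constant are there; likewise the $\epsilon^2 f/8$ fudge in \autoref{inv:weights} is exactly the error from integrating inconsistent weight vectors, and it is small compared to the $\Omega(n)$ good weight. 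Second, the symmetry-of-$S(t)$ step needs care: $S(t)$ is a sum over $i\in G$ of $w_i$ times a symmetric variable, but the weights $w_i$ are fixed throughout the epoch and the good coins in iteration $t$ are independent of everything the adversary has done, so $S(t)$ is symmetric around $0$ *conditioned on the weights and the past*, which is all we need for $\E[Z(t)]\ge\tfrac12\E[S(t)^2\mid \text{past}]$; I would phrase the Chernoff step over the whole epoch conditionally to make this airtight. Everything else is the arithmetic of comparing $\alpha_T=\Theta(mT)$, $\beta_T=\tilde\Theta(m\sqrt T)$, $f=n/(4+\epsilon)$, and $T=\Theta(n^2\ln^3 n/\epsilon^2)$, which I would not grind through here.
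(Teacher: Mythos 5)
Your overall architecture is right and matches the paper: Parts 1 and 2 via concentration for good processes, and Part 3 via the decomposition $\sum_t (\sum_{i\in B} w_iX_i(t))^2 = \sum_B \DEV + \sum_{B^2}\CORR$, a lower bound on $\sum_t Z(t)$, and a budget subtraction. Two points for Parts 1/2: although you flag the adaptive-scheduling subtlety, you conclude it ``is enough to justify applying \autoref{thm:hoeffding}.'' That's not quite right---Chernoff--Hoeffding requires full independence, which fails here; the correct instrument is Azuma's inequality applied to a (super)martingale $S_t = S_{t-1} + (X_i(t))^2 - m$, which is what the paper does. The same substitution is needed wherever you invoke Hoeffding across iterations in Part 3 (the paper uses a submartingale for $\sum_t \max\{0,-\sigma(t)S_G(t)-f\}^2$).

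The serious gap is in your lower bound on $\E[S(t)^2]$ in Part~3. You write that $\E[X_i(t)^2] \ge \Theta(m)$ for each good $i$ ``because a good column has $m-O(\cdot)$ truly random $\pm 1$ entries,'' and then estimate $\sum_t Z(t) \gtrsim \tfrac12 mT(n-f)$. This is false: the blackboard is declared complete after $n-f$ full columns, and the adversary's scheduling can keep up to $f$ of the \emph{good} columns entirely blank. So for those good processes $X_i(t)$ can be identically $0$ with $\E[X_i(t)^2]=0$. The adversary is only forced to admit $m(n-2f)$ good coin flips per iteration, and the paper's Lemma~\ref{lem:real-blackboard-min-obj-inst} shows (via a martingale stopping argument) that the adversary minimizes $\E[Z_0]$ by letting precisely the $n-2f$ lowest-weight good players each flip $m$ coins, yielding $\E[S_0^2] = m\sum_{i=1}^{n-2f}w_i^2 \gtrsim m(n-2f)$, not $m(n-f)$. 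This constant is not a bookkeeping nuisance---it is the entire reason the resiliency threshold is $f < n/4$ rather than $f < n/3$. After subtracting the bad players' budget $\sum_{i\in B}w_i^2\alpha_T \approx fmT$, your $\tfrac12(n-f)$ leaves surplus $\propto (n-3f)$, while the correct $\tfrac12(n-2f)$ leaves surplus $\propto (n-4f) = \epsilon f$. Your estimate would (wrongly) suggest the protocol tolerates $f$ up to $n/3$. Relatedly, you drop the $-f$ slack ($\max\{0,-\sigma(t)S_G(t)-f\}$ vs.\ $\max\{0,-\sigma(t)S_G(t)\}$) and promise to recover it later as lower-order; that step does work out (it contributes a $2f\E[|S_0|]=O(\epsilon)$ fraction of the main term, absorbed by the $\epsilon f/16$ correction), but it requires the computation to make it believable. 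Finally, you do not track the parameter $\rho$ (the fractional weight already lost by $B$), and the paper's two-case analysis on $\rho/2$ vs.\ $\epsilon/8$ is what makes the final constant $\epsilon/16$ rather than something that degrades as the adversary strategically sacrifices some bad players' weight; this, together with \autoref{lem:weight-bound}'s bounds $\sum_B w_i^2\le (1-\rho)f$ and $\sum_{B^2}w_iw_j\le (1-\rho)^2f^2$, is necessary rather than optional bookkeeping.
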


\begin{proof}[Proof of The Gap Lemma, Parts 1 and 2]
\underline{Part 1.}
Fix a good process $i \in G$ and $t \in [T]$. 
For $r \in [m]$, let $\delta_r \in\{-1,0,1\}$ be 
the outcome of its $r$th 
coin-flip, being 0 if the adversary never lets it flip $r$ coins in iteration $t$.
Then for any $r<s$, $\E[\delta_r\delta_s] = 0$. This clearly holds when $\delta_s=0$,
and if the adversary lets the $s$th flip occur, $\E[\delta_r\delta_s \mid \delta_s\neq 0, \delta_r]=0$ since $\delta_s\in\{-1,1\}$ is uniform and independent of $\delta_r$.
Therefore, $\E[(X_i(t))^2] = \E[(\sum_{r=1}^m \delta_r)^2] = \sum_{r=1}^m \E[\delta_r^2] + \sum_{r \neq s} \E[\delta_r\delta_s] = \sum_{r=1}^m \E[\delta_r^2] \leq m$.

Now consider the sequence of random variables $(S_t)_{t\in [0,T]}$
where $S_0 = 0$ and $S_t = S_{t-1} + (X_i(t))^2 - m$. 
Since $\E[S_t \mid S_{t-1},\dots,S_0] \leq S_{t-1}$, 
$(S_t)$ is a supermartingale. 
For all $t \in [T]$ we guarantee $|X_i(t)| \leq \XMAX$,
so $|S_t - S_{t-1}| = |(X_i(t))^2 - m| \leq \XMAX^2$. 
Hence, by Azuma's inequality (\autoref{thm:azuma}, Appendix~\ref{sect:tailbounds}), $S_T \leq \XMAX^2\sqrt{T(c\ln n)}$ with probability $1-\exp\{-(\XMAX^2\sqrt{T(c\ln n)})^2/2T\XMAX^2\}=1-n^{-\Omega(c)}$. 
Therefore, with high probability, for all $i\in[n]$,
\[
\DEV(i) = \sum_{t=1}^T (w_i X_i(t))^2 = w_i^2(S_T + Tm) \leq 
w_i^2(Tm + \XMAX^2\sqrt{T(c\ln n)})
=
w_i^2 m\left(T + \sqrt{T(c\ln n)^3}\right)
=
w_i^2 \cdot\alpha_T.
\]
\underline{Part 2.}
Fix a $t\in[T]$ and let $\delta_{i,r} \in\{-1,0,1\}$ be 
the outcome of the $r$th coin-flip of $i$ in iteration $t$.  
By the same argument as above,
$\E[X_i(t)X_j(t)] = \E[(\sum_r \delta_{i,r})(\sum_s \delta_{j,s})] = \sum_{r,s} \E[\delta_{i,r}\delta_{j,s}] = 0$. 
Now consider the sequence $(S_t)_{t\in[0,T]}$ where 
$S_0 = 0$ and $S_t = S_{t-1} + X_i(t)X_j(t)$. 
It follows that 
$\E[S_t \mid S_{t-1},\dots,S_0] = S_{t-1}$,
so $(S_t)$ is a martingale. 
By assumption, for all $t$, 
both $|X_i(t)|,|X_j(t)| \leq \XMAX$. 
So, $|S_t - S_{t-1}| = |X_i(t)||X_j(t)| \leq \XMAX^2$. By Azuma's inequality (\autoref{thm:azuma}, Appendix~\ref{sect:tailbounds}), 
$S_T \leq \XMAX^2\sqrt{Tc\ln n}$ 
with probability $1-n^{-\Omega(c)}$.
Therefore, with high probability, for all $i,j$,
\[
\CORR(i,j) 
= \sum_{t=1}^T w_iw_jX_i(t)X_j(t) 
\leq w_iw_j \XMAX^2 \sqrt{Tc\ln n}
= w_iw_j\cdot m\sqrt{T(c\ln n)^3}
= w_iw_j\cdot \beta_T. \qedhere
\]
\end{proof}

Part 3 of the Gap Lemma is proved in 
Lemmas~\ref{lem:weight-bound}--\ref{lem:bad-process-corr-dev}.  
By Invariant~\ref{inv:weights}, the total 
weight loss of the good players is at most
the weight loss of the bad players plus $\epsilon^2 f/8$. 
Define $\rho$ to be the relative weight loss of the bad players:
\[
\mbox{$\rho\geq 0$ is such that } \sum_{i\in B} w_i = (1-\rho)f.
\]
Thus, at this moment $\sum_{i\in G} (1-w_i) \leq \rho f + \epsilon^2 f/8$.
Remember that the scheduling adversary can
allow the protocol to progress while neglecting
to schedule up to $f$ good players.  Thus,
in Lemma~\ref{lem:weight-bound} we consider
an arbitrary set $\hat{G}\subset G$ of $n-2f$ good players.

\begin{lemma} 
\label{lem:weight-bound}
If \autoref{inv:weights} holds then
\begin{enumerate}
    \item \label{lem:weight-bound-good} For any $\hat{G} \subseteq G$ with $|\hat{G}| = n-2f$, $\sum_{i \in \hat{G}} w_i^2 \geq (1-\max\{\rho/2, \epsilon/8\})^2(n-2f)$. 
    \item \label{lem:weight-bound-bad} $\sum_{(i\neq j)\in B^2} w_iw_j \leq (1-\rho)^2f^2$ and $(1-\rho)^2f \leq \sum_{i \in B} w_i^2 \leq (1-\rho)f$.
\end{enumerate}

\end{lemma}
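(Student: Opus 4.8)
We must establish two parts about weighted sums of squares under Invariant~\ref{inv:weights}. Recall the setup: $\rho \ge 0$ satisfies $\sum_{i \in B} w_i = (1-\rho)f$, and Invariant~\ref{inv:weights} gives $\sum_{i \in G}(1-w_i) \le \rho f + \epsilon^2 f/8$. The plan is to exploit convexity (Jensen) and the fact that weights lie in $[0,1]$ throughout.

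\textbf{Part 1 (good players).} Fix $\hat G \subseteq G$ with $|\hat G| = n - 2f$. I would first bound the total weight deficit restricted to $\hat G$: since all $w_i \le 1$, $\sum_{i \in \hat G}(1-w_i) \le \sum_{i \in G}(1-w_i) \le \rho f + \epsilon^2 f/8$. Write $\mu = \frac{1}{n-2f}\sum_{i \in \hat G} w_i$ for the average weight on $\hat G$, so $1 - \mu \le \frac{\rho f + \epsilon^2 f/8}{n-2f}$. Since $x \mapsto x^2$ is convex, $\sum_{i \in \hat G} w_i^2 \ge (n-2f)\mu^2$, so it suffices to show $\mu \ge 1 - \max\{\rho/2, \epsilon/8\}$, i.e.\ $\frac{\rho f + \epsilon^2 f/8}{n-2f} \le \max\{\rho/2, \epsilon/8\}$. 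Here I use $f = n/(4+\epsilon)$, which gives $n - 2f = \frac{(2+\epsilon)n}{4+\epsilon} = (2+\epsilon)f$, so the left side equals $\frac{\rho + \epsilon^2/8}{2+\epsilon}$. Now split into two cases: if $\rho \ge \epsilon/4$ then $\epsilon^2/8 \le \rho\epsilon/8 \le \rho/4 \cdot (\text{something} \le 1)$... more carefully, $\epsilon^2/8 \le \epsilon\rho/2$ when $\epsilon \le 4\rho$, and then $\frac{\rho + \epsilon^2/8}{2+\epsilon} \le \frac{\rho(1+\epsilon/2)}{2+\epsilon} = \rho/2$; if instead $\rho < \epsilon/4$ then $\rho + \epsilon^2/8 < \epsilon/4 + \epsilon^2/8 \le \epsilon(2+\epsilon)/8$ (using $\epsilon \le 1/2$ so the slack is comfortable), giving $\frac{\rho+\epsilon^2/8}{2+\epsilon} \le \epsilon/8$. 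Either way the bound $\le \max\{\rho/2,\epsilon/8\}$ holds. Then $\sum_{i\in\hat G} w_i^2 \ge (n-2f)(1-\max\{\rho/2,\epsilon/8\})^2$.

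\textbf{Part 2 (bad players).} For the first inequality, $\sum_{(i \ne j) \in B^2} w_i w_j = \big(\sum_{i \in B} w_i\big)^2 - \sum_{i \in B} w_i^2 \le \big(\sum_{i\in B} w_i\big)^2 = (1-\rho)^2 f^2$, dropping the nonnegative term $\sum_{i\in B} w_i^2$. For the sandwich on $\sum_{i \in B} w_i^2$: the upper bound uses $w_i \le 1$, so $\sum_{i\in B} w_i^2 \le \sum_{i\in B} w_i = (1-\rho)f$; the lower bound uses convexity (power-mean / Cauchy--Schwarz) across the $f$ bad players, $\sum_{i\in B} w_i^2 \ge \frac{1}{f}\big(\sum_{i\in B} w_i\big)^2 = \frac{(1-\rho)^2 f^2}{f} = (1-\rho)^2 f$.

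\textbf{Main obstacle.} The only genuinely delicate point is Part 1's case analysis verifying $\frac{\rho + \epsilon^2/8}{2+\epsilon} \le \max\{\rho/2, \epsilon/8\}$; everything else is a one-line application of $w_i \le 1$ or Jensen/Cauchy--Schwarz. I would double-check the constant slack carefully using $\epsilon \le 1/2$ and $f = n/(4+\epsilon)$ (equivalently $n = (4+\epsilon)f$, $n - 2f = (2+\epsilon)f$), since the whole downstream argument's $\epsilon/16$ factor in Part 3 of the Gap Lemma depends on getting this threshold exactly.
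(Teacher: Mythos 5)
Your proof is correct and follows essentially the same route as the paper's: bound $\sum_{i\in\hat G}(1-w_i)$ via Invariant~\ref{inv:weights}, use Jensen's inequality for the lower bound on the sum of squares and $w_i^2\le w_i$ for the upper bound, and expand $\big(\sum_{i\in B}w_i\big)^2$ for the pairwise term. The only addition is that you spell out the two-case verification of $\frac{\rho+\epsilon^2/8}{2+\epsilon}\le\max\{\rho/2,\epsilon/8\}$ (split at $\rho=\epsilon/4$), which the paper asserts in one line; your case analysis is correct, and in fact the second case needs no appeal to $\epsilon\le 1/2$ since $\epsilon/4+\epsilon^2/8=\epsilon(2+\epsilon)/8$ is an identity.
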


\begin{proof}
We first claim that, for any real numbers $\hat{w}_1,\dots,\hat{w}_k \in [0,1]$, if $\sum_{i=1}^k \hat{w}_i = (1-\hat{\rho})k$ for some $\hat{\rho} \in [0,1]$, then $(1-\hat{\rho})^2 k \leq \sum_{i=1}^k \hat{w}_i^2 \leq (1-\hat{\rho})k$. 
The lower bound follows from 
Jensen's inequality (\autoref{thm:jensen}) 
and is achieved when all weights are equal.
The upper bound follows from the fact that $\hat{w}_i^2 \leq \hat{w}_i$.

\underline{Part 1.}
Note that
\begin{align*}
\textstyle 
\sum_{i \in \hat{G}} w_i &= 
\textstyle 
n-2f - \sum_{i \in \hat{G}} (1-w_i) 
\geq n - 2f-(\rho+\epsilon^2/8) f & \mbox{(\autoref{inv:weights})}\\
&= (1-\textstyle\frac{\rho+ \epsilon^2/8}{2+\epsilon})(n-2f) \\
&\ge (1 - \max\{\rho/2, \epsilon/8\})(n-2f)
\end{align*}
Thus the relative weight loss from $\hat{G}$'s point of view is less than $\hat{\rho}=\max\{\rho/2, \epsilon/8\}$, and 
from the first claim of the proof, 
$\sum_{i \in \hat{G}} w_i^2 \geq (1-\max\{\rho/2, \epsilon/8\})^2(n-2f)$. 

\underline{Part 2.} From the first claim of the proof
with $\hat{\rho}=\rho$, we have 
$(1-\rho)^2 f \leq \sum_{i \in B} w_i^2 \leq (1-\rho)f$.  For the other claim, 
\[
\textstyle
\sum_{(i \neq j) \in B^2} w_iw_j = (\sum_{i \in B} w_i)^2 - \sum_{i \in B} w_i^2 \leq (1-\rho)^2f^2 - (1-\rho)^2f \leq (1-\rho)^2f^2. \qedhere
\]
\end{proof}

Let us recall a few key facts about the game.
Before $\BB_t$ is constructed the adversary
commits to its desired direction $\sigma(t)$.
The $m\times n$ matrix 
$\BB_t$ is complete when it has $n-f$ full columns,
therefore the adversary \emph{must} allow
at least $m(n-2f)$ coins to be flipped by
good players.  We define $S_G(t)$ to be the 
weighted sum of all the coin flips flipped by good players.
I.e., if the set $G$ is stable throughout iteration $t$ then
\[
S_G(t) = \sum_{i\in G} w_i X_i(t).
\]
If a process $i$ were corrupted in the middle of iteration $t$ then
only a prefix of its coin flips would contribute to $S_G(t)$.
If $\sgn(S_G(t)) = \sigma(t)$
then the adversary is happy.
For example, it can just let the sum of the coin flips controlled by corrupted
players sum up to zero, which does not look particularly suspicious.  
However, if $\sgn(S_G(t))\neq\sigma(t)$
then the adversary must counteract the good coin flips.  
Due to disagreements in the state of the blackboard 
(see \autoref{lem:bad-process-large-coin}), players can disagree
about the sum of blackboard entries by up to 
$f$, so the adversary may only need to 
counteract the good players by $-\sigma(t)S_G(t)-f$.
\autoref{lem:real-blackboard-min-obj-inst}
lower bounds the second moment of this objective.

\begin{lemma}
\label{lem:real-blackboard-min-obj-inst}
For all $t \in [T]$,
\begin{align*}
\E[(\max\{0,-\sigma(t)S_G(t)-f\})^2] &\ge m\left((1-\max\{\rho/2, \epsilon/8\})^2(n/2-f)  - {\epsilon f/16}\right).
\end{align*}
\end{lemma}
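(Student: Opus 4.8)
The plan is to realize $-\sigma(t)S_G(t)$ as a stopped martingale whose stopping time is, on the natural (quadratic-variation) time scale, essentially deterministic, and therefore to conclude that it is nearly symmetric about the origin, so it places about half its second moment on each side; the threshold $f$ and the small error it introduces are then dominated by the $\epsilon f/16$ slack.

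\emph{Set-up.} List the coin flips that good processes perform during iteration $t$ in the scheduler's order, $\xi_1,\xi_2,\ldots$ (each $\xi_k\in\D$ uniform given the past), with $\iota(k)$ the flipper of $\xi_k$ and $w_{\iota(k)}\in[0,1]$ its epoch-fixed weight; let $N\le mn$ be the number of such flips, a bounded stopping time. Put $M_0=0$ and $M_k=-\sigma(t)\sum_{j\le k}w_{\iota(j)}\xi_j$. Since $\sigma(t)$ is fixed before iteration $t$, $(M_k)$ is a martingale with increments in $[-1,1]$, predictable quadratic variation $\langle M\rangle_k=\sum_{j\le k}w_{\iota(j)}^2$, and $M_N=-\sigma(t)S_G(t)$. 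By \autoref{thm:blackboard}, Part~\ref{thm:blackboard-full-partial}, when $\BB_t$ completes it has at least $n-2f$ full columns written by processes good at the end of iteration $t$ (at most $f$ of the $n-f$ full columns belong to processes corrupted during the iteration), each contributing all $m$ of its flips; hence $\langle M\rangle_N\ge m\sum_{i\in\hat G}w_i^2\ge m(1-\hat\rho)^2(n-2f)=:V$ a.s., where $\hat G$ is any $(n-2f)$-subset of those processes, $\hat\rho=\max\{\rho/2,\epsilon/8\}$, and the second inequality is \autoref{lem:weight-bound}, Part~\ref{lem:weight-bound-good}. Note $V/2=m(1-\hat\rho)^2(n/2-f)$, the main term of the claim.

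\emph{Reduction.} From $(\max\{0,y-f\})^2\ge(\max\{0,y\})^2-2fy\cdot\mathbb 1[y\ge 0]-f^2$ and $\E|M_N|\le\sqrt{\E[M_N^2]}=\sqrt{\E\langle M\rangle_N}\le\sqrt{mn}$ we get $\E[(\max\{0,M_N-f\})^2]\ge\E[(\max\{0,M_N\})^2]-2f\sqrt{mn}-f^2$, and $2f\sqrt{mn}+f^2\le\epsilon fm/32$ once $m=\Theta(n/\epsilon^2)$ has a large enough constant; so it suffices to prove $\E[(\max\{0,M_N\})^2]\ge V/2-\epsilon fm/32$. Now $\phi(x)=(\max\{0,x\})^2$ is convex, so $(\phi(M_k))$ is a submartingale with everywhere-nonnegative Doob increments. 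Let $\tau=\min\{k:\langle M\rangle_k\ge V\}$; since $\langle M\rangle_N\ge V$ and $\langle M\rangle$ has increments $\le1$ we have $\tau\le N$ and $\langle M\rangle_\tau\in[V,V+1]$, and optional stopping yields $\E[\phi(M_N)]\ge\E[\phi(M_\tau)]$. So it is enough to show $\E[\phi(M_\tau)]\ge V/2-O(\sqrt V)$, as $\sqrt V=\Theta(n/\epsilon)\ll\epsilon fm$.

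\emph{The crux (main obstacle).} Embed $(M_k)$ into a standard Brownian motion $\hat B$ by Skorokhod's theorem: there are stopping times $\sigma_1\le\sigma_2\le\cdots$ with $M_k=\hat B_{\sigma_k}$ and $\E[\sigma_k-\sigma_{k-1}\mid\mathcal G_{k-1}]=w_{\iota(k)}^2$, so $\sigma_k-\langle M\rangle_k$ is a martingale whose increments have conditional second moment $O(w_{\iota(k)}^4)\le O(w_{\iota(k)}^2)$; hence $\E[(\sigma_\tau-\langle M\rangle_\tau)^2]\le O(\E\langle M\rangle_\tau)=O(V)$ and $\E|\sigma_\tau-V|=O(\sqrt V)$. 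Since $(\hat B_t^+)^2$ has Itô differential $2\hat B_t^+\,d\hat B_t+\mathbb 1[\hat B_t>0]\,dt$,
\[
\E[\phi(M_\tau)]=\E[(\hat B_{\sigma_\tau}^+)^2]=\E\!\left[\int_0^{\sigma_\tau}\mathbb 1[\hat B_t>0]\,dt\right]\ \ge\ \E\!\left[\int_0^{V}\mathbb 1[\hat B_t>0]\,dt\right]-\E[(V-\sigma_\tau)^+]=\tfrac12V-O(\sqrt V),
\]
using symmetry of $\hat B$ at the deterministic time $V$; this closes the chain. This step is where the work is: the worry is that the scheduler's adaptivity—its control over how the (possibly very unequal) fixed weights of different good processes are interleaved, and whether to reveal more than the mandatory $m(n-2f)$ good coins—could skew $M_N$ to be ``usually mildly favorable, rarely wildly unfavorable'' and pull $\E[\phi(M_N)]$ below $\tfrac12\E[M_N^2]$. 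What rules this out is that quadratic variation, not step count, is the right clock (along which $\tau$ is essentially deterministic, $\langle M\rangle_\tau\in[V,V+1]$), that single-step displacements are $\le1$—tiny next to the $\Theta(\sqrt V)$ scale of $M_N$—and that \autoref{inv:weights} bounds the good players' total weight deficit by $O(f)$, so the low-weight processes (exactly the scheduler's tool for ``stalling'' near the origin) perturb everything by only $o(\epsilon fm)$. Making ``nearly symmetric at a quadratic-variation stopping time'' into a clean $O(\sqrt V)$ bound—via the Skorokhod route above, or a more elementary Doob-decomposition argument for $x\mapsto x|x|$ controlled by the quadratic variation spent within distance $1$ of the origin—is the genuinely delicate bookkeeping.
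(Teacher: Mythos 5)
Your argument is correct, but it takes a substantially different and heavier route than the paper's. The paper observes that $\E[Z_r\mid Z_{r-1},\dots]\ge Z_{r-1}$ where $Z_r=(\max\{0,-\sigma(t)S_r-f\})^2$ (with a small case analysis at $Z_{r-1}=0$), so a minimizing adversary never flips more than the mandatory $m(n-2f)$ good coins and should assign them to the $n-2f$ lowest-weight good processes; this collapses the adaptivity into a \emph{fixed} sum $S_0=\sum_{i\le n-2f}\sum_{r\le m}w_i\delta_{i,r}$. For that fixed sum, exact sign-symmetry ($\delta\mapsto-\delta$) gives $\Pr(-\sigma(t)S_0\ge 0)\ge\tfrac12$, and then a short computation $\E[Z_0]\ge\tfrac12\bigl(\E[S_0^2]-2f\E|S_0|\bigr)$ with $\E[S_0^2]=m\sum w_i^2$ and $\E|S_0|\le\sqrt{mn}$ finishes. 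Your proof instead keeps the scheduler fully adaptive: you use convexity of $\phi(x)=(x^+)^2$ and optional stopping to pass from $N$ to the quadratic-variation time $\tau$ (where $\langle M\rangle_\tau\approx V$), then Skorokhod embedding plus Itô's formula on $(\hat B_t^+)^2$ to show the stopped martingale is ``almost symmetric'' at the nearly-deterministic time $V$, yielding $\E[\phi(M_\tau)]\ge V/2-O(\sqrt V)$. What the paper's route buys is elementarity and short self-contained arithmetic; the cost is relying on the (plausible but somewhat informally argued) reduction that the adversary's optimal schedule is deterministic. What yours buys is that the near-symmetry is established directly against an adaptive scheduler, at the price of invoking Skorokhod embedding and stochastic calculus — machinery that is appreciably heavier than anything used elsewhere in the paper. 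One small thing worth stating explicitly if you were to write this out in full: the Skorokhod increments are exit times of intervals $(-w_{\iota(k)},w_{\iota(k)})$, whose second moments are $\Theta(w_{\iota(k)}^4)\le w_{\iota(k)}^2$, and the optional-stopping step $\E[\int_0^{\sigma_\tau}2\hat B_t^+\,d\hat B_t]=0$ needs a brief uniform-integrability justification since $\sigma_\tau$ is not a.s.\ bounded.
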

\begin{proof}
Let $S_r$, $r\ge 0$, 
be the weighted sum of the first 
$m(n-2f) + r$ coin flips generated by good players,
and $Z_r = (\max\{0,-\sigma(t)S_r-f\})^2$
be the objective function for $S_r$.
The adversary can choose to stop letting
the good players flip coins at any time after $m(n-2f)$, thus $\E[(\max\{0,-\sigma(t)S_G(t)-f\})^2] = \E[Z_{2fm}]$, which we argue is at least $\E[Z_0]$.
Note that if $Z_{r-1}=0$ then
the adversary has achieved the minimum objective and has no interest in further flips,
so $\E[Z_r \mid Z_{r-1}=0]\geq Z_{r-1}$.
If $Z_{r-1}>0$, then 
were the adversary to allow some
$i\in G$ to flip another coin,
we would have 
$S_r = S_{r-1} + w_i\delta_r$, $\delta_r\in\{-1,1\}$,
and
\[
    Z_r = \begin{cases}
    (-\sigma(t)S_{r-1} - f + w_i)^2 = Z_{r-1} + 2w_i(-\sigma(t)S_{r-1} - f) + w_i^2 &\textrm{with probability } \frac{1}{2}, \\
    (-\sigma(t)S_{r-1} - f - w_i)^2 = Z_{r-1} - 2w_i(-\sigma(t)S_{r-1} - f) + w_i^2 &\textrm{with probability } \frac{1}{2}.
    \end{cases}
\]
Thus, $\E[Z_r \mid Z_{r-1} > 0, |\delta_r|>0] 
= Z_{r-1} + w_i^2 \geq Z_{r-1}$, 
i.e., if the adversary is trying to \emph{minimize} 
the objective function 
$(\max\{0,-\sigma(t)S_G(t)-f\})^2$,
it will not allow any good coin flips beyond the bare minimum.

To lower bound $\E[Z_0]$, the analysis above shows that
any adversary minimizing this objective will let the player $i$ 
with the smallest weight flip the next coin
(thereby minimizing $w_i^2$), conditioned on any prior history. 
Thus, in the worst
case the $n-2f$ good players with the smallest weights each flip $m$ coins.

We compute $\E[Z_0]$ under this strategy. 
Since $S_0 = \sum_{i=1}^{n-2f} \sum_{r=1}^m w_i\delta_{i,r}$, 
where $\delta_{i,r} \in \{-1,1\}$ are fair coin flips, 
$\Pr(-\sigma(t)S_0 \geq 0) \geq \frac{1}{2}$ 
by a simple bijection argument ($\delta_{i,r} \mapsto -\delta_{i,r}$). 
Hence, $\E[Z_0] \geq 
\frac{1}{2}\E[Z_0 \mid -\sigma(t)S_0 \geq 0]$. 
Continuing,
\begin{align*}
\textstyle 
\E[Z_0 \mid -\sigma(t)S_0 \geq 0] 
&= \E[(-\sigma(t)S_0-f)^2 \mid -\sigma(t)S_0 \geq 0] + \E[Z_0 - (-\sigma(t)S_0-f)^2 \mid -\sigma(t)S_0 \geq 0] \\
&\geq \E[(-\sigma(t)S_0-f)^2 \mid -\sigma(t)S_0 \geq 0] -f^2\\
&\geq \E[(S_0)^2 \mid -\sigma(t)S_0 \geq 0] - 2f\E[-\sigma(t)S_0 \mid -\sigma(t)S_0 \geq 0]\\ 
&= \E[(S_0)^2] - 2f\E[|S_0|].
\end{align*}
The first inequality comes from the fact that $Z_0 \neq (-\sigma(t)S_0 - f)^2$ only when $-\sigma(t)S_0 \in [0,f)$ (given the conditioning) and in this range
is $-(-\sigma(t)S_0 - f)^2 \geq -f^2$.
The second inequality comes from expanding $(-\sigma(t)S_0-f)^2$, linearity of expectation, 
and the fact that $\sigma(t)^2=1$.
Since $\E[\delta_{i,r}\delta_{i',r'}] = 0$ for $(i,r) \neq (i',r')$,  
\begin{align}\label{eqn:S0-sqaured}
\textstyle 
 \E[(S_0)^2] = \E[(\sum_{i,r} w_i\delta_{i,r})^2] = \sum_{i,r,i',r'} w_iw_{i'}\E[ \delta_{i,r}\delta_{i',r'}] = m\sum_{i=1}^{n-2f} w_i^2 \, .
\end{align}
We bound the expected value of $|S_0|$ as follows
\begin{align*}
\E[|S_0|] 
&\leq \sqrt{\E[(S_0)^2]}
&(\operatorname{Var}[|S_0|] = \E[(S_0)^2]- (\E[|S_0|])^2 \geq 0)\\
&=\sqrt{m\sum_{i=1}^{n-2f} w_i^2} \le \sqrt{mn} & (\autoref{eqn:S0-sqaured})\\
\E[(S_0)^2] &\ge m(1-\max\{\rho/2, \epsilon/8\})^2(n-2f).
&(\autoref{lem:weight-bound})
\intertext{and putting it all together we have}
\E[Z_0] 
&\geq 
 \frac{1}{2}\E[Z_0 \mid -\sigma(t)S_0 \geq 0]\\
&\geq \frac{1}{2}\left(\E[(S_0)^2]-2f\cdot \E[|S_0|]\right)\\
&\geq \frac{1}{2}\left(m(1-\max\{\rho/2, \epsilon/8\})^2(n-2f) - 2f\sqrt{mn}\right)\\
&\geq m\left((1-\max\{\rho/2, \epsilon/8\})^2(n/2-f) - {\epsilon f/16}\right)
\end{align*}
The last line follows since $m=\Theta(n/\epsilon^2)$. 
\end{proof}

\begin{lemma}
\label{lem:sum-squared-bound}
With high probability, 
for every $t \in [T]$, 
$\max\{0,-\sigma(t)S_G(t)-f\}^2 \leq cmn\ln n$.
\end{lemma}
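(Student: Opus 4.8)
The plan is to bound $|S_G(t)|$ and then observe that $\max\{0,-\sigma(t)S_G(t)-f\}^2 \le |S_G(t)|^2$; hence it suffices to show that $|S_G(t)| \le \sqrt{cmn\ln n}$ for every $t\in[T]$, with high probability. The naive bound $|S_G(t)|\le\sum_{i}w_i\XMAX\le n\XMAX$ is a factor $\sqrt{n}$ too weak, so we need concentration at the level of individual coin flips rather than whole columns.

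Fix an iteration $t\in[T]$. Recall that $S_G(t)$ is the weighted sum of the $\pm 1$ coin flips that good processes contribute to $\BB_t$, where a process corrupted during iteration $t$ contributes only the prefix of flips it made while still good. Since each column of $\BB_t$ has $m$ rows and there are at most $n$ good processes, there are at most $mn$ such flips. Enumerate them in the order the scheduler reveals them as $\delta_1,\delta_2,\ldots$; let $w_{(r)}\in[0,1]$ be the weight of the process that produced $\delta_r$ (this weight is fixed throughout the epoch, and by \autoref{thm:blackboard}(3) is even common knowledge); and pad the sequence with $0$'s up to length $mn$. Define $S_0=0$ and $S_r=S_{r-1}+w_{(r)}\delta_r$, so that $S_{mn}=S_G(t)$. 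Each $\delta_r$ is either identically $0$ (a padding term, or a flip the scheduler declined to let occur) or a fresh fair coin in $\{-1,1\}$ independent of the history; in either case $\E[w_{(r)}\delta_r\mid S_{r-1},\ldots,S_0]=0$, so $(S_r)_{r\in[0,mn]}$ is a martingale with $|S_r-S_{r-1}|=w_{(r)}|\delta_r|\le 1$. This is the same setup used in the proof of the Gap Lemma, Parts~1 and~2.

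By Azuma's inequality (\autoref{thm:azuma}), $\Pr(|S_G(t)|\ge\sqrt{cmn\ln n})=\Pr(|S_{mn}|\ge\sqrt{cmn\ln n})\le 2\exp(-cmn\ln n/(2mn))=2n^{-c/2}=n^{-\Omega(c)}$. Taking a union bound over the $T=\poly(n)$ iterations of the epoch, with high probability $|S_G(t)|\le\sqrt{cmn\ln n}$ holds simultaneously for all $t\in[T]$. On this event, for every $t$ we have $-\sigma(t)S_G(t)-f\le|S_G(t)|$ (using $|\sigma(t)|=1$ and $f\ge 0$), hence $\max\{0,-\sigma(t)S_G(t)-f\}\le|S_G(t)|$, and squaring yields $\max\{0,-\sigma(t)S_G(t)-f\}^2\le|S_G(t)|^2\le cmn\ln n$, as claimed.

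The only point requiring care is the adaptivity of the scheduler: the identities and number of the flipping processes, and the times at which good processes get corrupted, are all chosen adaptively in response to the revealed history. As in the Gap Lemma, this is handled by the zero-padding device, which converts the adversarially truncated sequence of good coin flips into a length-$mn$ martingale-difference sequence whose increments are bounded by the fixed, publicly known weights $w_{(r)}\le 1$. Note also that the per-column clamping to $[-\XMAX,\XMAX]$ does not interfere: $S_G(t)$ is the exact weighted sum of the good players' coin flips, and in any case the clamping takes effect only on an event of probability $n^{-\Omega(c)}$, which is absorbed into the union bound.
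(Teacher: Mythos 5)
Your proof is correct and takes essentially the same approach as the paper: bound $|S_G(t)|$ by $\sqrt{cmn\ln n}$ via a concentration inequality over the at-most-$mn$ good coin flips, union-bound over $t\in[T]$, and then observe $\max\{0,-\sigma(t)S_G(t)-f\}\le|S_G(t)|$. The only (minor) difference is that the paper invokes Chernoff--Hoeffding while you set up an explicit martingale and use Azuma's inequality; this is actually the more careful choice here, since the adaptive scheduler makes the increments $w_{(r)}\delta_r$ a martingale-difference sequence rather than an independent sequence, exactly the same subtlety the paper itself handles via Azuma in Parts~1 and~2 of the Gap Lemma.
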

\begin{proof}
The total number of good coin 
flips is at most $mn$.
By a Chernoff-Hoeffding bound (\autoref{thm:hoeffding}, Appendix~\ref{sect:tailbounds}), $S_G(t) \leq \sqrt{cmn\ln n}$ with high probability and the lemma follows.
\end{proof}

\begin{lemma}
\label{lem:gap-lemma-lb}
With high probability,
\[
\sum_{t=1}^T \max\{0, -\sigma(t)S_G(t) - f\}^2 \ge
m\left[\left((1-\max\{\rho/2, \epsilon/8\})^2(n/2-f) - {\epsilon f/16}\right)T - n\sqrt{T(c\ln n)^3}\right].
\]
\end{lemma}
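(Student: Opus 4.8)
The plan is to lift the single-iteration estimate of \autoref{lem:real-blackboard-min-obj-inst} to the sum over all $T$ iterations via a martingale concentration argument. Write $Y_t = \max\{0,-\sigma(t)S_G(t)-f\}^2$ and set
\[
\mu \;=\; m\left((1-\max\{\rho/2,\epsilon/8\})^2(n/2-f) - \epsilon f/16\right),
\]
so that the right-hand side of the statement is exactly $\mu T - mn\sqrt{T(c\ln n)^3}$. Let $(\mathcal F_t)$ be the filtration generated by all good players' coin flips and all adversary choices through the end of iteration $t$. The first and only delicate point is to observe that \autoref{lem:real-blackboard-min-obj-inst} in fact lower-bounds the \emph{conditional} expectation: its proof uses only that the epoch weights $\{w_i\}$ (hence $\rho$, hence $\mu$) are fixed, that $\sigma(t)$ is committed before $\BB_t$ is built, and that the good players' iteration-$t$ coin flips are fresh, independent, uniform $\pm1$ bits. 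Consequently it applies verbatim conditioned on $\mathcal F_{t-1}$ and on the value of $\sigma(t)$, for every adversary strategy within iteration $t$ (which good players flip, when to stop at the minimum of $m(n-2f)$ flips, any mid-iteration corruptions). Averaging over $\sigma(t)$ gives $\E[Y_t \mid \mathcal F_{t-1}] \ge \mu$ for all $t\in[T]$.

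It follows that $M_t := \sum_{s=1}^t (Y_s - \mu)$, with $M_0 = 0$, is a submartingale, equivalently $-M_t$ is a supermartingale. For the increment bound, \autoref{lem:sum-squared-bound} gives $Y_t \le cmn\ln n$ for all $t$ with high probability, and since $0 \le \mu \le m(n/2-f) \le cmn\ln n$, on this event $|Y_t-\mu| \le cmn\ln n =: B$. (Azuma's inequality formally needs an almost-sure increment bound; as usual this costs essentially nothing — e.g.\ replace $Y_t$ by $\min\{Y_t, cmn\ln n\}$, which only decreases the left-hand side of the lemma and, since every quantity here is deterministically $\poly(n)$-bounded, perturbs each conditional expectation by at most $\poly(n)\cdot n^{-\Omega(c)}$, a loss swamped by the error term because $T=\poly(n)$.)

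Finally, apply Azuma's inequality (\autoref{thm:azuma}) to $-M_t$ with deviation $\lambda = B\sqrt{Tc\ln n}$. Since $\sum_{t=1}^T B^2 = TB^2$,
\[
\Pr\!\left[-M_T \ge \lambda\right] \;\le\; \exp\!\left(-\frac{\lambda^2}{2TB^2}\right) \;=\; \exp\!\left(-\frac{c\ln n}{2}\right) \;=\; n^{-\Omega(c)} .
\]
Hence, with high probability, $M_T \ge -\lambda = -cmn\ln n\cdot\sqrt{Tc\ln n} = -mn\sqrt{T(c\ln n)^3}$, and since $M_T = \sum_{t=1}^T Y_t - \mu T$, rearranging and substituting the definition of $\mu$ yields exactly
\[
\sum_{t=1}^T \max\{0,-\sigma(t)S_G(t)-f\}^2 \;\ge\; m\left[\left((1-\max\{\rho/2,\epsilon/8\})^2(n/2-f) - \epsilon f/16\right)T - n\sqrt{T(c\ln n)^3}\right].
\]
The whole argument is routine modulo the conditional-expectation observation of the first paragraph; that observation is where I expect the only real subtlety to lie, since it requires that \autoref{lem:real-blackboard-min-obj-inst} hold against an adaptive adversary who chooses $\sigma(t)$ and performs corruptions as a function of the entire history. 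Once it is granted, the submartingale structure is immediate, and the Azuma parameters were chosen precisely so that the tail probability is $n^{-\Omega(c)}$ and the deviation equals the $mn\sqrt{T(c\ln n)^3}$ term appearing in the statement.
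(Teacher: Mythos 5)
Your proposal is correct and follows essentially the same argument as the paper: define the submartingale $M_t$ (the paper's $A_t$), invoke \autoref{lem:real-blackboard-min-obj-inst} conditionally to get the drift, invoke \autoref{lem:sum-squared-bound} for the bounded-increment condition, and apply Azuma with the deviation chosen to give the $mn\sqrt{T(c\ln n)^3}$ term. Your remarks about the conditional expectation applying per-iteration and about truncating to make the increment bound almost-sure are slightly more careful than the paper's exposition but do not change the argument.
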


\begin{proof}
Let $\gamma = ((1 - \max\{\rho/2,\epsilon/8\})^2(n/2-f) - {\epsilon f/16})$.
Consider the sequence of random variables $A_0,A_1,\dots,A_T$, where $A_0=0$ and $A_{t} = A_{t-1} + \max\{0, -\sigma(t)S_G(t)-f\}^2 - m\gamma$. By \autoref{lem:real-blackboard-min-obj-inst}, $\E[A_t \mid A_{t-1},\dots,A_0] \geq 0$. So, $(A_t)$ is a submartingale. By \autoref{lem:sum-squared-bound}, with high probability, for all $t \in [T]$, 
$\max\{0,-\sigma(t)S_G(t)-f\}^2 \leq m\gamma'$, where 
$\gamma' = cn\ln n$. 
Assuming this holds, $|A_t - A_{t-1}| \leq m\gamma'$ and, by Azuma's inequality (\autoref{thm:azuma}, Appendix~\ref{sect:tailbounds}), $A_T \leq -m\gamma'\sqrt{Tc\ln n}$
with probability $1-n^{-\Omega(c)}$. 
Therefore, with high probability, \[
\sum_{t=1}^T \max \{0, -\sigma(t)S_G(t) - f\}^2 = m\gamma T+ A_T  
\geq m(\gamma T - \gamma'\sqrt{Tc\ln n}). \qedhere
\]
\end{proof}

\begin{lemma}
\label{lem:bad-process-large-coin}
For every epoch in which no players are corrupted,
\[
\sum_{i \in B} \DEV(i) + \sum_{(i \neq j) \in B^2} \CORR(i,j) \ge \sum_{t=1}^T \max\{0, -\sigma(t)S_G(t) - f\}^2 \, .
\]
\end{lemma}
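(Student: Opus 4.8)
The plan is to prove the pointwise statement $S_B(t)^2 \ge \bigl(\max\{0,-\sigma(t)S_G(t)-f\}\bigr)^2$ for each $t\in[T]$, where $S_B(t) := \sum_{i\in B} w_iX_i(t)$ is the true weighted sum of the bad columns, and then sum over $t$ and expand the square. Because the lemma restricts attention to an epoch \emph{in which no player is corrupted}, the sets $G,B$ and the weights $\{w_i\}$ are fixed for the whole epoch, so there is no ``prefix'' ambiguity: $S_G(t)=\sum_{i\in G}w_iX_i(t)$ and $S_B(t)$ are well defined and $S_G(t)+S_B(t)=\sum_{i\in[n]}w_iX_i(t)=:S(t)$. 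I will also use that the epoch runs all $T$ iterations, i.e.\ the game does not end during it.

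First I would argue that for every $t\in[T]$ the true sum satisfies $\sigma(t)S(t)\ge -f$. Indeed, the game would have ended in iteration $t$ only if \emph{every} good process $p$ computed $\sgn\bigl(\sum_{i}w_iX_i^{(p)}(t)\bigr)=-\sigma(t)$; since it did not end, some good $p$ has $\sgn\bigl(\sum_i w_iX_i^{(p)}(t)\bigr)=\sigma(t)$, i.e.\ $\sigma(t)\sum_i w_iX_i^{(p)}(t)\ge 0$. By \autoref{thm:blackboard} the views differ by at most $f$ in $\ell_1$, and since $w_i\in[0,1]$,
\[
\Bigl|\sum_i w_iX_i^{(p)}(t)-S(t)\Bigr|\le\sum_i w_i\bigl|X_i^{(p)}(t)-X_i(t)\bigr|\le\sum_i \bigl|X_i^{(p)}(t)-X_i(t)\bigr|\le f,
\]
which gives $\sigma(t)S(t)\ge -f$. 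Rearranging, $\sigma(t)S_B(t)=\sigma(t)S(t)-\sigma(t)S_G(t)\ge -\sigma(t)S_G(t)-f$; since also $|S_B(t)|\ge 0$, we get $|S_B(t)|\ge\max\{0,-\sigma(t)S_G(t)-f\}$ and hence the desired pointwise bound $S_B(t)^2\ge\bigl(\max\{0,-\sigma(t)S_G(t)-f\}\bigr)^2$.

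To finish, I would expand $S_B(t)^2=\bigl(\sum_{i\in B}w_iX_i(t)\bigr)^2=\sum_{i\in B}(w_iX_i(t))^2+\sum_{(i\neq j)\in B^2}w_iw_jX_i(t)X_j(t)$, sum over $t\in[T]$, and recognize $\sum_t (w_iX_i(t))^2=\DEV(i)$ and $\sum_t w_iw_jX_i(t)X_j(t)=\CORR(i,j)$, so that $\sum_t S_B(t)^2 = \sum_{i\in B}\DEV(i)+\sum_{(i\neq j)\in B^2}\CORR(i,j)$; combined with the pointwise bound summed over $t$ this is exactly the claimed inequality. The one step that needs care — and the only place the structure of the model really enters — is the passage from ``some good player \emph{saw} the sign $\sigma(t)$ in its own view $\{X_i^{(p)}(t)\}$'' to the bound $\sigma(t)S(t)\ge -f$ on the \emph{true} sum; this is precisely where the $f$-bound on view disagreements from the Iterated Blackboard and the fact that weights lie in $[0,1]$ are used. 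Everything else is a rearrangement and a routine expansion of a square, with no concentration inequalities needed (those are used only in the neighbouring lemmas that bound $\sum_t\max\{0,-\sigma(t)S_G(t)-f\}^2$ from below).
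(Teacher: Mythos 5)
Your proof is correct and takes essentially the same approach as the paper: decompose $\sum_t S_B(t)^2$ into $\DEV$ and $\CORR$ terms by expanding the square, and use the $f$-bound on view disagreements from \autoref{thm:blackboard} (together with $w_i\in[0,1]$) to show the pointwise bound $S_B(t)^2\ge\max\{0,-\sigma(t)S_G(t)-f\}^2$. Your write-up is actually slightly more careful than the paper's in spelling out the intermediate step $\sigma(t)S(t)\ge -f$ from the fact that some good process sees sign $\sigma(t)$, but the decomposition and the use of the blackboard disagreement bound are identical.
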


\begin{proof}
Define $S_B(t)$ to be the sum of coin flips declared by corrupted players.  
I.e., if $B$ were stable throughout iteration $t$ then 
$S_B(t) = \sum_{i \in B} w_iX_i(t)$. 
Then
\[
    \sum_{t\in [T]} (S_B(t))^2 = 
    \sum_{t\in [T]}\left(\sum_{i \in B} (w_iX_i(t))^2 + \sum_{(i \neq j) \in B^2} w_iw_jX_i(t)X_j(t)\right) 
    = \sum_{i \in B} \DEV(i) + \sum_{(i \neq j) \in B^2} \CORR(i,j).
\]
In iteration $t\in [T]$,
the adversary must 
convince at least one good process $p$
that $\sgn\left(\sum_i w_iX_i^{(p)}(t)\right) = \sigma(t)$.
By \autoref{thm:blackboard}, 
$\sum_i|X_i^{(p)}(t) - X_i(t)|\leq f$
and hence the total disagreement between 
$p$'s weighted sum and 
the true weighted sum is
\[
\sum_i\left| w_i X_i^{(p)}(t) - w_iX_i(t)\right| = \sum_i w_i\left| X_i^{(p)}(t) - X_i(t)\right|\leq f.
\]
Thus, if $-\sigma(t)S_G(t) \geq f$ 
(the good players sum
is in the non-adversarial direction by at least $f$) the bad players
must correct it by setting $\sigma(t)S_B(t) \geq -\sigma(t)S_B(t)-f$.  
Therefore, for any $S_G(t)$
we must have 
$(S_B(t))^2 \geq \max\{0,-\sigma(t)S_G(t)-f\}^2$
and the lemma follows.
\end{proof}

Recall from Parts 1 and 2 of The Gap Lemma
(\autoref{lem:gap-lemma}) that every good
player $i\in G$ has $\DEV(i) \leq w_i^2\alpha_T$ and every good pair 
$(i,j)\in G^2$ has $\CORR(i,j) \leq w_iw_j\beta_T$.  \autoref{lem:bad-process-corr-dev} lower bounds the \emph{excess} of the $\DEV/\CORR$-values involving bad players, beyond these allowable thresholds.

\begin{lemma}
\label{lem:bad-process-corr-dev}
In any epoch in which no processes are corrupted, 
With high probability,
\[
    \sum_{i\in B} \max\{0,  \DEV(i) - w_i^2\alpha_T\} + \sum_{(i\neq j)\in B^2} \max\{0, \CORR(i, j) - w_iw_j\beta_T\} \ge \frac{\epsilon}{16} f\alpha_T \, .
    \]
\end{lemma}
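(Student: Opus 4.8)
The plan is to chain the three preceding lemmas, substitute the weight bounds, and reduce everything to a single numerical inequality about the relative weight loss $\rho$ of the bad players (recall $\sum_{i\in B}w_i=(1-\rho)f$, and that we work under \autoref{inv:weights} with no process corrupted this epoch, exactly as in Part~3 of \autoref{lem:gap-lemma}). First, since $\DEV(i)\ge 0$ and $\max\{0,a\}\ge a$ for every real $a$, the left-hand side is at least
\[
\Big(\sum_{i\in B}\DEV(i)+\sum_{(i\neq j)\in B^2}\CORR(i,j)\Big)\;-\;\alpha_T\sum_{i\in B}w_i^2\;-\;\beta_T\sum_{(i\neq j)\in B^2}w_iw_j .
\]
By \autoref{lem:bad-process-large-coin} together with \autoref{lem:gap-lemma-lb}, the parenthesized sum is at least $m\bigl(\gamma T-n\sqrt{T(c\ln n)^3}\bigr)$ with high probability, where $\gamma=(1-\max\{\rho/2,\epsilon/8\})^2(n/2-f)-\epsilon f/16$; and by \autoref{lem:weight-bound}(\ref{lem:weight-bound-bad}), $\sum_{i\in B}w_i^2\le(1-\rho)f$ and $\sum_{(i\neq j)\in B^2}w_iw_j\le(1-\rho)^2f^2\le f^2$. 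Plugging in $\alpha_T=m(T+\sqrt{T(c\ln n)^3})$ and $\beta_T=m\sqrt{T(c\ln n)^3}$, the left-hand side is at least
\[
m\Bigl[\bigl(\gamma-(1-\rho)f\bigr)T-\bigl(n+(1-\rho)f+(1-\rho)^2f^2\bigr)\sqrt{T(c\ln n)^3}\Bigr].
\]

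The heart of the argument is to show the leading coefficient satisfies $\gamma-(1-\rho)f\ge\tfrac{\epsilon}{8}f$. This is exactly where $f<n/(4+\epsilon)$ enters: taking $f=n/(4+\epsilon)$ gives $n/2-f=(1+\epsilon/2)f$, so $\gamma=(1-\max\{\rho/2,\epsilon/8\})^2(1+\epsilon/2)f-\epsilon f/16$, and I would establish the inequality by a short case split on $\max\{\rho/2,\epsilon/8\}$. If $\rho\le\epsilon/4$, use $(1-\epsilon/8)^2\ge 1-\epsilon/4$ and $\epsilon\le 1/2$ to get $\gamma\ge(1+\epsilon/8)f$, hence $\gamma-(1-\rho)f\ge\gamma-f\ge\epsilon f/8$. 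If $\rho>\epsilon/4$, expand $(1-\rho/2)^2(1+\epsilon/2)$ and observe that $\tfrac{1}{f}\bigl(\gamma-(1-\rho)f-\tfrac{\epsilon}{8}f\bigr)=\tfrac{\rho^2}{4}(1+\tfrac{\epsilon}{2})+\epsilon(\tfrac{5}{16}-\tfrac{\rho}{2})$ is an upward parabola in $\rho$ whose minimum, attained at $\rho=\epsilon/(1+\epsilon/2)$, equals $\tfrac{5\epsilon}{16}-\tfrac{\epsilon^2/4}{1+\epsilon/2}\ge\tfrac{3\epsilon}{16}>0$ for $\epsilon\le1/2$. I expect this case — balancing the $\rho^2/4$ term, which carries the bound when $\rho$ is near $1$, against the $\epsilon(\tfrac{7}{16}-\tfrac{\rho}{2})$ term, which carries it when $\rho$ is small — to be the fiddliest step.

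Finally, granting $\gamma-(1-\rho)f\ge\tfrac{\epsilon}{8}f$, the displayed lower bound is at least $m\bigl(\tfrac{\epsilon}{8}fT-O(f^2\sqrt{T(c\ln n)^3})\bigr)$, absorbing $n\le 4f$ and the $(1-\rho)^2f^2$ term into $O(f^2)$, while the target is $\tfrac{\epsilon}{16}f\alpha_T=m\bigl(\tfrac{\epsilon}{16}fT+\tfrac{\epsilon}{16}f\sqrt{T(c\ln n)^3}\bigr)$. Thus it suffices that the surplus $\tfrac{\epsilon}{16}fT$ dominate all the $O(f^2\sqrt{T(c\ln n)^3})$ error terms, i.e. that $T=\Omega(f^2(c\ln n)^3/\epsilon^2)$; this holds since $T=\Theta(n^2\ln^3 n/\epsilon^2)$ and $f\le n/4$, provided the hidden constant in $T$ is large enough relative to $c$. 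This yields the claim with high probability. The one genuine obstacle is the inequality $\gamma-(1-\rho)f\ge\tfrac{\epsilon}{8}f$: the quantity we must absorb is essentially $(1-\rho)f$, while the available positive term degrades to roughly $(1-\rho/2)^2(n/2-f)$, so for small $\rho$ we really do need $n/2-f$ to beat $f$ by an $\Omega(\epsilon f)$ margin — which is what forces $f<n/(4+\epsilon)$ rather than the $f<n/3$ of the simplified game, the loss tracing back to the (up to) $f$ good columns the scheduler may leave incomplete, shrinking the usable good population from $n-f$ to $n-2f$.
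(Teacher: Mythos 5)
Your proof is correct and essentially identical to the paper's: the same chain through Lemmas~\ref{lem:gap-lemma-lb}, \ref{lem:bad-process-large-coin}, and \ref{lem:weight-bound}, the same subtraction of $\alpha_T\sum_{i\in B} w_i^2+\beta_T\sum_{(i\neq j)\in B^2} w_iw_j$, and the same case split on $\max\{\rho/2,\epsilon/8\}$ with the quadratic in $\rho$ minimized at $\rho=\epsilon/(1+\epsilon/2)$. Factoring out the clean coefficient bound $\gamma-(1-\rho)f\ge\tfrac{\epsilon}{8}f$ before doing the $\sqrt{T}$ error bookkeeping is a marginally tidier packaging of the same computation (and the $7/16$ in your prose should read $5/16$, matching your displayed formula).
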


\begin{proof}
By \autoref{lem:gap-lemma-lb} and
\autoref{lem:bad-process-large-coin}, with high probability,
\begin{align}\label{eqn:dev-plus-corr}
    \sum_{i\in B} \DEV(i) + \sum_{(i\neq j)\in B^2} \CORR(i, j) 
    \ge m\left(\left((1-\max\{\rho/2, \epsilon/8\})^2(n/2-f) - {\epsilon f/16}\right)T - n\sqrt{T(c\ln n)^3}\right).
\end{align}
Recall that
$\alpha_T = m(T+\sqrt{T(c\ln n)^3})$, $\beta_T = m\sqrt{T(c\ln n)^3}$, 
and, by
\autoref{lem:weight-bound}, that
$\sum_{i \in B} w_i^2 \leq (1-\rho)f$ and $\sum_{(i \neq j) \in B^2} w_iw_j \leq (1-\rho)^2 f^2$.  Putting these together
we have
\begin{align}\label{eqn:alpha-plus-beta}
    \alpha_T\sum_{i \in B} w_i^2 + \beta_T\sum_{(i \neq j) \in B^2} w_iw_j 
    &\leq m\left(T + \sqrt{T(c\ln n)^3}\right)\cdot (1-\rho)f + m\sqrt{T(c\ln n)^3}\cdot (1-\rho)^2f^2.
\end{align}
The expression we wish to bound is at least (\ref{eqn:dev-plus-corr}) minus (\ref{eqn:alpha-plus-beta}), namely:
\begin{align}\label{eqn:simplify-gap}
&m\left[\Big((1-\max\{\rho/2, \epsilon/8\})^2(n/2-f) - {\epsilon f/16} - (1-\rho)f\Big)T - \left((1-\rho)f + (1-\rho)^2f^2\right)\sqrt{T(c\ln n)^3}\right]
\end{align}
Now depending on the larger value of $\rho/2$ and $\epsilon/8$, there are two cases expanding \autoref{eqn:simplify-gap}.

\paragraph{Case 1: $\rho/2 \le \epsilon/8$.} In this case, we simplify \autoref{eqn:simplify-gap} by setting $\rho= 0$.
\begin{align*}
(\mbox{\autoref{eqn:simplify-gap}}) & \ge mT \left[ (1-\epsilon/8)^2(1+\epsilon/2)f - \epsilon f/16 - f\right] - f(f+1)m\sqrt{T(c\ln n)^3} \\
&\ge mT\left[(1+\epsilon/4 - \epsilon^2/8)f - \epsilon f/16 - f\right] - n^2m\sqrt{T(c\ln n)^3}\\
&\ge \frac{\epsilon}{8}fmT - n^2m\sqrt{T(c\ln n)^3} \tag{$\epsilon \le 1/2$}\\
&\ge \frac{\epsilon}{16}f\alpha_T.
\end{align*}

\paragraph{Case 2: $\rho/2 > \epsilon/8$.} In this case, we expand the $(1-\rho/2)^2$ term and simplify \autoref{eqn:simplify-gap} using the
identity $n=(4+\epsilon)f$.
\begin{align*}
(\mbox{\autoref{eqn:simplify-gap}}) &\geq mT\left[(n/2 - 2f)(1-\rho) + (n/2-f)\rho^2/4 - {\epsilon f/16}\right] - n^2m\sqrt{T(c\ln n)^3}\\
&= mTf\left[(\epsilon/2)(1-\rho) + (1+\epsilon/2)\rho^2/4 - \epsilon/16\right] - n^2m\sqrt{T(c\ln n)^3},\\
\intertext{which is minimized when $\rho =\epsilon/(1+\epsilon/2)$, hence}
&\geq mTf\left[\frac{\epsilon}{2}\left(1-\frac{\epsilon}{1+\epsilon/2}\right) + (1+\epsilon/2)\left(\frac{\epsilon}{1+\epsilon/2}\right)^2/4 - {\epsilon /16}\right] - n^2m\sqrt{T(c\ln n)^3}\\
&= mTf\left[\frac{7\epsilon}{16} - \frac{\epsilon^2}{4(1+\epsilon/2)}\right] - n^2m\sqrt{T(c\ln n)^3},
\intertext{and since $T=\Theta(n^2\ln^3 n/\epsilon^2)$ and
$\alpha_T = m(T+\sqrt{T(c\ln n)^3})$, with $\epsilon < 1/2$ this is lower bounded by}
&\geq \frac{\epsilon}{4}f\alpha_T. \qedhere
\end{align*}
\end{proof}

\begin{remark}
We are able to upper bound correlation scores between two good players, 
and lower bound the average correlation score between two bad players.
However, the correlations between good and bad players cannot be usefully limited.  This is why  \autoref{lem:bad-process-large-coin} and
\autoref{lem:bad-process-corr-dev} 
only apply to epochs in which no processes are corrupted, since any $\CORR(i,j)$ score is difficult to analyze when $i$ is corrupted halfway through the epoch.
\end{remark}

\subsection{Weight Updates}\label{sect:weight-update}

\newcommand{\UNIT}{\ensuremath{\mathit{unit}}}
\newcommand{\WeightUpdate}{\ensuremath{\mathsf{Weight\text{-}Update}}}
\newcommand{\RisingTide}{\ensuremath{\mathsf{Rising\text{-}Tide}}}

When the $T$ iterations of an epoch $k$ are complete, 
we reduce the weight vector $(w_i)$ in preparation for epoch $k+1$.
According to The Gap Lemma, if an individual deviation score $\DEV(i)$ is too large, 
$i$ is bad w.h.p., and if a correlation score $\CORR(i,j)$ is too large, 
$B\cap\{i,j\}\neq\emptyset$ w.h.p., so reducing \emph{both} $i$ and $j$'s weights
by the \emph{same} amount preserves \autoref{inv:weights}.  
With this end in mind, \WeightUpdate{} 
(\autoref{alg:weights-update-real-bb}) 
constructs a complete, vertex- and edge-capacitated graph $G$ on $[n]$, 
finds a fractional maximal matching $\mu$ in $G$, 
then docks the weights of $i$ and $j$ by $\mu(i,j)$, 
for each edge $(i,j)$.

\begin{definition}[Fractional Maximal Matching]
Let $G=(V,E,c_V,c_E)$ be a graph where $c_V : V\rightarrow \mathbb{R}_{\geq 0}$ are vertex capacities
and $c_E : E\rightarrow \mathbb{R}_{\geq 0}$ are edge capacities.  
A function $\mu : E\rightarrow \mathbb{R}_{\geq 0}$
is a \emph{feasible fractional matching} 
if $\mu(i,j)\leq c_E(i,j)$ and $\sum_{j} \mu(i,j) \leq c_V(i)$.
It is \emph{maximal} if it is not strictly dominated by any feasible $\mu'$. 
The \emph{saturation level} of $i$ is $\sum_{j} \mu(i,j)$; 
it is \emph{saturated} if this equals $c_V(i)$.
An edge $(i,j)$ is \emph{saturated} if $\mu(i,j)=c_E(i,j)$.
(Note that contrary to convention, a self-loop $(i,i)$ only
counts \emph{once} against the capacity of $i$, not twice.)
\end{definition}

\paragraph{Rounding Weights Down.} Recall that if $p$ participates in a blackboard $\BB_t$, 
that every other process can compute the weight vector computed from $p$'s local view $\BB^{(p,t-1)}$ 
through blackboard $t-1$.  The processes use a unified weight vector in which $w_i$ is derived 
only from $i$'s local view:
\[
w_i = \left\{\begin{array}{ll}
w_i^{(i)}  & \mbox{ if $w_i^{(i)} > w_{\min} = \frac{\sqrt{n\ln n}}{T}$}\\
0           & \mbox{ otherwise.}
\end{array}\right.
\]

As we will see, the maximum pointwise disagreement  $|w_i^{(p)}-w_i^{(q)}|$ 
between processes $p,q$
is at most $w_{\min}$, and as a consequence, if any $p$ thinks $w_i^{(p)}=0$ then all processes agree that $w_i=0$.

\newcommand{\ErrT}{\ensuremath{err_T}}

\paragraph{Excess Graph.} The \emph{excess graph} $G=(V, E, c_V,c_E)$ used in \autoref{alg:weights-update-real-bb} 
is a complete undirected graph on $V=[n]$, 
including self-loops, capacitated as follows:
\begin{align*}
    c_V(i) &= w_i, \\
    c_E(i, i) &= \frac{16}{\epsilon f\alpha_T} \cdot \max\{0, \DEV(i)-w_i^2\alpha_T \}, \\
    c_E(i, j) &= \frac{16}{\epsilon f\alpha_T} \cdot 2\max\{0, \CORR(i, j)-w_iw_j\beta_T \},
\end{align*}

\medskip

The reason for the coefficient of ``2'' in the definition of $c_E(i,j)$ is that
$(i,j)$ is a single, undirected edge, but it represents two correlation scores 
$\CORR(i,j)=\CORR(j,i)$, which were accounted for separately 
in \autoref{lem:bad-process-corr-dev}.
By parts 1 and 2 of The Gap Lemma, $c_E(i,j)=0$ whenever both $i$ and $j$ are good.

\begin{algorithm}[h]
\caption{\WeightUpdate{} from the perspective of process $p$.}\label{alg:weights-update-real-bb} 
\textbf{Output:} Weights $(w_{i,k})_{i\in [n], k\geq 0}$ where $w_{i,k-1}$ refers to the
weight $w_i$ \emph{after} processing epoch $k-1$, and is used throughout epoch $k$.\\
\begin{algorithmic}[1]
\State Set $w_{i, 0}\gets 1$ for all $i$.\Comment{All weights are 1 in epoch 1.}
\For{epoch $k=1, 2, \ldots, \KMAX$}\Comment{$\KMAX = $ last epoch}
\State Play the coin flipping game for $T$ iterations with weights $(w_{i,k-1})$ and let 
$\DEV^{(p)}$ and $\CORR^{(p)}$ be the resulting deviation and correlation scores known to $p$.
Construct the excess graph $G_k^{(p)}$ with capacities:
\begin{align*}
c_{V}(i) &= w_{i, k-1}, \\[4pt]
c_{E}(i, i)  &=  \displaystyle\frac{16}{\epsilon f\alpha_T}\cdot \max\left\{0, \DEV^{(p)}(i) - w_{i, k-1}^2\alpha_T\right\},\\[8pt]
c_{E}(i, j) &=  \displaystyle\frac{16}{\epsilon f\alpha_T}\cdot 2\max\left\{ 0, \CORR^{(p)}(i, j) - w_{i, k-1}w_{j, k-1}\beta_T\right\}.
\end{align*}
\State $\mu_k \gets \RisingTide(G_k)$ \Comment{A maximal fractional matching}
\State For each $i$ set \[
w_{i, k}^{(p)} \gets w_{i, k-1} - \sum_{j} \mu_k(i, j).{\hspace{3.3cm}}{\ }
\]
\State Once $(w_{i,k}^{(q)})$ are known for $q\in [n]$, set
\[
w_{i,k} = \left\{\begin{array}{ll}
w_{i,k}^{(i)}  & \mbox{ if $w_{i,k}^{(i)} > w_{\min} \bydef \frac{\sqrt{n\ln n}}{T}$}\\
0           & \mbox{ otherwise.}
\end{array}\right.
\]
\EndFor
\end{algorithmic}
\end{algorithm}

The \WeightUpdate{} algorithm from the perspective of process $p$ is presented in \autoref{alg:weights-update-real-bb}.
We want to ensure that the fractional matchings computed by good processes are numerically very close to each other, 
and for this reason, we use a specific maximal matching
algorithm called \RisingTide{} (\autoref{alg:rising-tide})
that has a continuous Lipschitz property, i.e.,
small perturbations to its input yield bounded perturbations to its output.
Other natural maximal matching algorithms such as \emph{greedy} do not have this property.

\subsubsection{Rising Tide Algorithm}
\label{sect:rising-tide}

The \RisingTide{} algorithm initializes $\mu=0$ and 
simply simulates
the continuous process of increasing 
all $\mu(i,j)$-values in lockstep, so long as 
$i$, $j$, and $(i,j)$ are not saturated.
At the moment one becomes saturated, $\mu(i,j)$ is 
frozen at its current value.

\begin{algorithm}[h]
\caption{$\RisingTide(G=(V,E,c_V,c_E))$}\label{alg:rising-tide}
\begin{algorithmic}[1]
\State $E'\gets \{(i, j)\in E\ |\ c_E(i, j) > 0\}$.
\State $\mu(i, j)\gets 0$ for all $i,j\in V$.
\While{$E' \neq\emptyset$}
    \State Let $\mu_{E'}(i, j) = \begin{cases} 1 & \text{ if } (i, j)\in E'\\ 0 & \text{ otherwise.}\end{cases}$.
    \State Choose maximum $\epsilon > 0$ such that $\mu'=\mu+\epsilon \mu_{E'}$ is a feasible fractional matching.
    \State Set $\mu\gets \mu'$.
    \State $E' \gets E' - \{(i,j) \mid \mbox{$i$ or $j$ or $(i,j)$ is saturated}\}$\Comment{$\mu(i,j)$ cannot increase}\label{line:remove-edges}
\EndWhile
\State \Return $\mu$.
\end{algorithmic}
\end{algorithm}

\begin{lemma}\label{lem:rising-tide-correctness}
\RisingTide{} (\autoref{alg:rising-tide}) correctly 
returns a maximal fractional matching.
\end{lemma}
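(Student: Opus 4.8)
The plan is to check three things in order: the while loop terminates, the returned $\mu$ is a feasible fractional matching, and $\mu$ is maximal. The first two are essentially immediate from the construction, so I would dispatch them quickly and spend the bulk of the argument on maximality.

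For termination I would first establish an invariant: at the start of each pass through the loop, every edge $(i,j)\in E'$ has strict slack in all three relevant constraints — $\mu(i,j)<c_E(i,j)$ (otherwise it would have been deleted by a previous execution of Line~\ref{line:remove-edges}), and neither $i$ nor $j$ is saturated (otherwise $(i,j)$ would have been deleted). Hence the maximum $\epsilon$ chosen in Line~5 is strictly positive, and it is finite since pushing any $\mu(i,j)$, $(i,j)\in E'$, past $c_E(i,j)<\infty$ would violate feasibility. At the chosen $\mu'=\mu+\epsilon\mu_{E'}$ some constraint is tight, so Line~\ref{line:remove-edges} removes at least one edge; thus $|E'|$ strictly decreases each iteration and the loop halts after at most $|E|$ iterations (with $\mu\equiv 0$ returned at once if $E'=\emptyset$ initially). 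Feasibility of the output then follows because $\mu$ starts at the feasible matching $0$ and each iteration overwrites it with a $\mu'$ that was explicitly required to be feasible; I would note that the ``self-loop counts once'' convention is applied consistently both in the feasibility test defining $\epsilon$ and in the saturation test of Line~\ref{line:remove-edges}.

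For maximality, suppose toward a contradiction that some feasible $\mu'$ strictly dominates the output $\mu$: $\mu'(e)\ge\mu(e)$ for every $e$ and $\mu'(i,j)>\mu(i,j)$ for some edge $(i,j)$. Feasibility forces $c_E(i,j)\ge\mu'(i,j)>0$, so $(i,j)\in E'$ at the start and therefore $(i,j)$ is eventually deleted from $E'$ (which is empty at termination). Looking at the iteration in which $(i,j)$ is deleted, Line~\ref{line:remove-edges} gives two cases. If $(i,j)$ itself was saturated then $\mu(i,j)=c_E(i,j)$, contradicting $\mu'(i,j)>\mu(i,j)=c_E(i,j)$. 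Otherwise an endpoint, say $i$, was saturated, i.e.\ $\sum_k\mu(i,k)=c_V(i)$ at that moment. Here I would invoke the monotonicity invariant that no coordinate of $\mu$ ever decreases together with feasibility (the saturation level can never exceed $c_V(i)$), which pins $\sum_k\mu(i,k)=c_V(i)$ for the final $\mu$ as well. Then $\sum_k\mu'(i,k)\ge \mu'(i,j)+\sum_{k\ne j}\mu(i,k) > \sum_k\mu(i,k)=c_V(i)$, contradicting feasibility of $\mu'$. Hence no dominating $\mu'$ exists.

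The only genuinely delicate point — and the step I would be most careful about — is the bookkeeping in the saturated-endpoint case: making precise that ``$i$ is saturated at the deletion step'' implies ``$\sum_k\mu(i,k)=c_V(i)$ for the \emph{returned} $\mu$''. This rests on explicitly recording, before the maximality argument, the invariant that every coordinate $\mu(i,j)$ is non-decreasing across iterations (it is only ever modified by adding a non-negative amount) and that each intermediate $\mu$ is feasible; together these squeeze the saturation level of an already-saturated vertex to stay exactly $c_V(i)$. Everything else is routine.
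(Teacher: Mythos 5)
Your proof is correct; the paper itself dismisses this lemma with the single word ``Obvious,'' so there is no argument in the paper to compare against. Your filling-in is the natural one: the monotonicity-plus-feasibility invariant that pins a saturated vertex at exactly $c_V(i)$ through to termination is indeed the one point worth writing down, and you handle the edge-saturated versus endpoint-saturated cases and the self-loop convention cleanly.
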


\begin{proof}
Obvious.
\end{proof}

Recall that $c_V(i)$ is initialized to be the (old) weight $w_i$
and the new weight is set to be $c_V(i)-\sum_j \mu(i,j)$.  We are mainly
interested in differences in the new weight vector computed by processes
that start from slightly different graphs $G,H$.  
\autoref{lem:rising-tide-output} bounds these output differences
in therms of their input differences.

\begin{lemma}[Rising Tide Output]\label{lem:rising-tide-output}
Let $G=(V, E, c_V^G, c_E^G)$ and $H=(V, E, c_V^H, c_E^H)$ be two capacitated graphs, 
which differ by $\eta_E = \sum_{i, j} |c_E^G(i, j) - c_E^H(i, j)|$ 
in their edge capacities
and 
$\eta_V = \sum_{i} |c_V^G(i) - c_V^H(i)|$ 
in their vertex capacities.
Let $\mu_G$ and $\mu_H$ be the fractional matching computed by \RisingTide{} (\autoref{alg:rising-tide}) 
on $G$ and $H$ respectively.
Then: 
\[
\sum_{i} \left| \left(c_V^G(i) -  \sum_j \mu_G(i, j)\right) - \left(c_V^H(i) - \sum_j \mu_H(i, j)\right)\right| \le \eta_V + 2\eta_E.
\]
\end{lemma}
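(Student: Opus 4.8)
The plan is to analyze the continuous ``rising tide'' process that \RisingTide{} simulates, running it on $G$ and $H$ in parallel against one shared clock $\tau\ge 0$. At time $\tau$ every edge $e$ that is still active has $\mu^\tau(e)=\tau$; an edge freezes the first instant it, or one of its endpoints, becomes saturated, and once frozen stays frozen. Hence the residuals $r^\tau(e)=c_E(e)-\mu^\tau(e)$ and the slacks $s^\tau(i)=c_V(i)-\sum_j\mu^\tau(i,j)$ are nonnegative and nonincreasing, and $e=(i,j)$ is active at time $\tau$ iff $s^\tau(i),s^\tau(j),r^\tau(e)>0$. Writing $Q(\tau)=\sum_i\bigl|s_G^\tau(i)-s_H^\tau(i)\bigr|$, the quantity the lemma asks us to bound is $Q(\tau^\ast)$ for any $\tau^\ast$ after both executions have terminated, while $Q(0)=\sum_i|c_V^G(i)-c_V^H(i)|=\eta_V$ since all $\mu$-values start at $0$. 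So it suffices to show $Q(\tau^\ast)-Q(0)\le 2\eta_E$.

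I would get this from a derivative estimate. Off the finitely many times at which an edge freezes, the active sets $A_G(\tau),A_H(\tau)$ are locally constant and $\tfrac{d}{d\tau}s_G^\tau(i)=-d_G(i)$, where $d_G(i)$ is the number of edges of $A_G(\tau)$ incident to $i$ (a self-loop counted once), so
\[
\frac{dQ}{d\tau}=\sum_{e}\bigl([e\in A_H(\tau)]-[e\in A_G(\tau)]\bigr)\sum_{i\in e}\sgn\bigl(s_G^\tau(i)-s_H^\tau(i)\bigr).
\]
Edges active in both executions, or in neither, contribute $0$. Consider an edge $e=(i,j)$ active in $G$ only; it has already frozen in $H$, so by monotonicity one of two things holds at time $\tau$: either (a) some endpoint $i$ has $s_H^\tau(i)=0$, whence $\sgn(s_G^\tau(i)-s_H^\tau(i))=+1$ and this edge's term $-\sum_{i\in e}\sgn(\cdot)$ is $\le 0$; or (b) both endpoints still have positive $H$-slack, so $e$ froze in $H$ purely because $r_H^\tau(e)=0$, i.e.\ $\mu_H^\tau(e)=c_E^H(e)$. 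Case (b) --- call it the \emph{bad window} of $e$ --- is the only source of growth in $Q$, and there it contributes at most $2$ ($1$ for a self-loop).

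It remains to bound the total length of each edge's bad window. When $e$ first enters its bad window we have $\mu_H^\tau(e)=c_E^H(e)$ while $e$ is still active in $G$, so $\tau=\mu_G^\tau(e)<c_E^G(e)$; thus the bad window is contained in $[c_E^H(e),c_E^G(e))$, has length $\le c_E^G(e)-c_E^H(e)\le|c_E^G(e)-c_E^H(e)|$, and in particular can only occur when $c_E^G(e)>c_E^H(e)$. It also closes permanently --- either $e$ freezes in $G$ (frozen thereafter in both), or an $H$-endpoint saturates (saturated thereafter in $H$) --- so each edge has at most one bad window in this direction, and by the sign condition $c_E^G(e)>c_E^H(e)$ no edge has bad windows in both directions. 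Integrating $dQ/d\tau$ and summing over edges gives $Q(\tau^\ast)-Q(0)\le 2\sum_e|c_E^G(e)-c_E^H(e)|\le 2\eta_E$, which together with $Q(0)=\eta_V$ proves the lemma.

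The heart of the argument is the second and third paragraphs: recognizing that the $L_1$ distance of the slack vectors can only grow through edges whose \emph{own} capacities differ (edges that freeze ``on account of a vertex'' are harmless), and charging the total bad duration of each such edge to its own capacity discrepancy. The remaining point --- justifying the continuous picture --- is routine: \RisingTide{}'s rounds are exactly the inter-freeze intervals, $Q$ is continuous and piecewise linear in $\tau$, the derivative identity holds off the finitely many breakpoints, an isolated instant with $s_G^\tau(i)=s_H^\tau(i)$ contributes nothing to the integral, and if $s_G^\tau(i)=s_H^\tau(i)$ on a whole interval then $d_G(i)=d_H(i)$ there; one could equally run the same bookkeeping directly on the discrete rounds.
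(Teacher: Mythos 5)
Your proof is correct, and it takes a genuinely different route from the paper. The paper's argument introduces a \emph{dependency graph} $D_G$ recording which endpoint (or the edge itself) causes each edge to freeze, proves a separate Continuity Lemma (\autoref{lem:continuity-lemma}) to justify an interpolation between $G$ and $H$ one capacity at a time while holding the dependency graph fixed, and then establishes the single-change cases (\autoref{lem:rising-tide-diff-by-one-vertex} and~\autoref{lem:rising-tide-diff-by-one-edge}) by an induction over downward-closed sets of $D_G$, charging each discrepancy along a directed path to the single perturbed vertex or edge. Your argument instead runs the two executions on a shared clock $\tau$, tracks $Q(\tau)=\sum_i|s_G^\tau(i)-s_H^\tau(i)|$, and bounds $Q$'s growth via the derivative identity
\[
\frac{dQ}{d\tau}=\sum_{e}\bigl([e\in A_H(\tau)]-[e\in A_G(\tau)]\bigr)\sum_{i\in e}\sgn\bigl(s_G^\tau(i)-s_H^\tau(i)\bigr),
\]
observing that only edges that freeze in one execution \emph{on account of their own capacity} can increase $Q$, and that each such edge's contribution is confined to an interval of length at most $|c_E^G(e)-c_E^H(e)|$ at rate at most $2$. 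This avoids the dependency-graph machinery, the continuity lemma, the acyclicity perturbation (A3), and the case split into single-vertex and single-edge perturbations; it handles all discrepancies at once, with $Q(0)=\eta_V$ directly giving the vertex-capacity term. Your version is arguably tighter conceptually --- it makes clear that the factor $2$ on $\eta_E$ is because an edge that must freeze on its own capacity can, during its ``bad window,'' drag down both of its endpoints in one execution while leaving them untouched in the other. The one place a careful reader would want more words is the assertion that a bad window closes permanently; it is worth noting explicitly that the two exit events (the edge freezing in the other graph, or an endpoint saturating) are both irreversible because $\mu$ and the saturation status are monotone, which is precisely what makes the bad window a single interval contained in $[\,\min(c_E^G(e),c_E^H(e)),\,\max(c_E^G(e),c_E^H(e))\,]$.
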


See Appendix~\ref{section:rising-tide-proofs} for proof
of \autoref{lem:rising-tide-output}.

\subsection{Error Accumulation and Reaching Agreement}
\label{sect:bounding-error}

\newcommand{\etaP}{\ensuremath{\eta^{(p)}}}

The maximum number of epochs is $\KMAX=2.5f$.
Let $k\in [1,\KMAX]$ be the index of the current epoch, 
and let $w_{i, k-1}$ be the weights that were used in the execution of $\CoinFlip()$ during epoch $k$.
Upon completing epoch $k$, each process $p$ 
applies \autoref{alg:weights-update-real-bb} 
to update the consensus weight vector $(w_{i, k-1})_{i\in [n]}$ 
to produce a local weight vector $(w_{i,k}^{(p)})_{i\in [n]}$,
and then the consensus weight vector $(w_{i,k})_{i\in [n]}$ 
used throughout epoch $k+1$.

\begin{lemma}[Maintaining \autoref{inv:weights}]\label{lem:maintaining-inv}
Suppose for some $\epsilon>0$ that $n=(4+\epsilon)f$, 
$m=\Theta(n/\epsilon^2)$, and $T=\Theta(n^2\log^3 n/\epsilon^2)$.
At any point in epoch $k\in[1,\KMAX]$, with high probability,
\[
\sum_{i\in G} (1 - w_{i, k-1}) \le \sum_{i\in B} (1 - w_{i, k-1}) + \frac{\epsilon^2}{\sqrt{n}}\cdot (k-1).
\]
\end{lemma}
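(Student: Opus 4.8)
The plan is to prove \autoref{lem:maintaining-inv} by induction on the epoch index $k$, with \autoref{inv:weights} being the slightly weaker statement that is actually needed by \autoref{lem:gap-lemma}. The base case $k=1$ is trivial since all weights equal $1$ and both sides of the inequality vanish. For the inductive step, suppose the bound holds at the start of epoch $k$, i.e.\ $\sum_{i\in G}(1-w_{i,k-1}) \le \sum_{i\in B}(1-w_{i,k-1}) + \frac{\epsilon^2}{\sqrt n}(k-1)$; in particular the consensus weights $(w_{i,k-1})$ satisfy \autoref{inv:weights} (using $\KMAX=2.5f$, so $\frac{\epsilon^2}{\sqrt n}(k-1) \le \epsilon^2 f/8$ roughly — I should double-check this constant and possibly weaken $\KMAX$ or tighten the per-epoch error). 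First I would dispose of the trivial case where some process is corrupted during epoch $k$: corrupting a process only moves a term from the $G$-side to the $B$-side, which can only help the inequality, so I may assume $G,B$ are stable throughout the epoch and invoke \autoref{lem:gap-lemma}(3).

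The heart of the argument is to track what \WeightUpdate{} does to the two sides. Consider first the \emph{idealized} update in which every process used the same view, so the excess graph is the true graph $G_k$ with capacities built from the true $\DEV,\CORR$ scores, and let $\mu$ be $\RisingTide(G_k)$. Each edge $(i,j)$ docks $\mu(i,j)$ from both $w_i$ and $w_j$ (a self-loop docks $\mu(i,i)$ once). Since $c_E(i,j)=0$ whenever $i,j\in G$ (Parts 1,2 of the Gap Lemma), every docked unit of weight is matched either within $B$ or across a $G$--$B$ edge; in the latter case the \emph{same} amount is removed from a good and a bad process, preserving the difference $\sum_B(1-w_i) - \sum_G(1-w_i)$ exactly, and in the former case it only increases the $B$-side. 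So the idealized update cannot decrease the slack. I then need to argue the slack does not \emph{grow} too fast either way — actually the lemma only needs an upper bound on $\sum_G(1-w_i) - \sum_B(1-w_i)$, so the within-$B$ and cross edges are both harmless; the only thing to control is the discrepancy between the idealized update and the \emph{actual} update, where process $i$'s new weight is taken from its own view $w_{i,k}^{(i)}$, then possibly rounded to $0$.

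This is where I expect the main obstacle to lie: bounding $\sum_i |w_{i,k}^{(i)} - (w_{i,k}\text{ under idealized update})|$. Here I would use \autoref{lem:rising-tide-output}: process $p$'s excess graph $G_k^{(p)}$ differs from the true $G_k$ in edge capacities by an amount governed by $\sum_{i,j}|\PDEV(i)-\DEV(i)| + \sum_{i,j}|\PCORR(i,j)-\CORR(i,j)|$ and in vertex capacities by $0$ (all processes agree on $(w_{i,k-1})$). By \autoref{thm:blackboard}(2), the historical record seen by $p$ differs from the truth in at most $f$ cells \emph{in total} across the whole epoch, and each such cell perturbs at most one $X_i(t)$ by at most $2\XMAX$, which bounds the perturbation in $\DEV,\CORR$ scores (with the $\frac{16}{\epsilon f\alpha_T}$ normalization factor shrinking it). This yields $\eta_E$ small, hence by \autoref{lem:rising-tide-output} each process's computed weight vector $(w_{i,k}^{(p)})$ is within $O(\cdot)$ in $\ell_1$ of the idealized one, so in particular $|w_{i,k}^{(p)} - w_{i,k}^{(q)}| \le w_{\min}$ pointwise (as claimed in \autoref{sect:weight-update}) — this also justifies that the rounding to $0$ is consistent across processes and contributes at most $n\cdot w_{\min} = \sqrt{n\ln n}$ additional error. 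Assembling: the slack grows by at most the $\ell_1$-error of the update plus the rounding error per epoch, which I need to show is at most $\frac{\epsilon^2}{\sqrt n}$; this is a matter of checking that $T=\Theta(n^2\log^3 n/\epsilon^2)$ and $m=\Theta(n/\epsilon^2)$ make $\frac{16}{\epsilon f \alpha_T}\cdot f \cdot 2\XMAX \cdot$ (number of differing cells, $\le f$) and $n w_{\min}/$ (normalization) small enough. The delicate point is that \autoref{lem:rising-tide-output}'s bound $\eta_V + 2\eta_E$ must be summed correctly — $\eta_V=0$ here, but I must be careful that the \emph{new} $w_{i,k}^{(i)}$ feeding into the next epoch's $c_V$ is what accumulates, so the errors chain additively over $k$ epochs, which is exactly why the bound in the statement is linear in $k-1$.
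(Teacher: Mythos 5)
Your proposal is essentially correct and follows the same route as the paper's proof: induction on $k$, observing that corruptions only shift mass from the $G$-side to the $B$-side; using the Gap Lemma (Parts 1, 2) to argue that a ``reference'' weight update preserves the invariant because good--good edge capacities vanish; then invoking \autoref{lem:rising-tide-output} together with \autoref{thm:blackboard}'s bound of at most $f$ differing cells (each worth at most $2\XMAX$) to bound the $\ell_1$ deviation between the reference vector and each $(w_{i,k}^{(i)})$, plus an $n w_{\min}$ term for rounding. The only structural difference is the choice of reference: you compare every $(w_{i,k}^{(i)})$ to an \emph{idealized} update computed from the true blackboard, whereas the paper fixes an arbitrary good process $p$ and compares to $(w_{i,k}^{(p)})$. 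Both work, and your choice is arguably slightly cleaner because the Gap Lemma is stated directly in terms of the true $\DEV,\CORR$, so the claim that the reference update docks at least as much weight from $B$ as from $G$ is immediate; for the paper's reference $p$ one has to implicitly absorb the (tiny) possibility that $\PDEV(i) > w_i^2\alpha_T$ for a good $i$ into the error budget. Your flagged concern about $\KMAX$ and the constant in the error term is legitimate but harmless: the paper takes $\KMAX = 2.5f$ and shows the accumulated error $\frac{\epsilon^2}{\sqrt n}\cdot 3f \le \frac18\epsilon^2 f$ for $n \ge 576$, which is exactly the constant-checking you anticipated. One small point worth being careful about when you flesh this out: the rising-tide lemma gives an $\ell_1$ bound $\sum_j |w_{j,k}^{(i)} - w_{j,k}^{\text{ideal}}| \le \eta_V + 2\eta_E^{(i)}$ for each fixed $i$, but the quantity you need is $\sum_i |w_{i,k}^{(i)} - w_{i,k}^{\text{ideal}}|$, which mixes the diagonal entries across $n$ different applications; you should (as the paper does) extract the pointwise bound $|w_{i,k}^{(i)} - w_{i,k}^{\text{ideal}}| \le \eta_V + 2\eta_E^{(i)} \le w_{\min}$ and sum over $i$, rather than sum the $\ell_1$ bounds directly.
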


\begin{proof}
We prove by induction on $k$.
For the base case $k=1$ all the weights are $1$ so \autoref{lem:maintaining-inv} clearly holds.
We will now prove that if the claim holds for $k$, it holds for $k+1$ as well.
Fix any good process $p$.  The vector $(w_{i,k}^{(p)})$ is derived from $(w_{i,k-1})$ 
by deducting at least as much weight from bad processes as from good processes, with high probability,
and $(w_{i,k})$ is derived from $(w_{i,k}^{(q)})_{q\in [n], i\in [n]}$ by setting $w_{i,k}=w_{i,k}^{(i)}$
and rounding down to 0 if it is at most $w_{\min}$.  
Thus, by the inductive hypothesis,

\begin{align*}
    \sum_{i\in G} (1 - w_{i, k}^{(p)}) \le \sum_{i\in B} (1 - w_{i, k}^{(p)}) + \frac{\epsilon^2}{\sqrt{n}}\cdot (k-1) 
\end{align*}

Therefore,

\begin{align*}
\sum_{i\in G} (1 - w_{i, k}) &\le \sum_{i\in B} (1 - w_{i, k}) + \frac{\epsilon^2}{\sqrt{n}}\cdot (k-1) + \sum_{i\in [n]} |w_{i, k}^{(p)} - w_{i, k}^{(i)}| + w_{\min{}} n_0,
\end{align*}
where $n_0$ is the number of processes whose weight is rounded down to $0$ after epoch $k$.

Hence, it suffices to show that $\sum_{i\in [n]} |w_{i, k+1}^{(p)} - w_{i, k+1}^{(i)}| + w_{\min{}}n_0\le \epsilon^2/\sqrt{n}$.
By \autoref{lem:rising-tide-output}, the computed weight difference between process $p$ and any process $q$ 
can be bounded by twice the sum of all edge capacity differences.
According to \autoref{alg:weights-update-real-bb}, the edge capacities differ 
due to underlying disagreement on the $\DEV(i)$ and $\CORR(i, j)$ values. Thus,

\begin{align*}
|w_{q, k}^{(p)} - w_{q, k}^{(q)}| &\le 2\cdot \frac{16}{\epsilon f\alpha_T}\left(\sum_i \left|\DEV^{(p)}(i) - \DEV^{(q)}(i)\right| + \sum_{i\neq j} \left| \CORR^{(p)}(i, j) - \CORR^{(q)}(i, j) \right|\right)\\
\intertext{
By \autoref{thm:blackboard}, two processes may only disagree in up to $f$ cells of the blackboards
$(\BB_1,\ldots,\BB_t)$.  Since the sum of each column in each blackboard is bounded by $\XMAX$,
we have $|\DEV^{(p)}(i)-\DEV^{(q)}(i)| < 2\XMAX$ for at most $f$ values of $i$,
and $|\PCORR(i, j)-\CORR^{(q)}(i, j)| < 2\XMAX$ 
for at most $nf$ pairs $i\in B, j\in(G\cup B)$.
Continuing,}
&\le 2\cdot \frac{16}{\epsilon f\alpha_T}\bigg(f\cdot 2\XMAX + nf\cdot 2\XMAX\bigg)
\\
&\le \frac{64 (n+1)\XMAX}{\epsilon mT} \tag{$\alpha_T\ge mT$}\\
&\le \frac{ \sqrt{n\ln n}}{T}\tag{using $m=\Omega(n/\epsilon^2)$}\\
&=w_{\min{}}
\end{align*}
Now the inductive step for $k$ holds by noticing that
\begin{align*}
\sum_{i\in [n]}|w_{i, k}^{(p)} - w_{i, k}^{(i)}|  + w_{\min{}}n_0 &\le  2w_{\min{}}n\\
&\le \frac{\epsilon^2}{n^{1.5}\log^{2.5} n}\cdot n  \tag{using $T=\Omega(n^2\log^3 n/\epsilon^2)$}\\
&< \frac{\epsilon^2}{\sqrt{n}}.
\end{align*}
Therefore, with $\KMAX=2.5f$ we obtain \autoref{inv:weights}. That is, for any weight vector $(w_i)$ that are used on a blackboard,
\begin{align*}
\sum_{i\in G}(1-w_i) &\le \sum_{i\in B}(1 - w_i) + \frac{\epsilon^2}{\sqrt{n}} \cdot 3f\\
&\le \sum_{i\in B}(1 - w_i) + \frac18 \epsilon^2 f. \tag{whenever $n\ge 576$}\\
\end{align*}

Note that Invariant~\autoref{inv:weights} is also preserved whenever a process is corrupted, transferring it from $G$ to $B$.
\end{proof}

The next observation and \autoref{lem:weight-update-progress} shows that the weight 
of every bad process becomes $0$ after running $\KMAX$ epochs of \WeightUpdate{}s 
without reaching agreement.

\begin{observation}\label{obs:weights-closed-to-zero}
For any $i$ and $k$, if there exists process $p$ such that $w_{i, k}^{(p)}=0$, then $w_{i, k} = 0$.
\end{observation}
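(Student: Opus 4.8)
The plan is to reduce the claim to the pointwise weight‑disagreement bound that is already established inside the proof of \autoref{lem:maintaining-inv}. Recall that the consensus weight after epoch $k$ is, by the rounding rule in \autoref{alg:weights-update-real-bb}, drawn from process $i$'s \emph{own} local view: $w_{i,k} = w_{i,k}^{(i)}$ when $w_{i,k}^{(i)} > w_{\min}$, and $w_{i,k}=0$ otherwise. Hence it suffices to show that, under the hypothesis $w_{i,k}^{(p)}=0$ for some process $p$, process $i$ computes $w_{i,k}^{(i)} \le w_{\min}$ for itself.

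First I would dispose of the degenerate case $w_{i,k-1}=0$: then $c_V(i)=0$ in the excess graph $G_k^{(q)}$ of every process $q$, so $\RisingTide$ leaves $\mu_k^{(q)}(i,\cdot)\equiv 0$ and $w_{i,k}^{(q)}=0$ for all $q$, giving $w_{i,k}=0$ immediately. Otherwise $w_{i,k-1}>0$ and the matchings $\mu_k^{(p)}$ and $\mu_k^{(i)}$ returned by $\RisingTide$ are well defined. The excess graphs $G_k^{(p)}$ and $G_k^{(i)}$ share the same vertex capacities $(w_{j,k-1})_{j\in[n]}$ and differ only in their edge capacities, and by \autoref{thm:blackboard} the two views of the blackboard history disagree in at most $f$ cells — affecting $\DEV$ in at most $f$ coordinates and $\CORR$ in at most $nf$ pairs, each by less than $2\XMAX$. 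Feeding these bounds into \autoref{lem:rising-tide-output} exactly as in the proof of \autoref{lem:maintaining-inv} (with $\eta_V=0$) yields
\[
\bigl| w_{i,k}^{(p)} - w_{i,k}^{(i)} \bigr| \;\le\; \sum_{j\in[n]} \bigl| w_{j,k}^{(p)} - w_{j,k}^{(i)} \bigr| \;\le\; \frac{64(n+1)\XMAX}{\epsilon mT} \;\le\; \frac{\sqrt{n\ln n}}{T} \;=\; w_{\min}.
\]
Combining this with $w_{i,k}^{(p)}=0$ gives $w_{i,k}^{(i)} \le w_{\min}$, and therefore $w_{i,k}=0$.

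There is essentially no real obstacle: the entire content is the Lipschitz estimate of \autoref{lem:rising-tide-output} together with the blackboard‑consistency guarantee of \autoref{thm:blackboard}, both of which I would simply cite. The only point requiring a moment's care is that the disagreement bound must apply to process $i$'s \emph{committed} view even if $i$ is faulty; this is fine because \autoref{thm:blackboard}(2)–(3) ensure that any process participating in $\BB_{t}$ has committed to a history $\BB^{(i,t-1)}$ within $f$ cells of the truth, which is precisely what makes the common‑knowledge weight $w_i^{(i)}$ well defined and close to every good process's estimate. If one wanted the statement to be self‑contained rather than leaning on the proof of \autoref{lem:maintaining-inv}, the extra work is just re‑deriving the displayed chain of inequalities, specialized to the pair $(p,i)$.
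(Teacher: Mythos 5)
Your proposal is correct and follows the same route as the paper: the observation is a direct corollary of the pointwise bound $|w_{i,k}^{(p)} - w_{i,k}^{(i)}| \le w_{\min}$ established inside the proof of \autoref{lem:maintaining-inv} (itself derived from \autoref{lem:rising-tide-output} and \autoref{thm:blackboard}), together with the rounding rule in \autoref{alg:weights-update-real-bb}. The separate treatment of the degenerate case $w_{i,k-1}=0$ and the remark about faulty $i$ are harmless but unnecessary, since the Lipschitz bound applies uniformly.
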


\begin{proof}
In the proof of \autoref{lem:maintaining-inv} it was shown that 
$|w_{i, k}^{(p)} - w_{i, k}^{(i)}|\le \sqrt{n\ln n}/T = w_{\min{}}$,
hence if $w_{i,k}^{(p)}=0$, $w_{i,k}$ is rounded down to 0. 
See \autoref{alg:weights-update-real-bb}.
\end{proof}

\begin{lemma}
\label{lem:weight-update-progress}
If agreement has not been reached after $\KMAX=2.5f$ epochs, 
all bad processes have weight $0$, with high probability.
\end{lemma}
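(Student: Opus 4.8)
The plan is a potential argument over the $\KMAX=2.5f$ epochs, where the potential measures how much weight still sits on bad processes, and the goal is to show that every ``clean'' epoch---one in which the adversary corrupts no new process---makes quantifiable progress toward driving it to zero. First I would do the bookkeeping: since at most $f$ processes are corrupted in total, at most $f$ epochs can contain a corruption, so at least $\KMAX-f=1.5f$ of the first $\KMAX$ epochs are clean. Throughout I condition on the high-probability events of \autoref{lem:gap-lemma} and \autoref{lem:maintaining-inv}.

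Next I would apply Part 3 of the Gap Lemma (\autoref{lem:gap-lemma}, equivalently \autoref{lem:bad-process-corr-dev}). In a clean epoch all three hypotheses hold: the game has not ended (agreement is not reached within $\KMAX$ epochs, by assumption), \autoref{inv:weights} holds with high probability by \autoref{lem:maintaining-inv}, and no process was added to $B$. Hence the total excess over bad processes and bad-incident pairs is at least $\frac{\epsilon}{16}f\alpha_T$. Since the excess-graph capacities in \autoref{alg:weights-update-real-bb} are exactly these excesses scaled by $\frac{16}{\epsilon f\alpha_T}$ (the ``$2$'' on off-diagonal edges absorbing the two ordered pairs $(i,j),(j,i)$), this says precisely that the total edge capacity of $G_k$ is at least $1$ in every clean epoch.

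Now I would use two structural facts about \RisingTide{}. First, by Parts 1 and 2 of the Gap Lemma every positive-capacity edge touches $B$, and since $\mu_k(i,j)$ is subtracted from \emph{both} endpoints of $(i,j)$, the weight stripped from good processes in an epoch never exceeds that stripped from bad processes. Second, $\mu_k$ is a \emph{maximal} fractional matching in a graph of total edge capacity $\ge 1$ all of whose edges touch $B$; this forces a dichotomy in each clean epoch---either a substantial amount of weight is removed from $B$, or some bad process is \emph{saturated} (its entire remaining weight removed, hence zeroed after rounding). To make the dichotomy quantitative I would use the crude bounds $\DEV(i)\le w_i^2\,T\XMAX^2$ and $\CORR(i,j)\le w_iw_j\,T\XMAX^2$ together with $\alpha_T=\Theta(mT)$ and $\XMAX^2=cm\ln n$: these imply that a bad process carrying a large share of the excess must itself carry non-negligible weight, so when a clean epoch's excess is concentrated on a few bad processes those processes get zeroed, and otherwise the matching removes a bounded-below amount of bad weight. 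Combining this with the facts that the total bad weight ever available is at most $f$ and there are at most $f$ bad processes to zero, the $\ge 1.5f$ clean epochs among the first $2.5f$ suffice to leave no bad process with positive weight.

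Finally I would reconcile views. The argument above is phrased in terms of the ``true'' $\DEV,\CORR$, but \WeightUpdate{} uses each process's own view; by \autoref{thm:blackboard} (views differ in at most $f$ cells, each bounded by $\XMAX$) and \autoref{lem:rising-tide-output} (the Lipschitz property of \RisingTide), $|w^{(p)}_{i,k}-w^{(q)}_{i,k}|\le w_{\min}$, which is exactly what \autoref{obs:weights-closed-to-zero} needs: once any one process computes weight $0$ for a bad process, the common weight is $0$. So it is enough that the weight a fixed reference good process computes for each bad process reaches $0$, which the matching argument provides. The main obstacle I expect is the quantitative matching dichotomy in the previous paragraph: choosing the right potential---some combination of $\sum_{i\in B}w_i$ and the count of bad processes with positive weight---and verifying it drops by enough per clean epoch to finish within $1.5f$ of them, i.e., pinning down why the constant $2.5$ suffices, while handling the case of an epoch whose excess is spread thinly over many low-weight bad processes.
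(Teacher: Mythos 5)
You have the right skeleton --- at most $f$ epochs contain a corruption, so at least $1.5f$ of the $\KMAX=2.5f$ epochs are clean; in each clean epoch Part~3 of the Gap Lemma gives total edge capacity $\geq 1$ on the subgraph induced by $B$; and the conclusion should follow from a dichotomy between zeroing a bad vertex and stripping bad weight. But the quantitative core of that dichotomy, which you explicitly flag as the anticipated obstacle, is in fact missing, and the route you sketch for filling it (the crude bounds $\DEV(i)\le w_i^2 T\XMAX^2$, $\CORR(i,j)\le w_iw_j T\XMAX^2$ to argue that a bad process carrying a large share of the excess must itself carry non-negligible weight, then splitting into ``concentrated'' vs.\ ``spread'' cases) is not the right tool and is not needed.

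The paper's quantification is immediate from the maximality of $\mu_k$. In a maximal fractional matching, for every $B$-internal edge $(i,j)$ with $c_E(i,j)>0$, at least one of $i$, $j$, or the edge itself must be saturated. If no vertex in $B$ is saturated in the current epoch, then every $B$-internal edge is saturated, so $\sum_{\{i,j\}\subset B}\mu_k(i,j)=\sum_{\{i,j\}\subset B}c_E(i,j)\geq 1$; since each off-diagonal $\mu_k(i,j)$ is subtracted from both $w_i$ and $w_j$, the total weight on $B$ drops by at least $2$. Thus each clean epoch either zeroes some $i\in B$ (via saturation and \autoref{obs:weights-closed-to-zero}; at most $f$ such epochs since $|B|\le f$), or strips at least $2$ units of bad weight (at most $f/2$ such epochs since the total bad weight starts at $\le f$); together with the $\le f$ dirty epochs this exactly fills the $\KMAX=2.5f$ budget. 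No bound relating $\DEV$ or $\CORR$ to the weights is required, and there is no potential function beyond the two straightforward counts. Your reconciliation-of-views paragraph (via \autoref{lem:rising-tide-output} and \autoref{obs:weights-closed-to-zero}) is on target and matches what the paper uses.
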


\begin{proof}
There are at most $f$ epochs in which the adversary corrupts at least one process.
We argue below that after all other epochs, in the call to 
\WeightUpdate, the total edge capacity of the graph induced by $B$ is at least 1.
This implies that in each iteration of \WeightUpdate,
either some $i\in B$ with $c_V(i)=w_i>w_{\min{}}$ becomes
saturated (and thereafter $w_i=0$ by \autoref{obs:weights-closed-to-zero}), 
or the total weight of all processes in $B$ 
drops by at least 2. 
The first case can occur at most $f$ 
times and the second at most $f/2$, 
hence after $\KMAX=2.5f$ epochs, 
all bad players' weights are zero, with high probability.

We now prove that the total edge capacity is at least 1.  Recall that each edge $(i,j)$, $i\neq j$, represents the two correlation scores 
$\CORR(i,j)$ and $\CORR(j,i)$. 
Hence, by \autoref{lem:bad-process-corr-dev}, 
the sum of edge capacities on $B$ is:
\begin{align*}
\sum_{\{i, j\}\subset B} c_E(i, j)
    &= \frac{16}{\epsilon f\alpha_T}\left( \sum_{i\in B} \max\{0, \DEV(i) - w^2_{i, k} \alpha_T \}
    + \sum_{(i\neq j)\in B^2} \max\{0, \CORR(i, j) - w_{i, k}w_{j, k}\beta_T\} \right) \\
    &\ge \frac{16}{\epsilon f\alpha_T} \left( \frac{\epsilon}{16} f\alpha_T\right)  \tag{by \autoref{lem:gap-lemma}}\\
    &\ge 1.\qedhere
\end{align*}
\end{proof}

\begin{lemma}\label{lem:good-processes-agree}
Suppose \autoref{inv:weights} holds. 
In any iteration in which the bad processes have zero weights, 
the good processes agree on the outcome of the coin flip, with constant probability.
\end{lemma}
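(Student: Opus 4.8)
The plan is to show that once all bad processes have weight $0$, the outcome $\sgn\!\left(\sum_{i\in[n]} w_i X_i^{(p)}(t)\right)$ is determined entirely by the good processes' coin flips (up to a bounded view discrepancy), and that a fresh batch of $\Theta(m(n-2f))$ fair coin flips produces a weighted sum whose magnitude exceeds the worst-case disagreement $f$ with constant probability. Combined with the fact that the sign of a symmetric sum is $+1$ or $-1$ each with probability close to $1/2$, this gives that with constant probability \emph{all} good $p$ see the same sign. First, I would fix an iteration $t$ in which $w_i=0$ for all $i\in B$. Then $\sum_{i\in[n]} w_i X_i^{(p)}(t) = \sum_{i\in G} w_i X_i^{(p)}(t)$, and by \autoref{thm:blackboard} this differs from the ``true'' weighted sum $S_G(t)=\sum_{i\in G} w_i X_i(t)$ by at most $\sum_i w_i|X_i^{(p)}(t)-X_i(t)|\le f$ (exactly as argued in the proof of \autoref{lem:bad-process-large-coin}). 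Hence if $|S_G(t)|> f$ then every good process $p$ computes the same sign, namely $\sgn(S_G(t))$.

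Second, I would lower bound $\Pr[\,|S_G(t)|>f\,]$. Since $\BB_t$ is complete only when it has $n-f$ full columns, at least $m(n-2f)$ fair coin flips by good processes contribute to $S_G(t)$; write $S_G(t)=\sum_{i\in\hat G}\sum_{r=1}^m w_i\delta_{i,r}+(\text{rest})$ where $\hat G$ is any $n-2f$ good processes whose columns are full and the $\delta_{i,r}$ are i.i.d. uniform $\pm1$. By \autoref{lem:weight-bound}(1), $\sum_{i\in\hat G}w_i^2\ge(1-\max\{\rho/2,\epsilon/8\})^2(n-2f)=\Omega(n)$, so $\E[S_G(t)^2]\ge m\sum_{i\in\hat G}w_i^2=\Omega(mn)=\Omega(n^2/\epsilon^2)$ since $m=\Theta(n/\epsilon^2)$. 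By the Paley--Zygmund inequality (using that the fourth moment of a weighted sum of bounded independent symmetric variables is $O(\E[S_G(t)^2]^2)$), $\Pr[\,S_G(t)^2\ge c_1 m n\,]\ge c_2$ for absolute constants $c_1,c_2>0$. Since $c_1 mn=\Theta(n^2/\epsilon^2)\gg f^2$, on this event $|S_G(t)|>f$. Actually a cleaner route avoids Paley--Zygmund: conditioned on the ``rest'' term and on the realized multiset of contributing columns, $S_G(t)$ is a sum of independent bounded symmetric terms with variance $\Omega(mn)\gg f^2$, so by anticoncentration (e.g., the Berry--Esseen theorem or a direct Littlewood--Offord-type bound) $\Pr[\,|S_G(t)|>f\,]\ge c_2$ uniformly, and on that event the (conditional) symmetry gives $\sgn(S_G(t))=+1$ w.p.\ $\ge c_2/2$ and $-1$ w.p.\ $\ge c_2/2$. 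Either way we get $\Pr[\,|S_G(t)|>f\,]=\Omega(1)$, and hence all good processes agree on the outcome with constant probability.

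The main obstacle is the now-familiar subtlety flagged in the text: the adversarial scheduler can correlate which good coins are flipped, when processes get corrupted mid-iteration, and the order of contributions, so $S_G(t)$ is \emph{not} a clean binomial and one cannot condition on ``natural'' events. I would handle this exactly as in \autoref{lem:real-blackboard-min-obj-inst} and \autoref{lem:sum-squared-bound}: argue the anticoncentration bound holds for the \emph{worst-case} adversarial strategy by a martingale/stopping-time argument — reveal the good coin flips one at a time in the adversary's chosen order, note that each increment $\pm w_i$ is independent and symmetric conditioned on the past, and apply an anticoncentration bound (or second-and-fourth-moment estimate) that is robust to the stopping time, using that the adversary is forced to allow $\ge m(n-2f)$ flips so the accumulated variance is always $\Omega(mn)$. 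The corruption-during-the-iteration case only removes a suffix of some column's flips, which can be absorbed into the ``rest'' term without affecting the $\Omega(mn)$ lower bound on the variance of the remaining good flips. Finally I would remark that ``constant probability'' here is exactly the event $\mathcal{E}$ from the analysis of $\BrachaAgreement$, so this lemma closes the loop showing the overall protocol terminates in expected $O(1)$ iterations per epoch once the bad weight is zero.
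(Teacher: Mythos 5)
Your approach matches the paper's almost exactly: bound the inter-process view discrepancy by $f$ via \autoref{thm:blackboard}, observe that the true weighted good-player sum has standard deviation $\Theta(\sqrt{mn})\gg f$ because $m=\Omega(n/\epsilon^2)$ and the good players retain $\Omega(n)$ weight by \autoref{inv:weights}, and conclude that with constant probability $|S_G(t)|>f$ so all good processes read the same sign. The paper's own proof is a three-line sketch that simply asserts ``with constant probability $|S|$ is larger than its standard deviation'' and does not address the adversarial-stopping issue you raise; you are right to flag that Paley--Zygmund does not immediately apply (the fourth moment of a stopped martingale is only bounded by $\tilde O((mn)^2)$ via \autoref{lem:sum-squared-bound}, which would give $\Omega(1/\log^2 n)$ rather than a constant) and that the correct route is to treat $S_r$ as a martingale and argue that after the forced $m(n-2f)$ flips the adversary has only $2fm$ residual steps, too few to return to $[-f,f]$ from a typical $\Theta(\sqrt{mn})$ excursion with probability $1-o(1)$ --- this is precisely where $f<n/4$ (so $n-2f>2f$) enters. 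Neither you nor the paper carries this stopped-martingale anticoncentration argument out in full, but your identification of the gap and the suggested repair (revealing flips one at a time with symmetric increments conditioned on the past, and using that the residual time budget is a constant fraction of the accumulated variance) is the right shape of a complete proof, and is more forthright than the paper's version.
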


\begin{proof}
Let $S=\sum_i w_iX_i(t)$ be the weighted sum of the players.
Through its scheduling power, 
the adversary may still be able to create disagreements between good players on the outcome
of the coin-flip if $S\in [-f,f]$. 
Moreover, good process still possess $\Omega(n)$ total weight by \autoref{inv:weights}.
With constant probability, 
$|S|$ is larger than its standard deviation, namely $\Theta(\sqrt{mn})$, 
which is much larger than $f$ as $m = \Omega(n/\epsilon^2)$.  Thus, with constant
probability all good players agree on the outcome.
\end{proof}

\begin{theorem}
Suppose $n=(4+\epsilon)f$ where $\epsilon>0$, $m=\Theta(n/\epsilon^2)$, and 
$T=\Theta(n^2\log^3 n/\epsilon^2)$.
Using the implementation of $\CoinFlip()$ from \autoref{sect:iterated-coin-flipping},
\BrachaAgreement{} solves Byzantine agreement with 
probability 1 in the full information, asynchronous model against an adaptive adversary. 
In expectation the total communication time is $\tilde{O}((n/\epsilon)^4)$.
The local computation at each process is polynomial in $n$.
\end{theorem}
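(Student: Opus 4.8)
The theorem is the capstone that glues together all the machinery from \autoref{sect:iterated-coin-flipping}: the Gap Lemma, the weight-update scheme, the error-accumulation bound, and the progress bound on blacklisting. The plan is to argue in three stages: (i) \emph{correctness} (agreement and validity hold with probability~1), (ii) \emph{termination} (the protocol halts with probability~1), and (iii) \emph{the round bound} (expected communication time $\tilde{O}((n/\epsilon)^4)$).

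\emph{Correctness.} This part is essentially inherited from Bracha's protocol. I would invoke the correctness argument already sketched after \autoref{alg:reliable-broadcast}'s companion algorithm \BrachaAgreement: once some good process \textbf{decides}~$v$, every good process sets $v_q:=v$ in the next iteration (via the validation mechanism and \autoref{thm:reliable-broadcast}), so agreement is preserved regardless of what $\CoinFlip()$ returns; and validity holds because if all good processes start with the same $v$ they reach the ``$(n+f+1)/2$ good processes agree on $v$'' configuration immediately. Crucially, $\CoinFlip()$ only affects \emph{efficiency}, never safety, so no probabilistic reasoning is needed here. I would also note that \autoref{inv:weights} (established by \autoref{lem:maintaining-inv}) guarantees good processes retain $\Omega(n)$ total weight throughout, so $\CoinFlip()$ always returns a well-defined value in $\{-1,1\}$ for every process.

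\emph{Termination and the round bound.} This is where the real work is. I would partition the execution into $\KMAX = 2.5f = O(n)$ epochs of $T = \Theta(n^2\log^3 n/\epsilon^2)$ iterations each. Within any epoch, one of two things happens: either the game ends ``naturally'' in some iteration---meaning all good processes compute the same $\CoinFlip()$ output equal to $-\sigma(t)$---in which case by the Bracha analysis all good processes \textbf{decide} in the following iteration; or the game does not end, in which case the Gap Lemma (\autoref{lem:gap-lemma}, Part~3) applies and \WeightUpdate{} reduces weights so as to provably erode the bad players' total weight. By \autoref{lem:weight-update-progress}, after $\KMAX$ epochs without agreement, all bad processes have weight~$0$ w.h.p. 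Once that happens, \autoref{lem:good-processes-agree} says each iteration ends naturally with constant probability, so agreement follows within $O(\log n)$ further iterations w.h.p., and within $O(1)$ iterations in expectation (conditioned on this regime). Summing: the number of iterations until agreement is at most $\KMAX \cdot T + O(1) = O(n)\cdot \tilde{O}(n^2/\epsilon^2) = \tilde O(n^3/\epsilon^2)$ in the worst case (w.h.p.), and since each iteration consists of $O(1)$ blackboards each requiring $O(n/\epsilon^2)$ reliable broadcasts of latency $O(1)$ each, wait---I need to account for the latency of constructing one blackboard. Each blackboard $\BB_t$ has $m = \Theta(n/\epsilon^2)$ rows, each row written via a reliable broadcast of latency $O(1)$, so one blackboard costs $O(n/\epsilon^2)$ latency; hence total latency is $\tilde O(n^3/\epsilon^2)\cdot O(n/\epsilon^2) = \tilde O(n^4/\epsilon^4) = \tilde O((n/\epsilon)^4)$. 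For the \emph{expected} bound I would argue that the w.h.p.\ event (Gap Lemma, \autoref{lem:maintaining-inv}, \autoref{lem:weight-update-progress} all holding, plus agreement within the post-blacklisting phase) has probability $1 - n^{-\Omega(c)}$; on the complementary event we restart the analysis with fresh randomness, and since that event has polynomially small probability the geometric-series tail contributes only a lower-order term. Hence the expected communication time is $\tilde O((n/\epsilon)^4)$, and termination with probability~1 follows because the failure probability across all restarts is summable to~$1$ (or, more simply, because each independent ``attempt'' succeeds with probability bounded away from~$0$).

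\emph{Main obstacle.} The subtle point---and the part I would be most careful about---is the interface between the ``w.h.p.'' guarantees of the Gap Lemma and the ``probability~1'' claim of the theorem. The Gap Lemma, \autoref{lem:maintaining-inv}, and \autoref{lem:weight-update-progress} all hold only with probability $1 - n^{-\Omega(c)}$, so one cannot conclude deterministic termination directly; instead I would show that if the protocol has not terminated after $\KMAX$ epochs, then with the stated high probability we have entered the ``all bad weights are zero'' regime, from which agreement occurs in $O(1)$ expected further iterations; and if some low-probability bad event occurred, the adversary has gained at most a bounded amount and the argument can be reapplied (the weights are monotone non-increasing, so no progress is ever lost, and the $\epsilon^2 f/8$ error slack in \autoref{inv:weights} was budgeted precisely so that occasional failures do not break the invariant permanently). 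A secondary technical point is verifying that the latency accounting for the asynchronous model---where $\Delta$ is unknown---matches the $L\Delta$ convention of \autoref{sect:themodel}: each reliable broadcast completes within $O(1)$ rounds of latency by \autoref{thm:reliable-broadcast}, the blackboard protocol of \autoref{thm:blackboard} completes each $\BB_t$ within $O(m)$ such rounds, and multiplying through by the iteration and epoch counts yields the claimed $\tilde O((n/\epsilon)^4)$ bound, with ``polynomial local computation'' following since \RisingTide{} and all the statistical tests run in $\poly(n)$ time per epoch.
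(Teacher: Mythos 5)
Your proposal matches the paper's proof in structure and in every essential step: you invoke \autoref{lem:weight-update-progress} to conclude that after $\KMAX=2.5f$ epochs all bad weights are zero w.h.p., \autoref{lem:good-processes-agree} to get constant-probability success per iteration thereafter, compute the latency as (number of epochs) $\times$ (iterations per epoch $T$) $\times$ (blackboard depth $m$) $= O((\KMAX+1)mT) = \tilde{O}((n/\epsilon)^4)$, and handle the whp/probability-1 gap by restarting the whole algorithm on failure. The paper's proof is just a terser version of your argument, so this is the same approach.

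One small caution on your ``main obstacle'' paragraph: the secondary alternative you float there --- that after a low-probability bad event one can simply ``reapply the argument'' because weights are monotone and the $\epsilon^2 f/8$ slack in \autoref{inv:weights} absorbs occasional failures --- is not how the paper handles it, and it is shakier than it looks. The $\epsilon^2 f/8$ slack is budgeted to absorb the \emph{accumulated view disagreements} across $\KMAX$ epochs (\autoref{lem:maintaining-inv}); it is not a reserve against the whp events of \autoref{lem:gap-lemma} failing outright, and if, say, a good process's $\DEV(i)$ exceeded $w_i^2\alpha_T$ by a lot, \autoref{inv:weights} could be broken with no mechanism to repair it. The clean argument is exactly the one the paper uses and that you also state: if agreement isn't reached within $O((\KMAX+1)mT)$ latency, restart with all weights reset to $1$. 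Since each full attempt succeeds with probability $1-n^{-\Omega(c)}$, the expected number of attempts is $1+o(1)$, which gives both the probability-$1$ termination and the $\tilde{O}((n/\epsilon)^4)$ expected latency.
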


\begin{proof}
By \autoref{lem:weight-update-progress}, after $\KMAX=2.5f$ epochs, 
all bad processes' weights become zero, with high probability.
From then on, by \autoref{lem:good-processes-agree}, 
each iteration of \BrachaAgreement{} achieves agreement with constant probability.
Thus, after one more epoch, all processes reach agreement with high probability.
The total communication time (longest chain of dependent messages) 
is $O((\KMAX+1)mT) = \tilde{O}((n/\epsilon)^4)$.  
If, by chance, the processes fail to reach agreement after this much time,
they restart the algorithm with all weights $w_i=1$ and try again.  Thus,
the algorithm terminates with probability 1.
\end{proof}

\bibliographystyle{alpha}
\bibliography{main}

\newcommand{\etalchar}[1]{$^{#1}$}
\begin{thebibliography}{AMVBJ13}

\bibitem[AC08]{attiya2008tight}
Hagit Attiya and Keren Censor.
\newblock Tight bounds for asynchronous randomized consensus.
\newblock {\em J. ACM}, 55(5):1--26, 2008.

\bibitem[AL93]{AjtaiL93}
Mikl{\'{o}}s Ajtai and Nathan Linial.
\newblock The influence of large coalitions.
\newblock {\em Comb.}, 13(2):129--145, 1993.

\bibitem[AMVBJ13]{Abrantes-Metz13}
Rosa Abrantes-Metz, Sofia~B. Villas-Boas, and George~G. Judge.
\newblock Tracking the {L}ibor rate.
\newblock Technical report, Department of Agricultural \& Resource Economics,
  UC Berkeley, Working Paper Series, March 2013.

\bibitem[AN93]{AlonN93}
Noga Alon and Moni Naor.
\newblock Coin-flipping games immune against linear-sized coalitions.
\newblock {\em {SIAM} J. Comput.}, 22(2):403--417, 1993.

\bibitem[Asp98]{Aspnes98}
James Aspnes.
\newblock Lower bounds for distributed coin-flipping and randomized consensus.
\newblock {\em J. {ACM}}, 45(3):415--450, 1998.

\bibitem[Asp03]{aspnes2003randomized}
James Aspnes.
\newblock Randomized protocols for asynchronous consensus.
\newblock {\em Distributed Computing}, 16(2):165--175, 2003.

\bibitem[BBMT20]{BonettiniBMT20}
Nicol{\`{o}} Bonettini, Paolo Bestagini, Simone Milani, and Stefano Tubaro.
\newblock On the use of {B}enford's law to detect {GAN}-generated images.
\newblock In {\em Proceedings 25th International Conference on Pattern
  Recognition ({ICPR})}, pages 5495--5502, 2020.

\bibitem[Ben83]{Ben-Or83}
Michael Ben{-}Or.
\newblock Another advantage of free choice: Completely asynchronous agreement
  protocols (extended abstract).
\newblock In {\em Proceedings 2nd Annual {ACM} Symposium on Principles of
  Distributed Computing (PODC)}, pages 27--30, 1983.

\bibitem[BHLT21]{BuchbinderHLT21}
Niv Buchbinder, Iftach Haitner, Nissan Levi, and Eliad Tsfadia.
\newblock Fair coin flipping: Tighter analysis and the many-party case.
\newblock {\em CoRR}, abs/2104.08820, 2021.

\bibitem[BJBO98]{bjbo1998}
Ziv Bar-Joseph and Michael Ben-Or.
\newblock A tight lower bound for randomized synchronous consensus.
\newblock In {\em Proceedings 17th Annual ACM Symposium on Principles of
  Distributed Computing (PODC)}, page 193–199, 1998.

\bibitem[BL85]{Ben-OrL85}
Michael Ben{-}Or and Nathan Linial.
\newblock Collective coin flipping, robust voting schemes and minima of
  {B}anzhaf values.
\newblock In {\em Proceedings 26th Annual IEEE Symposium on Foundations of
  Computer Science (FOCS)}, pages 408--416, 1985.

\bibitem[Blu81]{Blum81}
Manuel Blum.
\newblock Coin flipping by telephone.
\newblock In {\em {IEEE} Workshop on Communications Security (CRYPTO)}, pages
  11--15, 1981.

\bibitem[BOO15]{BeimelOO15}
Amos Beimel, Eran Omri, and Ilan Orlov.
\newblock Protocols for multiparty coin toss with a dishonest majority.
\newblock {\em J. Cryptol.}, 28(3):551--600, 2015.

\bibitem[BOPV06]{ben2006byzantine}
Michael Ben-Or, Elan Pavlov, and Vinod Vaikuntanathan.
\newblock Byzantine agreement in the full-information model in $o(log n)$
  rounds.
\newblock In {\em Proceedings of the 38th Annual ACM Symposium on Theory of
  Computing (STOC)}, pages 179--186, 2006.

\bibitem[Bra87]{Bracha1987}
Gabriel Bracha.
\newblock Asynchronous {B}yzantine agreement protocols.
\newblock {\em Inf. Comput.}, 75(2):130--143, 1987.

\bibitem[BT85]{BrachaT85}
Gabriel Bracha and Sam Toueg.
\newblock Asynchronous consensus and broadcast protocols.
\newblock {\em J. {ACM}}, 32(4):824--840, 1985.

\bibitem[Cle86]{Cleve86}
Richard Cleve.
\newblock Limits on the security of coin flips when half the processors are
  faulty (extended abstract).
\newblock In {\em Proceedings of the 18th Annual {ACM} Symposium on Theory of
  Computing (STOC)}, pages 364--369, 1986.

\bibitem[CVNV11]{correia2011byzantine}
Miguel Correia, Giuliana~Santos Veronese, Nuno~Ferreira Neves, and Paulo
  Verissimo.
\newblock Byzantine consensus in asynchronous message-passing systems: a
  survey.
\newblock {\em International Journal of Critical Computer-Based Systems},
  2(2):141--161, 2011.

\bibitem[DL02]{DugganL02}
Mark Duggan and Steven~D. Levitt.
\newblock Winning isn't everything: Corruption in sumo wrestling.
\newblock {\em American Economic Review}, 92(5):1594--1605, 2002.

\bibitem[DMM14]{Dachman-SoledMM14}
Dana Dachman{-}Soled, Mohammad Mahmoody, and Tal Malkin.
\newblock Can optimally-fair coin tossing be based on one-way functions?
\newblock In {\em Proceedings 11th Conference on Theory of Cryptography (TCC)},
  volume 8349 of {\em Lecture Notes in Computer Science}, pages 217--239, 2014.

\bibitem[FLM86]{FischerLM86}
Michael~J. Fischer, Nancy~A. Lynch, and Michael Merritt.
\newblock Easy impossibility proofs for distributed consensus problems.
\newblock {\em Distributed Comput.}, 1(1):26--39, 1986.

\bibitem[FLP85]{FischerLP85}
Michael~J. Fischer, Nancy~A. Lynch, and Mike Paterson.
\newblock Impossibility of distributed consensus with one faulty process.
\newblock {\em J. {ACM}}, 32(2):374--382, 1985.

\bibitem[GA17]{GamermannA17}
Daniel Gamermann and Felipe~Leite Antunes.
\newblock Evidence of fraud in {B}razil's electoral campaigns via the
  {B}enford's law.
\newblock {\em CoRR}, abs/1707.08826, 2017.

\bibitem[HK20]{HaitnerK20}
Iftach Haitner and Yonatan Karidi{-}Heller.
\newblock A tight lower bound on adaptively secure full-information coin flip.
\newblock In {\em Proceedings 61st Annual {IEEE} Symposium on Foundations of
  Computer Science ({FOCS})}, pages 1268--1276, 2020.

\bibitem[HMO18]{HaitnerMO18}
Iftach Haitner, Nikolaos Makriyannis, and Eran Omri.
\newblock On the complexity of fair coin flipping.
\newblock In {\em Proceedings 16th International Conference on Theory of
  Cryptography ({TCC})}, volume 11239 of {\em Lecture Notes in Computer
  Science}, pages 539--562, 2018.

\bibitem[HT17]{HaitnerT17}
Iftach Haitner and Eliad Tsfadia.
\newblock An almost-optimally fair three-party coin-flipping protocol.
\newblock {\em {SIAM} J. Comput.}, 46(2):479--542, 2017.

\bibitem[Kau19]{Kauko19}
Karlo Kauko.
\newblock Benford's law and {C}hinese banks' non-performing loans.
\newblock BOFIT Discussion Papers 25/2019, Bank of Finland, Institute for
  Economies in Transition (BOFIT), 2019.

\bibitem[KG21]{KilaniG21}
A~Kilani and G~P Georgiou.
\newblock Countries with potential data misreport based on {B}enford’s law.
\newblock {\em Journal of Public Health}, 43(2):e295--e296, 2021.

\bibitem[Kim20]{Kimmett2020}
Ben Kimmett.
\newblock Improvement and partial simulation of {K}ing \& {S}aia’s
  expected-polynomial-time {B}yzantine agreement algorithm.
\newblock Master's thesis, University of Victoria, Canada, 2020.

\bibitem[KKK{\etalchar{+}}10]{KapronKKSS10}
Bruce~M. Kapron, David Kempe, Valerie King, Jared Saia, and Vishal Sanwalani.
\newblock Fast asynchronous {B}yzantine agreement and leader election with full
  information.
\newblock {\em {ACM} Trans. Algorithms}, 6(4):68:1--68:28, 2010.

\bibitem[KKL88]{KahnKL88}
Jeff Kahn, Gil Kalai, and Nathan Linial.
\newblock The influence of variables on boolean functions (extended abstract).
\newblock In {\em Proceedings of the 29th Annual IEEE Symposium on Foundations
  of Computer Science (FOCS)}, pages 68--80, 1988.

\bibitem[KS11]{king2011breaking}
Valerie King and Jared Saia.
\newblock Breaking the ${O}(n^2)$ bit barrier: scalable {B}yzantine agreement
  with an adaptive adversary.
\newblock {\em J. ACM}, 58(4):1--24, 2011.

\bibitem[KS16]{KingS2016}
Valerie King and Jared Saia.
\newblock Byzantine agreement in expected polynomial time.
\newblock {\em J. {ACM}}, 63(2):13:1--13:21, 2016.

\bibitem[KS18]{KingS2018}
Valerie King and Jared Saia.
\newblock Correction to {B}yzantine agreement in expected polynomial time,
  {JACM} 2016.
\newblock {\em CoRR}, abs/1812.10169, 2018.

\bibitem[Lew11]{Lewko11}
Allison~B. Lewko.
\newblock The contest between simplicity and efficiency in asynchronous
  {B}yzantine agreement.
\newblock In {\em Proceedings 25th International Symposium on Distributed
  Computing ({DISC})}, volume 6950 of {\em Lecture Notes in Computer Science},
  pages 348--362, 2011.

\bibitem[LSP82]{LamportSP82}
Leslie Lamport, Robert~E. Shostak, and Marshall~C. Pease.
\newblock The {B}yzantine generals problem.
\newblock {\em {ACM} Trans. Program. Lang. Syst.}, 4(3):382--401, 1982.

\bibitem[Mel20]{Melnyk20}
Darya Melnyk.
\newblock {\em Byzantine Agreement on Representative Input Values Over Public
  Channels}.
\newblock PhD thesis, ETH Zurich, 2020.

\bibitem[MNS16]{MoranNS16}
Tal Moran, Moni Naor, and Gil Segev.
\newblock An optimally fair coin toss.
\newblock {\em J. Cryptol.}, 29(3):491--513, 2016.

\bibitem[Rab83]{Rabin83}
Michael~O. Rabin.
\newblock Randomized {B}yzantine generals.
\newblock In {\em Proceedings 24th Annual IEEE Symposium on Foundations of
  Computer Science (FOCS)}, pages 403--409, 1983.

\bibitem[Rou14]{Roukema14}
Boudewijn~F. Roukema.
\newblock A first-digit anomaly in the 2009 {I}ranian presidential election.
\newblock {\em J. Applied Statistics}, 41(1):164--199, 2014.

\bibitem[RSZ02]{RussellSZ02}
Alexander Russell, Michael~E. Saks, and David Zuckerman.
\newblock Lower bounds for leader election and collective coin-flipping in the
  perfect information model.
\newblock {\em {SIAM} J. Comput.}, 31(6):1645--1662, 2002.

\bibitem[Sak89]{Saks89}
Michael~E. Saks.
\newblock A robust noncryptographic protocol for collective coin flipping.
\newblock {\em {SIAM} J. Discret. Math.}, 2(2):240--244, 1989.

\bibitem[Sim13]{Simonsohn13}
Uri Simonsohn.
\newblock Just post it: The lesson from two cases of fabricated data detected
  by statistics alone.
\newblock {\em Psychological Science}, 24(10):1875--1888, 2013.

\bibitem[SSN15]{SimonsohnSN15}
Uri Simonsohn, Joseph~P. Simmons, and Leif~D. Nelson.
\newblock Better {P}-curves: Making {P}-curve analysis more robust to errors,
  fraud, and ambitious {P}-hacking, a reply to {U}lrich and {M}iller (2015).
\newblock {\em Journal of Experimental Psychology}, 144(6):1146--1152, 2015.

\bibitem[TJ12]{TildenJ12}
Cristi Tilden and Troy Janes.
\newblock Empirical evidence of financial statement manipulation during
  economic recessions.
\newblock {\em J. Finance and Accountancy}, 10, 2012.

\bibitem[Tou84]{Toueg84}
Sam Toueg.
\newblock Randomized {B}yzantine agreements.
\newblock In {\em Proceedings 3rd Annual {ACM} Symposium on Principles of
  Distributed Computing (PODC)}, pages 163--178, 1984.

\end{thebibliography}

\appendix

\section{Proofs from \autoref{sect:preliminaries}}\label{sect:appendix-prelim-proofs}

\subsection{Reliable Broadcast: Proof of \autoref{thm:reliable-broadcast}}\label{sect:proof-reliable-broadcast}

\reliablebroadcastthm*

\begin{proof}
According to the first line of \ReliableBroadcast, no message 
$m_{p,\ell}$ can be accepted until $m_{p,\ell-1}$ is accepted, if $\ell>1$.
This establishes the FIFO property.  The other correctness properties
follow from several claims.

We claim that if two good processes $q,q'$ send $(\operatorname{ready},m_{p,\ell})$
and $(\operatorname{ready},m'_{p,\ell})$, then $m_{p,\ell}=m'_{p,\ell}$.
Suppose not. Let $q$ be the first good process to send a 
$(\operatorname{ready},m_{p,\ell})$
message, and let $q'$ be the first good process to send
a $(\operatorname{ready},m'_{p,\ell})$, for some $m'_{p,\ell}\neq m_{p,\ell}$. 
By definition, $q$ and $q'$
received strictly more than $(n+f)/2$ $(\operatorname{echo},m_{p,\ell})$
and 
$(\operatorname{echo},m'_{p,\ell})$ messages, respectively.
Thus, at least $2\ceil{(n+f+1)/2}-n \geq f+1$ processes sent both $q$ and $q'$
conflicting $(\operatorname{echo},\cdot)$ messages, and therefore some 
\emph{good} process sent conflicting $(\operatorname{echo},\cdot)$ 
messages, which is impossible.

We now claim that if a good process $q$ accepts $m_{p,\ell}$
then every good process eventually accepts $m_{p,\ell}$.
It follows that $q$ has already accepted $m_{p,1},\ldots,m_{p,\ell-1}$.
By induction, every other good process eventually
accepts $m_{p,\ell-1}$.  Before accepting $m_{p,\ell}$, 
$q$ received at least $f+1$ 
$(\operatorname{ready},m_{p,\ell})$ messages from 
good processes.  These $f+1$ messages will eventually be delivered
to all $n-f\geq 2f+1$ good processes, causing all to send their own
$(\operatorname{ready},m_{p,\ell})$ messages and
eventually accept the same value.

If the sender $p$ is good, then every good process will clearly eventually
accept $m_{p,\ell}$.  Moreover, as a consequence of the claims above,
if $p$ is bad it is impossible for good processes to accept different
messages $m_{p,\ell}\neq m'_{p,\ell}$.
\end{proof}

\subsection{Iterated Blackboard}\label{sect:appendix-blackboard}

The $\IteratedBlackboard$ algorithm  
uses the 
reliable broadcast primitives \emph{broadcast} and \emph{accept} (\autoref{thm:reliable-broadcast}) to construct a series
of blackboards $\BB = (\BB_1,\BB_2,\ldots)$, the columns of which
are indexed by process IDs in $[n]$ and the rows of which are indices
in $[0,m]$.  The blackboard proper consists of rows $1,\ldots,m$; 
the purpose of row zero is to reduce disagreements between the 
views of good processes.
Every process $p$ maintains $\BB^{(p)} = (\BB_1^{(p)},\BB_2^{(p)},\ldots)$, 
where $\BB_t^{(p)}(r,i)$ records
the value written by process $i$ to $\BB_t(r,i)$
and accepted by process $p$, 
or $\bot$ if no such value has yet been accepted by $p$.  Each process maintains
a vector $\last^{(p)}$ indicating the position of the last accepted write
from each process, i.e., $\last^{(p)}(i)=(t,r)$ if $p$ accepted $i$'s write
to $\BB_t(r,i)$, but has yet to accept 
any subsequent writes from $i$ to 
$\BB_t$, nor to $\BB_{t+1},\BB_{t+2},\ldots$.

\autoref{alg:blackboard} gives the algorithm \IteratedBlackboard$(t)$
for generating $\BB_t$ from the perspective of process $p$.  
Process $p$ may only begin executing it if $t=1$ or if it has finished 
executing \IteratedBlackboard$(t-1)$ and therefore 
already fixed $\BB^{(p,t-1)}$ and $\maxlast^{(p)}_{t-1}$.

\begin{algorithm}
    \caption{\IteratedBlackboard$(t)$ \emph{ from the perspective of process $p$}}
    \label{alg:blackboard}
    \begin{algorithmic}[1]
            \State Set $\complete(t) \gets \mathit{false}$ 
            and set $\zeta \gets \maxlast^{(p)}_{t-1}$ if $t>1$ or any dummy value if $t=1$.
            \emph{Broadcast} the write $\BB_t(0,p) \gets \zeta$.
            \vskip 5pt \label{line:write-first}
                
            \State \textbf{upon} \emph{validating} $\geq n-f$ $\ack(\BB_t(m,q))$'s for $\geq n-f$ different $q$ for the first time: 
            {\par \enspace} 
            set $\complete(t) \gets \mathit{true}$ and
            $\last_t^{(p)} \gets \last^{(p)}$, then
            \emph{broadcast} the vector $\last^{(p)}_t$.
            \vskip 5pt \label{line:until}
        
        
        \State{\textbf{upon} \emph{validating}  $\ack(\BB_t(r,p))$'s from $\geq n-f$ different processes for the first time: {\par \enspace}
         \textbf{if} $\neg \complete(t) \wedge (r < m)$ \textbf{then} \emph{generate} a value $\zeta$ and \emph{broadcast} the write $\BB_t(r+1,p)\gets \zeta$.} \vskip 5pt \label{line:increment-j}

        \State{\textbf{upon} \emph{validating} $\BB_t(r,q)$ from process $q$ for the first time: {\par \enspace}
        set $\pBB_t(r,q) \gets \BB_t(r,q)$ and $\last^{(p)}(q) \gets (t,r)$; 
        \textbf{if} $\neg \complete(t)$ \textbf{then} \emph{broadcast} $\ack(\BB_t(r,q))$.} \vskip 5pt \label{line:ack}

        \State \textbf{upon} \emph{validating} $\last^{(q)}_t$ vectors 
        from $\geq n-f$ different processes $q$ for the first time:
            {\par \enspace}
            set $\maxlast^{(p)}_t(i) \gets \max_{q} \{\last^{(q)}_t(i)\}$ (point-wise maximum, lexicographically).
            {\par \enspace}
            At this point $\BB^{(p,t)}=\left(\BB_1^{(p,t)},\ldots,\BB_t^{(p,t)}\right)$ is fixed as follows:
            \[
            \BB_{t'}^{(p,t)}(r,i) = \left\{
            \begin{array}{ll}
            \BB_{t'}^{(p)}(r,i) & \mbox{if $(t',r)\leq \maxlast_t^{(p)}(i)$ and $r\in [1,m]$}\\
            \bot        & \mbox{otherwise} 
            \end{array}
            \right.
            \]
            \vskip 5pt
            \label{line:maxlast}
    \end{algorithmic}
\end{algorithm}

In Line~\ref{line:write-first}, $p$ broadcasts the write
$\BB_t(0,p)\gets \maxlast^{(p)}_{t-1}$ (or a dummy value if $t=1$).
This serves two purposes: first, to let other processes know that $p$
has begun $\IteratedBlackboard(t)$, and second, to let them know
exactly how it fixed the history $\BB^{(p,t-1)}$ after $\BB_{t-1}$
was completed.
Once any process accepts this message, 
if they do not already consider $\BB_t$ to be \emph{complete}, 
then they broadcast an \emph{acknowledgement} (Line~\ref{line:ack}). 
In general, once process $p$ receives $n-f$
acknowledgements for its write to $\BB_t(r,p)$,
if $r<m$ and $p$ does not consider $\BB_t$ complete,
it proceeds to generate and broadcast a write to
$\BB_t(r+1,p)$ (Line~\ref{line:increment-j}). 
A process $q$ generates an acknowledgement $\ack(\BB_t(r,p))$ 
only if $q$ accepts $\BB_t(r,p)$ 
and does not consider $\BB_t$ to be complete (Line~\ref{line:ack}).

An important point is that the conditions of the ``\textbf{upon}'' statements on Lines~\ref{line:until}, \ref{line:increment-j}, \ref{line:ack}, and \ref{line:maxlast} are checked \emph{whenever} process $p$ accepts a new message, in the order that they are written. 
In particular, process $p$ may set 
$\BB_t^{(p)}(r,q)$ after it considers $\BB_t$
to be complete, and even after it has
moved on to the execution of $\IteratedBlackboard(t+1)$.
(These can be thought of as \emph{retroactive} corrections to $\BB_t$.)
However, the body of each ``\textbf{upon}'' 
statement is executed at most once.

Eventually, process $p$ will see that a set of at least $n-f$ columns of $\BB_t$ are full, i.e., for each such column $q$, it has received at least $n-f$ acknowledgements for the $m$'th row of column $q$. Once this occurs, process $p$ sets $\complete(t)$ to true (Line~\ref{line:until}). 
At this point there can still be considerable disagreements 
between $p$'s local view and another process's 
local view of the blackboard. To (mostly) resolve this,
process $p$ broadcasts the current state of its last-vector 
$\last^{(p)} = \last^{(p)}_t$.
It then waits to receive $\last^{(q)}_t$ vectors from at least $n-f$ different $q$ before it finalizes what it considers to be the last position of each column $i$ of $\BB$ at the end of the $t$'th iteration, i.e., $\maxlast^{(p)}_t(i) = \max_q \{\last^{(q)}_t(i)\}$ (Line~\ref{line:maxlast}). 

A critical aspect of the protocol is that $p$ refrains 
from participating in any broadcast unless it has 
validated the message, i.e., accepted messages that
are a prerequisite for its existence.  Specifically:
\begin{enumerate}
    \item No process participates in the broadcast of a write to
        $\BB_t(r,p)$ 
        unless $(t,r)=(1,0)$ (it's $p$'s first write)
        or it has accepted the last write from $p$, 
        which is $\BB_t(r-1,p)$ or if $r=0$, some $\BB_{t-1}(r',p)$.  
         (This is already captured by the FIFO property 
         of \autoref{thm:reliable-broadcast} but it is 
         useful to highlight it again.)
         Moreover, no process participates in the broadcast 
         of a write $\BB_t(0,p)\gets \maxlast^{(p)}_{t-1}$
         unless it has accepted $n-f$ $\last^{(q)}_{t-1}$ vectors
         whose point-wise maxima are exactly $\maxlast^{(p)}_{t-1}$.
    \item No process participates in the broadcast of $\ack(\BB_t(r,q))$ 
    until it has accepted $\BB_t(r,q)$.
    \item No process participates in the broadcast of $\BB_t(r+1,q)$ until 
    it has accepted acknowledgements for $\BB_t(r,q)$ from at least $n-f$ processes.
    \item No process participates in the broadcast of 
    $\last^{(q)}_t$ until it has,
    for all columns $i$ and $\last^{(q)}_t(i) = (t',r)$ already accepted $\BB_{t'}(r,i)$.
    In other words, the process must first accept every blackboard value that $q$ purports to have accepted.
\end{enumerate}
To emulate this, we modify the \ReliableBroadcast{} 
implementation (\autoref{alg:reliable-broadcast})
so that when process $p$ \emph{receives} a 
message inviting it to participate in a \ReliableBroadcast{} 
(e.g.,~an ``init'' message), 
it simply delays reacting to the message until the prerequisite 
conditions are met. 
In the next lemma, we show that these changes to the broadcast 
mechanism do not cause any deadlocks.

\begin{lemma}
If at least $n-f$ good processes execute $\IteratedBlackboard(t)$, then every good process that executes $\IteratedBlackboard(t)$ eventually sets $\maxlast^{(p)}_t$ and $\BB^{(p,t)}$.
\end{lemma}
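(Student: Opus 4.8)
The plan is to show that no good process can deadlock while waiting inside the modified \ReliableBroadcast{} routine, proceeding by an induction on the ``depth'' of the prerequisite chain. First I would establish a liveness statement for a single reliable broadcast: if at least $n-f$ good processes \emph{will eventually} validate (accept the prerequisites of, and then participate in) a particular broadcast instance $\BB_t(r,q)$, then by \autoref{thm:reliable-broadcast} all good processes eventually accept it. The subtlety introduced by the modification is that ``eventually participate'' is conditional on the process first accepting a set of earlier messages; so the real object of the induction is the claim ``every message that a good process is obligated to send is eventually sent,'' ordered by the partial order of prerequisite dependencies described in items 1--4 preceding the lemma.

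The key steps, in order. (1) Observe that the prerequisite relation is a DAG: a write $\BB_t(r,q)$ depends on $\BB_t(r-1,q)$ (or on $q$'s last write in $\BB_{t-1}$, and on $n-f$ $\last^{(\cdot)}_{t-1}$ vectors when $r=0$); an $\ack(\BB_t(r,q))$ depends on $\BB_t(r,q)$; a $\last^{(q)}_t$ vector depends on all blackboard cells $q$ claims to have accepted; and, crucially, within iteration $t$ everything bottoms out at the row-$0$ writes and at the completion-triggering acknowledgements. (2) Show the base of the induction: the row-$0$ write $\BB_t(0,p)$ is broadcast by every good $p$ that starts $\IteratedBlackboard(t)$, since for $t=1$ it is unconditional, and for $t>1$ it uses $\maxlast^{(p)}_{t-1}$, which by the inductive hypothesis on $t$ (the outer induction, over iterations) every such good $p$ has already fixed in Line~\ref{line:maxlast}. (3) The inductive step: assuming all messages at smaller depth are eventually delivered everywhere, a good process $q$ whose turn it is to broadcast $\BB_t(r,q)$ has, by item 3, already accepted $n-f$ acks for $\BB_t(r-1,q)$; those acks were sent by $\geq n-f\geq 2f+1$ good processes that had accepted $\BB_t(r-1,q)$ — unless $q$ already set $\complete(t)$, in which case it simply stops, which is fine. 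Chasing this forward: as long as some good process has not yet set $\complete(t)$, acknowledgements keep being generated, so every good column advances; hence at least $n-f$ columns (the ones owned by the $\geq n-f$ good participants) eventually become full, so \emph{every} good process sees $n-f$ full columns and sets $\complete(t)$ (Line~\ref{line:until}) and broadcasts its $\last^{(p)}_t$. (4) Finally, since $\geq n-f$ good processes broadcast well-formed $\last^{(\cdot)}_t$ vectors — ``well-formed'' meaning (item 4) every cell they reference has been accepted, hence will be accepted by all good processes — each good process eventually validates $n-f$ of them and executes Line~\ref{line:maxlast}, fixing $\maxlast^{(p)}_t$ and $\BB^{(p,t)}$.

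The main obstacle, and the place I would spend the most care, is ruling out a \emph{circular wait}: a good process $p$ stalls because it is waiting to accept some $\BB_{t'}(r',i)$ referenced in a peer's $\last$-vector, while acceptance of that cell is (transitively) waiting on $p$. The resolution is to argue that every message a good process is \emph{obligated} to send has strictly smaller depth in the prerequisite DAG than the messages waiting on it, and that a $\last^{(q)}_t$ vector broadcast by a \emph{good} $q$ only references cells $q$ has genuinely accepted — each of which, by the delivery guarantees of \autoref{thm:reliable-broadcast} applied at smaller depth, is eventually accepted by $p$ regardless of $p$'s own state. One also has to handle cells written by \emph{bad} processes and referenced in a good process's $\last$-vector: but a good $q$ only includes $(t',r')$ in $\last^{(q)}_t(i)$ after accepting $\BB_{t'}(r',i)$, and by \autoref{thm:reliable-broadcast} acceptance by one good process forces acceptance by all. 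So no reference in any validated $\last$-vector can point to a phantom message, and the wait is always finite. Everything else is routine counting with the thresholds $n-f \geq 2f+1 \geq f+1$, exactly as in the proof of \autoref{thm:reliable-broadcast}.
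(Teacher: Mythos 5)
Your overall strategy---an induction over the prerequisite DAG of messages, combined with the delivery guarantees of \autoref{thm:reliable-broadcast}---is a genuinely different decomposition from the paper's, and it is a natural one. But there is a gap in the ``chasing forward'' step, and it is precisely the gap that the paper's proof is organized to plug.

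The problematic claim is: ``as long as some good process has not yet set $\complete(t)$, acknowledgements keep being generated, so every good column advances.'' This is not true. Suppose, mid-execution, some good processes have set $\complete(t)$ while others have not. A process $q$ that has set $\complete(t)$ stops acknowledging new writes (Line~\ref{line:ack}). A not-yet-complete good process $p$ attempting to advance its column then needs $n-f$ acknowledgements for $\BB_t(r,p)$, but the pool of processes still willing to acknowledge may have shrunk below $n-f$. So $p$ can get stuck without having set $\complete(t)$ itself, and the dependency-DAG induction does not by itself force $p$ to make further progress, because $p$ is not ``obligated'' to send anything more. What rescues $p$ is a separate propagation fact: the $\geq n-f$ acknowledgements of $\BB_t(m,\cdot)$ for $\geq n-f$ columns that caused $q$ to set $\complete(t)$ are themselves reliably broadcast, so $p$ will eventually validate the same set and set $\complete(t)$ via Line~\ref{line:until}, even though $p$'s own column is stuck short of row $m$. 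This is the paper's first claim (``if any good process considers $\BB_t$ complete, then every good process that executes $\IteratedBlackboard(t)$ eventually does''), and your write-up invokes reliable broadcast in several places but never makes this particular propagation explicit.

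There is a second, related imprecision: your unconditional ``every good column advances $\Rightarrow$ eventually $n-f$ columns become full'' reasoning tacitly assumes the world where \emph{no} good process has yet set $\complete(t)$. That hypothetical is a perfectly good contradiction hypothesis, and the paper proves it carefully---but with an extra twist: it picks the good process $p$ with the \emph{minimum} number of writes and splits on whether $p$ reaches row $m$ or stalls at $r<m$. The stall case uses a counting argument (all $\geq n-f$ good participants eventually accept $\BB_t(r,p)$, yet $p$ never gets $n-f$ acks for it, so at least one good process withheld its ack, hence already set $\complete(t)$, contradiction). Your sketch does not do this case split, so it does not actually rule out the scenario where some column stalls below row $m$ for a benign reason. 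Combining the propagation claim with the minimum-writes contradiction is exactly how the paper turns ``at least one good process sets $\complete(t)$'' plus ``one implies all'' into the lemma; your DAG induction would need to incorporate both pieces to be airtight.

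One minor arithmetic slip: you write that the $n-f$ acks ``were sent by $\geq n-f\geq 2f+1$ good processes,'' but up to $f$ of the $n-f$ ackers could be Byzantine, so the correct bound on \emph{good} ackers is $\geq n-2f \geq f+1$. It doesn't change the outcome, but the quantities are wrong as stated.

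On the other hand, your treatment of the $\last$-vector validation step (the ``circular wait'' concern) is sound and matches the paper's second claim: a good $q$ only references cells it has genuinely accepted, and reliable broadcast's ``accepted by one good $\Rightarrow$ accepted by all'' property guarantees every good $p$ can eventually validate $q$'s vector. That part of the proposal is correct and well-aimed.
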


\begin{proof}
First, we claim that, if any good process considers $\BB_t$ complete, then every good process that executes $\IteratedBlackboard(t)$ eventually considers $\BB_t$ complete. Indeed, if a good process considers $\BB_t$ complete, then it must have accepted $\ack(\BB_t(m,q))$s from at least $n-f$ processes, for at least $n-f$ values of $q$ (Line~\ref{line:until}). 
By the properties of reliable broadcast (\autoref{thm:reliable-broadcast}), 
every other good process eventually accepts these acknowledgements as well. Thus, every good process that executes $\IteratedBlackboard(t)$ eventually considers $\BB_t$ complete.

Next, we claim that, if at least $n-f$ good processes consider $\BB_t$ complete, then every good process $p$ that executes $\IteratedBlackboard(t)$ will eventually set $\maxlast^{(p)}_t$. Indeed, by Line~\ref{line:until}, every good process $q$ that considers $\BB_t$ complete will broadcast a $\last^{(q)}_t$ vector. By the properties of reliable broadcast, any blackboard values accepted by $q$ will eventually be accepted by every good process and, hence, every good process will eventually participate in $q$'s broadcast of $\last^{(q)}_t$. Thus, every good process $p$ eventually accepts $\last^{(q)}_t$ vectors from at least $n-f$ different processes $q$ and sets $\maxlast^{(p)}_t$ (Line~\ref{line:maxlast}).

Finally, since at least $n-f$ good processes execute $\IteratedBlackboard(t)$, by our preceding discussion, it suffices to show that at least one such process considers $\BB_t$ complete. Suppose, for a contradiction, that this is not the case. Consider any good process $p$ that executes $\IteratedBlackboard(t)$ with the minimum number of writes to its column of $\BB_t$.

Suppose $p$ writes to row $m$. Then, by minimality, every good process $q$ that executes $\IteratedBlackboard(t)$ writes to row $m$ in their respective columns, i.e., broadcasts $\BB_t(m,q)$. By the properties of reliable broadcast, the $n-f$ $\ack(\BB_t(m-1,q))$'s that allow each such process $q$ to broadcast $\BB_t(m,q)$ will eventually be accepted by every good process. Thus, every good process will eventually accept $\BB_t(m,q)$ and, as they do not consider $\BB_t$ complete by assumption, they will broadcast $\ack(\BB_t(m,q))$. Therefore, every good process will receive at least $n-f$ $\ack(\BB_t(m,q))$ for at least $n-f$ different $q$ and, consequently, consider $\BB_t$ complete (Line~\ref{line:until}), which is a contradiction.

Now suppose $p$ writes to row $r < m$. If $r > 0$, then the $n-f$ $\ack(\BB_t(r-1,p))$'s that allow $p$ to broadcast $\BB_t(r,p)$ will eventually be accepted by every good process. Hence, every good process will eventually participate in $p$' broadcast of $\BB_t(r,p)$ and accept $\BB_t(r,p)$. Similarly, if $r = 0$. Since $p$ does not write to row $r+1 \leq m$, it never accepts $n-f$ $\ack(\BB_t(r,p))$'s. Since at least $n-f$ good process accept $\BB_t(r,p)$, it follows that at least one such process does not broadcast an $\ack(\BB_t(r,p))$. By Line~\ref{line:ack}, this process must have considered $\BB_t$ to be complete by the time it accepts $\BB_t(r,p)$, which is a contradiction.

Therefore, in both cases, we reach the desired contradiction.
\end{proof}

Recall that $p$'s view of the history after executing $\IteratedBlackboard(t)$ is $\BB^{(p,t)}$, defined to be:
 \[
            \BB_{t'}^{(p,t)}(r,i) = \left\{
            \begin{array}{ll}
            \BB_{t'}^{(p)}(r,i) & \mbox{if $(t',r)\leq \maxlast_t^{(p)}(i)$ and $r\in [1,m]$}\\
            \bot        & \mbox{otherwise} 
            \end{array}
            \right.
\]
In other words, we obtain $\BB^{(p,t)}_{t'}$ 
by stripping off the zeroth row of 
every $\BB^{(p)}_{t'}$ matrix
and replacing any values in column $i$ after $\maxlast^{(p)}_t(i)$ 
with $\perp$. 
We emphasize that, in contrast to the local 
matrix $\BB^{(p)}_t$ of process $p$, 
once $\BB^{(p,t)}_{t'}$ is set, it never changes.
(In the context of Bracha's algorithm, $p$ uses $\BB^{(p,t)}_t$ 
to decide the outcome of the $t$th call to $\CoinFlip()$, 
but the series of blackboards $\BB^{(p,T)}$ is used to decide how 
$p$ reduces the weight vector $(w_i)$ after the first epoch.)

\begin{lemma}\label{lem:nminusf-acks}
Suppose at least $n-f$ good processes execute $\IteratedBlackboard(t)$ and some good process $q$ accepts $\ack(\BB_t(r,i))$ from at least $n-f$ different processes.
Then every good process $p$ that finishes iteration $t$ has
$\BB^{(p)}_t(r,i) = \BB^{(p,t)}_{t}(r,i) = \BB_t(r,i)$.
\end{lemma}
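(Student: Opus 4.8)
The plan is to prove the two equalities separately, using reliable broadcast (\autoref{thm:reliable-broadcast}), the ``validate before participate'' prerequisites~1--4, and the preceding lemma. Throughout I would take $r\in[1,m]$, which is implicit in the statement since $\BB^{(p,t)}_t(r,i)$ is defined to be $\bot$ when $r\notin[1,m]$. The first step is to note that $\BB_t(r,i)$ is well defined and has already been accepted by some good process: among the $\geq n-f\geq 2f+1$ processes from which $q$ accepted $\ack(\BB_t(r,i))$, at least $n-2f\geq f+1$ are good, and by Line~\ref{line:ack} a good process broadcasts $\ack(\BB_t(r,i))$ only after \emph{accepting} $\BB_t(r,i)$; by \autoref{thm:reliable-broadcast} no two good processes accept distinct values at position $(t,r,i)$, so $\BB_t(r,i)$ is this common value. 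Since a good process has accepted it, \autoref{thm:reliable-broadcast} then forces every good process to accept $\BB_t(r,i)$ eventually, recording $\BB^{(p)}_t(r,i)\gets\BB_t(r,i)$ at Line~\ref{line:ack} (possibly as a retroactive update). This yields the first equality.

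For the second equality, I would reduce it, via the definition of $\BB^{(p,t)}_t$ at Line~\ref{line:maxlast}, to showing $\maxlast^{(p)}_t(i)\geq(t,r)$. Let $A$ be the good processes among the $\geq n-f$ acknowledgers of $\ack(\BB_t(r,i))$ accepted by $q$, so $|A|\geq n-2f\geq f+1$. Fix $a\in A$. By Line~\ref{line:ack}, $a$ broadcasts its acknowledgement only while $\neg\complete(t)$, and at that instant $a$ has just set $\last^{(a)}(i)\gets(t,r)$; since $\last^{(a)}$ is nondecreasing (FIFO delivery of $i$'s writes), when $a$ subsequently considers $\BB_t$ complete (Line~\ref{line:until}) and broadcasts $\last^{(a)}_t$, we get $\last^{(a)}_t(i)\geq(t,r)$. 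By the preceding lemma and the hypothesis that $\geq n-f$ good processes run $\IteratedBlackboard(t)$, $a$ does reach Line~\ref{line:until}, and by \autoref{thm:reliable-broadcast} its unique, genuine vector $\last^{(a)}_t$ is eventually validated by every good process.

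Next I would use a counting argument: $p$ fixes $\maxlast^{(p)}_t$ from the $\last_t$-vectors of some set $Q_p$ of $\geq n-f$ processes, and since $|A|+|Q_p|\geq(f+1)+(n-f)>n$, some good $a\in A\cap Q_p$ contributes to the pointwise maximum, whence $\maxlast^{(p)}_t(i)\geq\last^{(a)}_t(i)\geq(t,r)$. Finally, because $p$ validated $\last^{(a)}_t$ with $\last^{(a)}_t(i)\geq(t,r)$, prerequisite~4 (together with FIFO) ensures $p$ had accepted $\BB_t(r,i)$ by the time it fixed $\BB^{(p,t)}$, so $\BB^{(p,t)}_t(r,i)=\BB^{(p)}_t(r,i)=\BB_t(r,i)$, which finishes the proof.

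I do not expect a real obstacle here; the two points needing care are (i) that a good acknowledger's $\last_t$-vector really points at least to $(t,r)$ in column $i$ --- which hinges on the temporal order ``set $\last(i)$, only later set $\complete(t)$'' plus monotonicity of $\last$ --- and (ii) the counting step forcing $p$'s $\geq n-f$ received $\last_t$-vectors to include one from a good acknowledger.
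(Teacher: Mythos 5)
Your proof is correct and follows essentially the same approach as the paper: pigeonhole on the good acknowledgers forces $\maxlast^{(p)}_t(i)\geq(t,r)$, and the validation prerequisite then forces $p$ to have accepted $\BB_t(r,i)$ before fixing $\BB^{(p,t)}$. Your counting step (intersecting the $\geq n-2f$ good acknowledgers with $p$'s full set of $\geq n-f$ received $\last_t$-vectors, using $n>3f$) is in fact slightly cleaner than the paper's prose, and your preliminary sub-argument that $\BB^{(p)}_t(r,i)=\BB_t(r,i)$ ``eventually'' is subsumed by the validation step, which already forces acceptance of $\BB_t(r,i)$ by the time $p$ finishes iteration $t$.
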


\begin{proof}
Since there are at most $f$ bad processes, $q$ accepts $\ack(\BB_t(r,i))$'s from at least $n-2f$ good processes.  Thus, $\last^{(q')}_t(i) \geq (t,r)$
holds for a set $S_0$ of at least $n-2f$ processes $q'$.
Similarly, when a good process $p$ finishes iteration $t$, it has received $\last^{(q')}_t$ vectors from a set $S_1$ of at least $n-2f$ good processes.
Since $n>3f$, $n-2f > (n-f)/2$ and $S_0\cap S_1$ contains at least one common good process. Thus, $\maxlast^{(p)}_t(i) = \max_{q'} \{\last^{(q')}_t(i)\} \geq (t,r)$,
meaning that $p$ will not finish Line~\ref{line:maxlast} until
it accepts $\BB_t(r,i)$ (due to validation), recording it in $\pBB_t$ and hence
$\BB^{(p,t)}_{t}$.
\end{proof}

\begin{lemma}\label{lem:bb-disagreement}
Suppose that each good process executes 
$\IteratedBlackboard(1),\ldots,\IteratedBlackboard(t)$,
beginning iteration $t'+1$ only after it has executed Line~\ref{line:maxlast} 
of iteration $t'$.
Then, for any two good processes $p,q$ that finish iteration $t$, 
$\BB^{(p,t)}$ and $\BB^{(q,t)}$ disagree in at most $f$ 
positions \emph{in total}.
If they disagree on the contents of any position, one is $\perp$.
\end{lemma}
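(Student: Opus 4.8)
The plan is to separate the two claims of the lemma. For the ``one of them is $\bot$'' part, observe that column $i$ of any matrix $\BB^{(p,t)}_{t'}$ only ever contains a value that process $i$ itself broadcast and $p$ subsequently accepted in slot $(t',r)$; by the no‑equivocation guarantee of \autoref{thm:reliable-broadcast}, process $i$ has at most one value accepted by good processes in each slot, so if $\BB^{(p,t)}_{t'}(r,i)$ and $\BB^{(q,t)}_{t'}(r,i)$ are both non‑$\bot$ they coincide. Hence every disagreement has a $\bot$ on exactly one side, and the non‑$\bot$ side equals the canonical value $\BB_{t'}(r,i)$.

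For the count, I would first show that $\BB^{(p,t)}$ agrees with the canonical blackboard $\BB$ in every cell except possibly the single frontier cell $\last(i)$ of each column $i$. By construction $\BB^{(p,t)}_{t'}(r,i)$ equals $\BB_{t'}(r,i)$ when $(t',r)\le\maxlast^{(p)}_t(i)$ and is $\bot$ otherwise, and $\maxlast^{(p)}_t(i)\le\last(i)$ always, so it suffices to prove $\maxlast^{(p)}_t(i)$ is either $\last(i)$ or the cell immediately preceding it. This follows from a ``certification'' argument: for any non‑frontier cell $(t',r,i)$, process $i$ made a strictly later write in its column, and the FIFO/validation rules of the modified \ReliableBroadcast{} force some good process to have accepted $n-f$ acknowledgements of $(t',r,i)$ before that later write could be accepted by anyone --- directly when $r\ge1$ (Line~\ref{line:increment-j}), and through the $\maxlast^{(i)}_{t'}$ value embedded in $\BB_{t'+1}(0,i)$ (Line~\ref{line:write-first}) when $r=m$. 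Then \autoref{lem:nminusf-acks}, together with its routine extension to cells of earlier blackboards, gives $\maxlast^{(p)}_t(i)\ge(t',r)$, hence $\maxlast^{(p)}_t(i)\in\{\last(i)-1,\last(i)\}$. Consequently $\BB^{(p,t)}$ and $\BB^{(q,t)}$ can differ only at cells $\last(i)$ --- at most one per column --- and a disagreement occurs in column $i$ exactly when precisely one of $\maxlast^{(p)}_t(i),\maxlast^{(q)}_t(i)$ equals $\last(i)$.

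It remains to bound the number of such columns by $f$, and I expect this to be the main obstacle. The intended tool is quorum intersection on the $\last_t$‑vectors: a process fixes $\maxlast_t$ only after validating $\ge n-f$ such vectors (Line~\ref{line:maxlast}), and a good process broadcasts its own $\last_t$‑vector only after witnessing $n-f$ acknowledged, row‑$m$‑complete columns and --- using the modification whereby a good process defers accepting an acknowledgement until it has accepted the acknowledged write --- having itself accepted those row‑$m$ cells, so its $\last_t$‑vector records the frontier of $n-f$ full columns. Since $p$ and $q$ each aggregate $\ge n-f$ of these vectors out of at most $n$ processes, and any two such quorums overlap, I would argue that the frontiers they record agree outside a set of at most $f$ columns --- intuitively the $\le f$ partial columns of \autoref{thm:blackboard}(\ref{thm:blackboard-full-partial}) --- which yields the bound. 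Making this step airtight, namely tracking carefully which frontier cells can be known to one good process but not the other given the overlapping $(n-f)$‑quorums, the deferral rules, and the retroactive corrections, is the crux; the ``$\bot$‑on‑one‑side'' and frontier‑confinement claims are comparatively routine.
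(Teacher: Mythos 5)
Your proposal follows the paper's proof quite closely in outline: the $\bot$-on-one-side claim via no-equivocation, the ``at most one uncertified cell per column'' observation driven by the FIFO/validation rule and \autoref{lem:nminusf-acks}, and the count via the $\geq n-f$ acknowledged row-$m$ cells. Where you diverge is the counting step, which you flag yourself as the unresolved crux, and which the paper settles with a more direct observation than your quorum-intersection sketch. The paper fixes $(t_k,r_k)$ to be the last write in column $k$ acknowledged by at least $n-f$ processes; by \autoref{lem:nminusf-acks}, $p$ and $q$ agree on that cell and everything before it, and the only later write that could land in a non-zeroth row of $(\BB_1,\ldots,\BB_t)$ is $\BB_{t_k}(r_k+1,k)$. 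It then simply notes that for $p$ to have finished iteration $t$ at all it must have validated $n-f$ $\ack(\BB_t(m,q))$'s for $n-f$ distinct $q$, so $(t_q,r_q)=(t,m)$ for those columns and $\BB_{t_q}(r_q+1,q)=\BB_t(m+1,q)$ does not exist. No intersection of $\last_t$-quorums is needed, and indeed would require extra care since a quorum element may be Byzantine; the paper's route avoids that entirely.

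One sub-step in your frontier-confinement argument is also not quite right as stated. For a non-frontier cell with $r=m$ you argue that the subsequent write $\BB_{t'+1}(0,i)$ forces $n-f$ acknowledgements of $\BB_{t'}(m,i)$ ``through the $\maxlast^{(i)}_{t'}$ value embedded in $\BB_{t'+1}(0,i)$.'' But the validation rule for $\BB_{t'+1}(0,i)$ only requires participants to have accepted $n-f$ $\last_{t'}$-vectors whose pointwise maximum equals $\maxlast^{(i)}_{t'}$; it does not in general imply that $\BB_{t'}(m,i)$ has $n-f$ acknowledgements (nor even that $i$ wrote to row $m$ -- its column may be partial). Fortunately this is harmless here: the paper sidesteps it by observing that all subsequent writes from $(t_k,r_k)$ other than $\BB_{t_k}(r_k+1,k)$ are zeroth-row writes, which are stripped from every $\BB^{(\cdot,t)}$ and thus can never be disagreement cells. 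Your overall claim ``disagreement only at $\last(i)$'' is morally equivalent, but the precise certification story for the row-$m\to$row-$0$ transition should go through stripping rather than $n-f$ acknowledgements.
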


\begin{proof}
The properties of reliable broadcast ensures that $\BB^{(p)}_t,\BB^{(q)}_t$
cannot contain \emph{distinct} non-$\perp$ values in any position.  
Therefore we must argue that they differ in at most $f$ positions.
Fix a process $k$ and let $\BB_{t_{k}}(r_{k},k)$ be the \emph{last}
of $k$'s blackboard writes for which it accepted at least $n-f$ 
$\ack(\BB_{t_{k}}(r_{k},k))$s.
By \autoref{lem:nminusf-acks}, 
$\BB^{(p,t)}_{t_k}(r_k,k) = \BB^{(q,t)}_{t_k}(r_k,k)$. 
Moreover, due to validation, $p,q$ have both
accepted all of $k$'s blackboard writes prior to $\BB_{t_k}(r_k,k)$.
Subsequent blackboard writes of $k$ that could appear in the local matrices of $p$ and $q$ 
are limited to $\BB_{t_k}(r_k+1,k)$ (if $r_k < m$) and 
$\BB_{t_k+1}(0,k),\ldots,\BB_t(0,k)$ (if $t_k < t$). 
This follows
by assumption on $(t_k,r_k)$: at the first time when both 
$p$ and $q$ finish iteration $t$, $k$ has not accepted sufficiently 
many acknowledgements to attempt any of these writes. 
Since we strip off the zeroth rows of each local view to form
$\BB^{(p)}_{t},\BB^{(q)}_{t}$, 
they may only disagree in column $k$ at $\BB_{t_k}(r_k+1,k)$. 
Now, for at least $n-f$ processes $k$ we 
have $(t_k,r_k)=(t,m)$. 
Since row $j_k+1=m+1$ does not exist in any $\BB$-matrix, 
$\BB^{(p,t)},\BB^{(q,t)}$ may only disagree in $f$ cells in total.
\end{proof}

\begin{lemma}
If $\BB_{t+1}(1,q)\neq \bot$, then by the time $p$ fixes $\BB^{(p,t+1)}$, 
it is aware of $q$'s history $\BB^{(q,t)}$ through blackboard $t$.
\end{lemma}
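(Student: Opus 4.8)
The plan is to show that, by the moment $p$ executes Line~\ref{line:maxlast} of iteration $t+1$ — the step at which $\BB^{(p,t+1)}$ is frozen — process $p$ has already accepted the value $\BB_{t+1}(0,q)=\maxlast^{(q)}_t$ together with every blackboard cell $\BB_{t'}(r,i)$ satisfying $r\in[1,m]$ and $(t',r)\le\maxlast^{(q)}_t(i)$. By \autoref{thm:reliable-broadcast} the values $p$ accepts agree with the true blackboard $\BB$, and by definition $\BB^{(q,t)}$ is obtained from $\BB$ by retaining exactly those cells and blanking the rest; so once $p$ knows $\maxlast^{(q)}_t$ and has accepted those cells it can reconstruct $\BB^{(q,t)}$, which is what ``aware of $\BB^{(q,t)}$'' means.

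First I would unwind what the hypothesis $\BB_{t+1}(1,q)\neq\bot$ forces. Because this write is accepted, some good process participated in its \ReliableBroadcast{}, which it does only after accepting $\ack(\BB_{t+1}(0,q))$ from $n-f$ distinct processes. At most $f$ of those processes are ever corrupted, so at least $n-2f$ of them are good throughout the execution; call this set $S_q$. Each $g\in S_q$ generated its $\ack(\BB_{t+1}(0,q))$ honestly (Line~\ref{line:ack}), so at that instant $g$ had validated $\BB_{t+1}(0,q)=\maxlast^{(q)}_t$ and still had $\neg\complete(t+1)$. Validating $\BB_{t+1}(0,q)$ in turn requires $g$ to have accepted $n-f$ vectors $\last^{(\cdot)}_t$ whose pointwise maximum is $\maxlast^{(q)}_t$ and to have accepted every cell each of those vectors references; together with the FIFO property of \autoref{thm:reliable-broadcast}, this means $g$ accepted every cell $\BB_{t'}(r,i)$ with $(t',r)\le\maxlast^{(q)}_t(i)$ — i.e.\ every non-$\bot$ cell of $\BB^{(q,t)}$ — \emph{before} it declared $\BB_{t+1}$ complete. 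Hence the vector $\last^{(g)}_{t+1}$ that $g$ later broadcasts (Line~\ref{line:until}) satisfies $\last^{(g)}_{t+1}(i)\ge\maxlast^{(q)}_t(i)$ for every $i$, and in particular $\last^{(g)}_{t+1}(q)\ge(t+1,0)$.

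Next I would connect this to $p$. When $p$ freezes $\BB^{(p,t+1)}$ it has validated $\last^{(\cdot)}_{t+1}$ vectors from a set $S_p$ of $n-f$ distinct processes. Since $|S_p|+|S_q|\ge(n-f)+(n-2f)=2n-3f>n$ (using $f<n/3$), the sets intersect; pick $g\in S_p\cap S_q$. Validating $\last^{(g)}_{t+1}$ obliges $p$ to have accepted $\BB_{t'}(r,i)$ for $(t',r)=\last^{(g)}_{t+1}(i)$ and hence, by FIFO, every earlier cell of column $i$ as well — in particular every cell up to $\maxlast^{(q)}_t(i)\le\last^{(g)}_{t+1}(i)$, and in column $q$ the cell $\BB_{t+1}(0,q)=\maxlast^{(q)}_t$ itself, since $(t+1,0)\le\last^{(g)}_{t+1}(q)$. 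Thus $p$ knows $\maxlast^{(q)}_t$ and has accepted every non-$\bot$ cell recorded in $\BB^{(q,t)}$, which completes the argument.

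The step I expect to be the main obstacle is making the chain of prerequisite implications airtight: I must invoke the modified \ReliableBroadcast{} (which blocks a process from echoing or readying — and from treating as validated — any message whose prerequisite cells and $\last$-vectors have not yet been accepted) at exactly the right points, and I must order the events ``$g$ acknowledges $\BB_{t+1}(0,q)$'', ``$g$ declares $\BB_{t+1}$ complete'', ``$g$ broadcasts $\last^{(g)}_{t+1}$'', and ``$p$ validates $\last^{(g)}_{t+1}$'' so that the monotonicity of $\last^{(g)}$ is actually applicable. The counting inequality $2n-3f>n$ is routine, but it is important that $S_q$ be defined from processes that are good \emph{throughout} — this is why $|S_q|\ge n-2f$ rather than $n-f$ — so that both their acknowledgements and their $\last$-vectors are honest.
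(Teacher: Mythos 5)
Your proof is correct and takes essentially the same route as the paper's: the paper simply applies \autoref{lem:nminusf-acks} to the $n{-}f$ acknowledged write $\BB_{t+1}(0,q)=\maxlast^{(q)}_t$ and then appeals to the validation chain, whereas you inline that lemma's intersection argument ($|S_p\cap S_q|\ge 1$ since $2n-3f>n$) and trace the $\last^{(g)}_{t+1}$ monotonicity explicitly. The only difference is that you make the reconstruction step — that accepting $\BB_{t+1}(0,q)$ forces acceptance of every cell below $\maxlast^{(q)}_t$ — explicit rather than leaving it to the reader.
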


\begin{proof}
Before $q$ wrote anything to $\BB_{t+1}(1,q)$ it must have written
$\BB_{t+1}(0,q)\gets \maxlast^{(q)}_{t}$ and caused $n-f$ 
acknowledgements $\ack(\BB_{t+1}(0,q))$ to be broadcast.
By \autoref{lem:nminusf-acks} every process will accept $q$'s write to 
$\BB_{t+1}(0,q)$ before fixing $\BB^{(p,t+1)}$, and hence be able
to reconstruct $\BB^{(q,t)}$ from $\maxlast^{(q)}_t$.
\end{proof}

In conclusion, we have proved the following theorem:

\blackboardthm*

In the context of our implementation of Bracha's algorithm, 
the purpose of \autoref{thm:blackboard}(3) is to ensure that 
if $q$ writes anything to $\BB_{t+1}$ (specifically $\BB_{t+1}(1,q)\neq \bot$)
that every other process $p$ can determine $w_q$, 
which is a function of $q$'s history $\BB^{(q,t)}$.

\section{Concentration Bounds}\label{sect:tailbounds}

Random variables $X_1,\dots,X_n$ are \emph{mutually independent} (or, just \emph{independent}) if for any sequence of real numbers $x_1,\dots,x_n$, the events $\{X_1 \leq x_1\},\dots,\{X_n \leq x_n\}$ are independent.

\begin{theorem} [Chernoff-Hoeffding bound]
\label{thm:hoeffding}
Let $X_1,\dots,X_n$ be independent random variables and let $X = \sum_{i=1}^n X_i$. If each $X_i \in [a_i,b_i]$ for some $a_i \leq b_i$, then, for all $t \geq 0$,
\[
\Pr\{ X - \mathbb{E}[X] \geq t \}, \Pr\{ X - \mathbb{E}[X] \leq -t \} \leq \exp\left(-\frac{2t^2}{\sum_{i=1}^n (b_i-a_i)^2}\right) \, .
\]
\end{theorem}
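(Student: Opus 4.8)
The plan is to prove Theorem~\ref{thm:hoeffding} by the standard exponential-moment (Chernoff) method, whose one nontrivial ingredient is Hoeffding's lemma bounding the moment generating function of a bounded, mean-zero random variable. First I would reduce to the upper tail: the lower tail bound $\Pr\{X - \E[X] \le -t\}$ follows from the upper tail bound applied to $-X_1,\dots,-X_n$, which are independent, satisfy $-X_i \in [-b_i,-a_i]$, and have $\sum_i((-a_i)-(-b_i))^2 = \sum_i(b_i-a_i)^2$. So fix $t \ge 0$ and focus on $\Pr\{X - \E[X] \ge t\}$.

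For the upper tail, set $Y_i = X_i - \E[X_i]$, so that $\E[Y_i]=0$, $Y_i$ lies in an interval of length $b_i-a_i$, and $\sum_i Y_i = X - \E[X]$. For any $s>0$, Markov's inequality applied to $e^{s\sum_i Y_i}$ gives
\[
\Pr\Big\{\sum_i Y_i \ge t\Big\} \le e^{-st}\,\E\big[e^{s\sum_i Y_i}\big] = e^{-st}\prod_i \E\big[e^{sY_i}\big],
\]
where the factorization uses mutual independence. The main step is Hoeffding's lemma: if $\E[Y]=0$ and $Y\in[\ell,u]$, then $\E[e^{sY}] \le e^{s^2(u-\ell)^2/8}$ for all $s\in\mathbb{R}$. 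I would prove this by convexity of $y\mapsto e^{sy}$: for $y\in[\ell,u]$, $e^{sy} \le \tfrac{u-y}{u-\ell}e^{s\ell} + \tfrac{y-\ell}{u-\ell}e^{su}$; taking expectations and using $\E[Y]=0$ yields $\E[e^{sY}] \le e^{\varphi(s)}$ where $\varphi(s) = \log\!\big(\tfrac{u}{u-\ell}e^{s\ell} - \tfrac{\ell}{u-\ell}e^{su}\big)$. A short computation shows $\varphi(0)=\varphi'(0)=0$ and $\varphi''(s) = p(1-p)(u-\ell)^2 \le (u-\ell)^2/4$ for a value $p\in[0,1]$ depending on $s$, so by Taylor's theorem $\varphi(s) \le s^2(u-\ell)^2/8$.

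Plugging Hoeffding's lemma (with $u-\ell = b_i-a_i$) into the product bound gives
\[
\Pr\{X - \E[X] \ge t\} \le e^{-st}\prod_i e^{s^2(b_i-a_i)^2/8} = \exp\Big(-st + \tfrac{s^2}{8}\sum_i (b_i-a_i)^2\Big),
\]
and optimizing the exponent over $s>0$ — it is minimized at $s = 4t/\sum_i(b_i-a_i)^2$ — yields $\exp\big(-2t^2/\sum_i(b_i-a_i)^2\big)$, as claimed; the lower tail follows via the reduction above. The only real obstacle is verifying $\varphi''(s)\le (u-\ell)^2/4$ in Hoeffding's lemma, which reduces to the elementary inequality $p(1-p)\le 1/4$; the rest is bookkeeping.
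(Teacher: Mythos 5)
The paper states Theorem~\ref{thm:hoeffding} as a standard reference result in Appendix~\ref{sect:tailbounds} and does not include a proof of it, so there is no paper proof to compare against. Your proof is the canonical argument: reduce to the upper tail by negation, center the variables, apply Markov's inequality to $e^{s\sum_i Y_i}$, factor by independence, invoke Hoeffding's lemma $\E[e^{sY}]\le e^{s^2(u-\ell)^2/8}$ for mean-zero bounded $Y$ (proved via convexity and the bound $\varphi''(s)\le (u-\ell)^2/4$), and optimize over $s$. All the steps check out, including the choice $s = 4t/\sum_i(b_i-a_i)^2$, which yields the stated exponent $-2t^2/\sum_i(b_i-a_i)^2$; this is a complete and correct proof.
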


A sequence of random variables $X_0,X_1,\dots$ is a \emph{submartingale} if $\E[X_t \mid X_{t-1},\dots,X_0] \geq X_{t-1}$ for all $t$; it is a \emph{supermartingale} if $\E[X_t \mid X_{t-1},\dots,X_0] \leq X_{t-1}$; and it is a \emph{martingale} if it is both a submartingale and a supermartginale.

\begin{theorem} [Azuma's inequality]
\label{thm:azuma}
Let $X_0,X_1,\dots$ be a martingale such that, for each $i \geq 1$, $|X_i - X_{i-1}| \leq c_i$ for some $c_i \geq 0$. Then for all $n \geq 1$ and $t \geq 0$, 
\[
    \Pr\{X_n - X_0 \geq t\}, \Pr\{X_n - X_0 \leq -t \} \leq \exp\left(-\frac{2t^2}{\sum_{i=1}^n c_i^2}\right) \, .
\]
Moreover, the bound on $\Pr\{X_n - X_0 \geq t\}$ holds when $\{X_i\}_{i\geq 0}$ is a supermartingale while the bound on $\Pr\{X_n - X_0 \leq -t\}$ holds when $\{X_i\}_{i\geq 0}$ is a submartingale.
\end{theorem}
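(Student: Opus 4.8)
The plan is to prove both tail bounds by the exponential‑moment (Chernoff--Cram\'er) method, reducing everything to a one‑step estimate on the conditional moment generating function of the martingale increments. Write $D_i = X_i - X_{i-1}$ and let $\mathcal{F}_{i-1}$ denote the information contained in $X_0,\dots,X_{i-1}$; the hypotheses give $\E[D_i \mid \mathcal{F}_{i-1}] = 0$ and $|D_i| \le c_i$. We may assume $X_0 = 0$, and it suffices to bound $\Pr\{X_n \ge t\}$, since the lower‑tail bound is obtained by applying the same argument to the martingale $(-X_i)$, and the super/submartingale refinement is handled by a small tweak described below.

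First I would fix $\lambda > 0$ and apply Markov's inequality to the nonnegative variable $e^{\lambda X_n}$, giving $\Pr\{X_n \ge t\} \le e^{-\lambda t}\,\E[e^{\lambda X_n}]$. The key step is to bound $\E[e^{\lambda X_n}]$ by peeling off one increment at a time: using $e^{\lambda X_n} = e^{\lambda X_{n-1}}\,e^{\lambda D_n}$ together with the tower property, $\E[e^{\lambda X_n}] = \E\big[e^{\lambda X_{n-1}}\,\E[e^{\lambda D_n}\mid \mathcal{F}_{n-1}]\big]$. To control the inner conditional expectation I would invoke Hoeffding's lemma: a mean‑zero random variable whose range lies in an interval of length $\ell$ has moment generating function at most $e^{\lambda^2 \ell^2/8}$; applied conditionally to $D_n$ (whose range has length at most $2c_n$) this bounds $\E[e^{\lambda D_n}\mid\mathcal{F}_{n-1}]$ by a factor of the form $e^{\Theta(\lambda^2 c_n^2)}$. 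Iterating this down to $X_0 = 0$ yields $\E[e^{\lambda X_n}] \le \exp\big(\Theta(\lambda^2)\sum_{i=1}^n c_i^2\big)$, and optimizing the resulting bound $\Pr\{X_n \ge t\} \le \exp\big(-\lambda t + \Theta(\lambda^2)\sum_i c_i^2\big)$ over $\lambda$ gives the stated sub‑Gaussian tail.

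For the super/submartingale strengthening, only the one‑step estimate changes. If $(X_i)$ is a supermartingale then $\mu_i := \E[D_i \mid \mathcal{F}_{i-1}] \le 0$; writing $\E[e^{\lambda D_i}\mid\mathcal{F}_{i-1}] = e^{\lambda \mu_i}\,\E[e^{\lambda(D_i - \mu_i)}\mid\mathcal{F}_{i-1}]$, Hoeffding's lemma still applies to the centered variable $D_i - \mu_i$ (still supported in an interval of length $\le 2c_i$), while the leftover factor $e^{\lambda\mu_i}\le 1$ for $\lambda>0$; hence the upper‑tail bound goes through assuming only the supermartingale property. Symmetrically, the $\Pr\{X_n - X_0 \le -t\}$ bound needs only the submartingale property.

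I expect the sole non‑routine ingredient to be Hoeffding's lemma itself, the moment generating function bound for a bounded mean‑zero variable; everything else is a routine induction with conditional expectations and a single‑variable optimization. I would either cite Hoeffding's lemma or prove it briefly by setting $\psi(\lambda) = \log \E[e^{\lambda Y}]$, checking $\psi(0) = \psi'(0) = 0$, and observing that $\psi''(\lambda)$ is the variance of $Y$ under the $\lambda$‑tilted law, which is at most $\ell^2/4$ for any $Y$ supported in an interval of length $\ell$; a second‑order Taylor estimate then finishes it. Since this is a classical result, I do not anticipate a genuine obstacle — the only care needed is in the bookkeeping of the conditioning (applying the tower property in the correct order and checking the MGF estimate still holds conditionally).
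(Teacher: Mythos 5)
The paper itself offers no proof of this theorem -- it is stated in the appendix as a citation of a standard concentration bound -- so there is no paper argument to compare against. Your plan is the standard and correct route: Markov applied to $e^{\lambda X_n}$, a one-step conditional MGF bound via Hoeffding's lemma, iteration through the filtration, and a single-variable optimization, with the super/submartingale refinement handled by noting that $e^{\lambda\mu_i}\le 1$ when $\lambda$ and $\mu_i$ have opposite signs. That part is all fine.

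However, if you carry out the computation precisely rather than writing $\Theta(\lambda^2 c_i^2)$, you will discover that you cannot reach the constant the paper states. With $|X_i-X_{i-1}|\le c_i$, the increment $D_i$ lies in an interval of length $2c_i$, so Hoeffding's lemma gives $\E[e^{\lambda D_i}\mid\mathcal{F}_{i-1}]\le e^{\lambda^2(2c_i)^2/8}=e^{\lambda^2 c_i^2/2}$. Iterating and optimizing over $\lambda$ yields
\[
\Pr\{X_n-X_0\ge t\}\le \exp\!\left(-\frac{t^2}{2\sum_{i=1}^n c_i^2}\right),
\]
which is the classical Azuma bound and is weaker by a factor of $4$ in the exponent than the $\exp(-2t^2/\sum c_i^2)$ printed in the theorem. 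The printed version is in fact false: take $n=1$, $X_0=0$, $X_1\in\{-1,+1\}$ uniform, $c_1=1$, $t=1$; then $\Pr\{X_1\ge 1\}=1/2$ while the claimed bound would be $e^{-2}<1/2$. The likely source of the slip is confusing the increment bound $c_i$ with the increment \emph{range} (which is $2c_i$), i.e.\ applying the Hoeffding-style formula $\exp(-2t^2/\sum(b_i-a_i)^2)$ with $c_i$ in place of $b_i-a_i=2c_i$. The paper's actual applications are insensitive to the constant (they only need a bound of the form $n^{-\Omega(c)}$), but as a matter of the theorem statement you should replace the exponent by $-t^2/(2\sum c_i^2)$, and your write-up should make the constants explicit rather than hiding them in $\Theta(\cdot)$ so that the mismatch is not silently papered over.
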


A function $f : \mathbb{R} \to \mathbb{R}$ is \emph{convex} if for any $x,y \in \mathbb{R}$ and $t \in [0,1]$, $f((1-t)x + ty) \leq (1-t)f(x) + tf(y)$.

\begin{theorem} [Jensen's inequality]
\label{thm:jensen}
Let $f : \mathbb{R} \to \mathbb{R}$ be convex. Then, for any random variable $X$,
\[
    f\left(\E[X]\right) \leq \E[f(X)]\, .
\]
\end{theorem}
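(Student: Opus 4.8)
The plan is to prove Jensen's inequality via the \emph{supporting line} property of convex functions: at the point $\mu=\E[X]$ there is an affine lower bound for $f$ that is tight at $\mu$, and integrating that bound against the distribution of $X$ gives the result. Concretely, I would first establish the supporting-line lemma: for a convex $f:\mathbb{R}\to\mathbb{R}$ and any $\mu\in\mathbb{R}$ there exist $a,b\in\mathbb{R}$ with $a\mu+b=f(\mu)$ and $ax+b\le f(x)$ for all $x$. The proof uses the standard fact that convexity forces the slope function $(x,y)\mapsto\frac{f(y)-f(x)}{y-x}$ to be nondecreasing in each coordinate; hence the one-sided derivatives $f'_-(\mu)=\sup_{x<\mu}\frac{f(\mu)-f(x)}{\mu-x}$ and $f'_+(\mu)=\inf_{x>\mu}\frac{f(x)-f(\mu)}{x-\mu}$ exist, are finite, and satisfy $f'_-(\mu)\le f'_+(\mu)$. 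Taking any slope $a$ in this interval and $b=f(\mu)-a\mu$, the inequality $f(x)\ge ax+b$ for $x>\mu$ is a rearrangement of $\frac{f(x)-f(\mu)}{x-\mu}\ge f'_+(\mu)\ge a$, and the case $x<\mu$ is symmetric.

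Next I would apply this lemma with $\mu=\E[X]$ (which the statement implicitly assumes to be finite). Substituting the random variable $X$ into the pointwise bound gives $f(X)\ge aX+b$ almost surely, and then monotonicity and linearity of expectation yield $\E[f(X)]\ge a\E[X]+b=a\mu+b=f(\mu)=f(\E[X])$, which is exactly the claim.

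The hard part is the supporting-line lemma — specifically the monotonicity of difference quotients and the resulting existence of finite one-sided derivatives for a convex function on all of $\mathbb{R}$; everything afterwards is a single application of linearity of expectation. One minor subtlety worth flagging is integrability: we need $\E[X]$ finite for the right-hand side to be meaningful, and since $f(X)$ is bounded below by the integrable affine function $aX+b$, the quantity $\E[f(X)]$ is well-defined in $(-\infty,+\infty]$, so the inequality holds (trivially) even when $\E[f(X)]=+\infty$.
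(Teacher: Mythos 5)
Your proof is correct. The paper states Jensen's inequality in its appendix of standard tools (alongside Chernoff--Hoeffding and Azuma) without giving a proof, so there is nothing to compare against; your supporting-line argument --- monotone difference quotients give finite one-sided derivatives, any slope $a \in [f'_-(\mu), f'_+(\mu)]$ at $\mu = \E[X]$ yields an affine minorant tight at $\mu$, and taking expectations finishes --- is the canonical proof, and your remarks on integrability (that $\E[f(X)]$ is well-defined in $(-\infty,+\infty]$ because $f(X)$ is bounded below by an integrable affine function) are a correct and welcome bit of care.
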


\section{Rising Tide Algorithm Proofs}\label{section:rising-tide-proofs}

The goal to this entire section is to prove \autoref{lem:rising-tide-output}, the heart of bounding different perspectives from each process.

\subsection{Dependency Graphs}

We first introduce an idea of a \emph{dependency graph} that captures the 
moments when vertices become saturated in \autoref{alg:rising-tide}.
We will then use structural properties of dependency 
graphs to prove~\autoref{lem:rising-tide-output}.

\begin{definition}[Dependency Graph]
Let $D_G$ be a \emph{directed} graph with the same set of vertices $V(D_G)=V$.
Consider the execution of \autoref{alg:rising-tide}
on $G$.
For each edge $e=(i, j)\in E$, if at the moment $e$ is removed from the working set $E'$ (Line~\ref{line:remove-edges}), 
$i$ (resp. $j$) is saturated, then we include in $D_G$ a directed edge $j\to i$ (resp. $i\to j$).
Notice that if both $i$ and $j$ are saturated simultaneously,
then $D_G$ includes both edges $i\to j$ and $j\to i$.
\end{definition}

We first state a useful continuity property of \RisingTide, that if we continuously deform 
the input capacities, the output fractional matching also changes continuously.

\begin{lemma}[The Continuity Lemma]\label{lem:continuity-lemma}
Let $G$ and $H$ be two fractional matching instances where every vertex- 
and edge-capacity differs by at most $\xi$.
Then, for every edge $e$, $|\mu_G(e) - \mu_H(e)| \le F(n)\xi$ for some function $F$ 
which depends only on the size of the graph but not on $\xi$.
\end{lemma}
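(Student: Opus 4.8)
The plan is to reduce the statement to a one–dimensional problem by sliding $G$ continuously to $H$. For $s\in[0,1]$ let $G_s$ be the fractional–matching instance with capacities $c^{G_s} = (1-s)\,c^{G} + s\,c^{H}$, interpolated coordinate by coordinate on both vertices and edges, so that $G_0=G$, $G_1=H$, and every capacity of $G_s$ has the form $c_i^{G} + s\,(c_i^{H}-c_i^{G})$ with $|c_i^{H}-c_i^{G}|\le \xi$. It then suffices to show that $s\mapsto \mu_{G_s}(e)$ is continuous on $[0,1]$ and piecewise affine with a number of pieces bounded by a function of $n$, and that on each affine piece its slope has absolute value at most $F_0(n)\,\xi$ for a function $F_0$ depending only on $n$; then $|\mu_G(e)-\mu_H(e)| \le \int_0^1 \big|\tfrac{d}{ds}\mu_{G_s}(e)\big|\,ds \le F_0(n)\,\xi$, and we take $F(n)=F_0(n)$.

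The key structural fact is that an execution of \RisingTide{} (\autoref{alg:rising-tide}) is governed by a finite \emph{combinatorial type}: the ordered list of while-loop iterations, the working set $E'$ used in each, and which edge or vertex becomes saturated (equivalently, which term realizes the minimum defining $\epsilon$) at the end of each iteration. Since each iteration deletes at least one edge from $E'$, there are at most $|E|$ iterations, and the number of combinatorial types is at most $\big(2^{|E|}(|E|+|V|)\big)^{|E|}$, a function of $n$ alone. On the polyhedral region of capacity space where the type equals a fixed value $\theta$, the increment $\epsilon$ chosen in each iteration is the selected term of a min, hence a fixed affine function of the capacities and of the $\mu$-values already chosen; by induction on iterations $\mu_G(e)$ is an affine function $\sum_i a_i^{\theta} c_i$ of the capacity vector, with coefficients $(a_i^\theta)$ depending only on $\theta$. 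Restricting to the segment $G_s$, on any subinterval of $[0,1]$ where the type of the execution on $G_s$ is constant we get $\mu_{G_s}(e)=\sum_i a_i^{\theta}\big(c_i^{G}+s(c_i^{H}-c_i^{G})\big)$, which is affine in $s$ with slope $\sum_i a_i^{\theta}(c_i^{H}-c_i^{G})$ of magnitude at most $\big(\sum_i |a_i^{\theta}|\big)\xi$; taking $F_0(n)=\max_\theta \sum_i |a_i^\theta|$ over the finitely many types gives the slope bound, and since the type can change only at finitely many values of $s$ (those at which two saturation events coincide or a positive edge capacity reaches $0$), there are only boundedly many pieces.

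The one nontrivial point — and the step I expect to be the main obstacle — is \emph{continuity} of $s\mapsto\mu_{G_s}(e)$ at these finitely many breakpoints, since a priori a tie (two elements saturating at the same instant) or the appearance/disappearance of a zero-capacity edge could cause a jump. I would settle this by induction on the number of while-loop iterations, showing that the partial matching after $k$ iterations depends continuously on the capacities: at a parameter where the minimizer in iteration $k$ is tied, the candidate continuations merely delete (weakly) more edges than the generic one, but each such extra edge has either its own capacity or an endpoint's capacity already met, so freezing it contributes nothing further to $\mu$; and an edge whose capacity crosses $0$ is frozen at value $0$ in the limit, matching the instance in which it is absent. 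The dependency graph introduced just before this lemma is exactly the bookkeeping for which endpoint's saturation freezes each edge, and it makes this tie analysis routine — near a breakpoint the dependency graph of $G_s$ changes only by adding the edges forced by the coincident saturation, which leaves the frozen values unchanged. With continuity established, $s\mapsto\mu_{G_s}(e)$ is continuous and piecewise affine with finitely many pieces, so the integral bound above applies and yields \autoref{lem:rising-tide-output}'s weaker cousin stated here.
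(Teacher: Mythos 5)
Your proposal is correct, but it takes a genuinely different route from the paper's. The paper couples the two executions directly: it runs \RisingTide{} simultaneously on $G$ and $H$, stops at the first saturation event that occurs in one but not the other, observes that the residual capacity vectors at that moment differ by at most $n\xi$, and inducts on the $O(n^2)$ saturation events to conclude $F(n)=n^{O(n^2)}$. Your proof instead interpolates $G_s=(1-s)G+sH$, observes that \RisingTide{} has finitely many combinatorial types each defining a polyhedral region of capacity space on which $\mu(e)$ is an affine function $\sum_i a_i^\theta c_i$, and then bounds $|\mu_G(e)-\mu_H(e)|$ by integrating the slope along the segment, taking $F(n)=\max_\theta\sum_i|a_i^\theta|$.

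Both arguments carry the same essential burden, just moved to different places. The paper's coupling argument needs the assertion that one divergent saturation event inflates the capacity discrepancy by at most a factor of $n$; the paper leaves the bookkeeping terse. Your interpolation argument needs continuity of $s\mapsto\mu_{G_s}(e)$ at the breakpoints where the combinatorial type changes, which you correctly flag as the delicate point and sketch via the tie-analysis (the extra edges frozen at a tie are already saturated, so their frozen values and hence the residual instance agree in the limit, and a zero-capacity edge is frozen at $0$). Your route has the mild conceptual advantage of exposing the piecewise-affine structure of $\mu$ as a function of the capacity vector, which makes the existence of some $F(n)$ feel almost automatic once continuity is in hand; the paper's route is shorter and stays closer to the algorithm's operational semantics. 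Since both the paper and the rest of Appendix C only ever invoke the \emph{existence} of $F(n)$, never its magnitude, the two proofs are interchangeable for the intended use.

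Two small remarks worth noting if you polish this. First, the induction behind continuity is more naturally phrased over water levels (the cumulative $\epsilon$) rather than over while-loop iterations, since near a breakpoint the number of iterations itself may jump by one as a tie splits into two distinct events. Second, when you write that $\mu_G(e)$ is ``an affine function of the capacity vector with coefficients depending only on $\theta$,'' it is worth making explicit that this is only asserted on the (relatively open) polyhedral cell where the type is exactly $\theta$, and that the coefficient bound $\sum_i|a_i^\theta|$ is taken over the finite set of types reachable from graphs on $n$ vertices; both of these are implicit in your writeup but are exactly the places a careless reading could go wrong.
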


\begin{proof}
Without loss of generality we can assume
that each edge capacity $c_E^G(i,j) \leq \min\{c_V^G(i),c_V^G(j)\}$
is always bounded by the capacities of its endpoints.

Imagine running \RisingTide{} simultaneously on both $G$ and $H$,
stopping at the first saturation event that occurs in, say, 
$G$ but not $H$. (A ``saturation event'' is the saturation of a vertex or edge with \emph{non-zero} capacity.)
Let $\mu_G',\mu_H'$ be the fractional matchings at this time 
and $G',H'$
be the residual graphs, 
i.e., obtain new capacities by subtracting each $\mu_G'(i,j)$ from $c_V^G(i),c_V^G(j),$ and $c_E^G(i,j)$.
The maximum difference in vertex- or edge-capacities 
between $G',H'$ is $n\xi$.  
The argument can be applied inductively to $G',H'$, 
and since there are $O(n^2)$ saturation events, the maximum
difference between any capacity (and hence an $\mu$-value) 
is always bounded by $F(n)\xi$, where $F(n)=n^{O(n^2)}$.
\end{proof}

Note that the magnitude of $F$ is immaterial, so long as it depends only on $n$.
\autoref{lem:continuity-lemma} allows us to make several simplifying assumptions.
\begin{itemize}
    \item[A1.] First, although we are comparing
two graphs $G,H$ with possibly many capacity differences, we can assume w.l.o.g. that
they differ in precisely one vertex- or edge-capacity.
    \item[A2.] Second, we can assume that the dependency
graphs for $G$ and $H$ are identical.
    \item[A3.] Third, we can assume, via infinitesimal perturbations, that no two 
    vertices are saturated simultaneously.  In particular, this implies that $D_G$ is acyclic.  (See~\autoref{lem:rising-tide-output-properties}.)
\end{itemize}

\begin{lemma}[Basic Properties between $\mu$ and $D_G$]\label{lem:rising-tide-output-properties}
Assume graph $G$ satisfies assumption (A3).
Let $\mu$ be the output of $G$ from \autoref{alg:rising-tide}. Then: 
\begin{enumerate}
    \item[(1)] For any two edges $e_1$ and $e_2\in E$, if $e_1$ gets removed from $E'$ before $e_2$, 
    then $\mu(e_1) < \mu(e_2)$.
    \item[(2)] For each $u\in V$, all edges directed towards $u$ in $D_G$ have the same $\mu$-value.
    \item[(3)] For any edge $u\to v$ in $D_G$ and any edge $(v, w)\in E$, $\mu(u, v)\ge \mu(v, w)$.
    \item[(4)] (Monotonic Path Property) For any walk $u_0 \to u_1\to u_2\to \cdots $ on $D_G$, their $\mu$ values must be non-increasing. That is, $\mu(u_0, u_1)\ge \mu(u_1, u_2) \ge \cdots$.
    \item[(5)] (Directed Acyclic Graph Property) $D_G$ is a DAG.
\end{enumerate}
\end{lemma}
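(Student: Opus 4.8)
The plan is to anchor all five parts on one elementary invariant of \RisingTide{} (\autoref{alg:rising-tide}): at every point of the execution, the edges still present in $E'$ all carry the same $\mu$-value, and this common value strictly increases with each iteration of the while-loop. This holds by a trivial induction --- every edge of $E'$ starts at $\mu=0$, each iteration adds the same $\epsilon>0$ to exactly the edges of $E'$, once an edge leaves $E'$ its $\mu$-value is frozen, and $E'$ only shrinks. Call this common value the \emph{time}, and for a vertex $u$ that becomes saturated let $\nu(u)$ be the time at which it does; under (A3), distinct vertices have distinct $\nu$-values. Two further remarks will be used repeatedly: when a vertex $u$ saturates, every edge incident to $u$ that is still in $E'$ is deleted in that same iteration; and consequently an edge $(u,w)$ contributes an arc into $u$ in $D_G$ exactly when it is deleted in the iteration in which $u$ saturates (if $u$ were saturated earlier, $(u,w)$ would already be gone).

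Parts (1) and (2) then follow immediately. For (1): if $e_1$ leaves $E'$ in a strictly earlier iteration than $e_2$, then $e_2$ absorbs at least one more positive increment after $\mu(e_1)$ is frozen, so $\mu(e_1)<\mu(e_2)$. For (2): every arc into $u$ in $D_G$ is contributed by an edge deleted in the iteration where $u$ saturates, and all edges deleted in that iteration (indeed, all edges then in $E'$) carry the common value $\nu(u)$.

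Part (3) is the heart of the argument. Let $u\to v$ be an arc of $D_G$; by the remark above, $(u,v)$ is deleted exactly when $v$ saturates, so $\mu(u,v)=\nu(v)$. Take any $(v,w)\in E$: if $c_E(v,w)=0$ then $\mu(v,w)=0\le\mu(u,v)$; otherwise $(v,w)$ lies in $E'$ initially and is deleted no later than the iteration in which $v$ saturates, so by monotonicity of the time $\mu(v,w)\le\nu(v)=\mu(u,v)$. Part (4) is a telescoping application of (3): for consecutive arcs $u_{i-1}\to u_i$ and $u_i\to u_{i+1}$ of a walk, $(u_i,u_{i+1})$ is a genuine edge of $E$, hence $\mu(u_{i-1},u_i)\ge\mu(u_i,u_{i+1})$. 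For (5), suppose $D_G$ contained a directed cycle through at least two distinct vertices; let $v$ be the cycle-vertex minimizing $\nu(v)$ (unique by (A3)), and let $u,w$ be its cycle-predecessor and cycle-successor, with $w\neq v$. Applying (3) to the arc $u\to v$ and the edge $(v,w)$ gives $\nu(v)=\mu(u,v)\ge\mu(v,w)=\nu(w)$, contradicting $\nu(v)<\nu(w)$.

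The one place I expect to need genuine care is the status of self-loops $(i,i)\in E$ in Part (5): one can exhibit an $i$ with a second, heavily loaded incident edge for which $(i,i)$ is deleted in the very iteration that $i$ saturates, which would literally create a loop $i\to i$ in $D_G$ --- so infinitesimal perturbation alone does not rule self-loops out. The clean fix is to treat $D_G$ as having no self-loops (simply do not add the arc $i\to i$), which is harmless since a self-loop can never lie on a cycle through $\ge 2$ distinct vertices and is therefore never used by the path/reachability arguments that consume the DAG property; I would state (5) with that convention in place. Apart from this bookkeeping, the whole lemma rests on the single monotonicity invariant together with the observation that an arc $w\to u$ of $D_G$ is born precisely in the iteration where $u$ becomes saturated.
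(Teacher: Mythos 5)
Your proof is correct and follows essentially the same approach as the paper's: both arguments rest on the monotonic-time invariant of \RisingTide{} (the common $\mu$-value of edges still in $E'$ strictly increases each iteration) and on the observation that an arc $w\to u$ in $D_G$ arises exactly when $(u,w)$ is deleted in the iteration in which $u$ saturates, with (5) derived from (3)/(4) plus (A3) in both cases (the paper shows all $\mu$-values around a cycle are equal, you pick the minimum-$\nu$ vertex, but these are the same argument). Your self-loop caveat for part~(5) is a legitimate edge case that the paper's proof also glosses over --- a self-loop $i\to i$ in $D_G$ involves a single vertex, so (A3) alone does not rule it out, and the excess graph does contain self-loop edges $(i,i)$ --- and your suggested convention of omitting self-loop arcs from $D_G$ is harmless because the downstream uses of the dependency graph (downward-closed sets, maximal elements, and paths to $s$ in the proof of \autoref{lem:rising-tide-diff-by-one-vertex}) are unaffected by self-loops.
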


\begin{proof}
To show (1), it suffices to observe that in \autoref{alg:rising-tide} the fractional matching $\mu$ grows strictly increasing at each iteration.

\medskip

To show (2), it suffices to show that for each vertex $u\in V$ with any two incoming edges $v\to u$ and $w\to u$ on $D_G$,
$\mu(v, u) = \mu(w, u)$.
Suppose conversely and without loss of generality $\mu(v, u) > \mu(w, u)$.
By the time $(w, u)$ gets removed from the working set $E'$, $u$ is already saturated.
However, it is now impossible to increase $\mu(v, u)$ anymore, contradicting to the assumption that $\mu(v, u) > \mu(w, u)$.

\medskip

To show (3), we notice that at the time $(u, v)$ is removed from $E'$, $v$ is saturated.
At this moment any edge $(v, w)\in E$ incident to $v$ cannot increase its $\mu$ value anymore.
Hence, $(v, w)$ will be removed from $E'$ at the same time with $(u, v)$ or prior to the time when $(u, v)$ is removed from $E'$.
Thus, by (1) we have $\mu(u, v)\ge \mu(v, w)$. (4) follows from (3) directly.

\medskip

To show (5), assume contradictory that there exists a cycle $u_0\to u_1\to\cdots \to u_0$ in $D_G$. By the monotonic path property (4), all edges $\mu(u_i, u_{i+1})$ have the same fractional value when they were removed from the working set $E'$ in the rising tide algorithm. Moreover, by definition of $D_G$, all vertices are simultaneously saturated, which contradicts (A3).
\end{proof}

Henceforth (A3) is assumed to hold in all graphs.

\subsection{Proof of~\autoref{lem:rising-tide-output}}

To prove~\autoref{lem:rising-tide-output}, it suffices to show (via an interpolating argument) that the statement holds whenever (1) exactly one saturated vertex changes capacity but no edge changes capacity, or (2) exactly one saturated edge changes capacity but no vertex changes capacity.
Moreover, with the continuity lemma (\autoref{lem:continuity-lemma}) it suffices to prove the statements (1) and (2) with the assumption that $D_G=D_H$.


We start with some observations when there is only one change on the capacities between $G$ and $H$.

\begin{lemma}\label{observation:self-saturated-edges}
Let $G$ and $H$ be two input graphs with the same dependency graph $D := D_G=D_H$.
If $(i, j)$ is an edge for which neither
$i\to j$ nor $j\to i$ appear in $D$, then $\mu_G(i, j) = \mu_H(i, j)$.
\end{lemma}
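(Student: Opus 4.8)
The plan is to read the conclusion straight off the removal rule in \autoref{alg:rising-tide} (Line~\ref{line:remove-edges}) together with the definition of the dependency graph, after first fixing a trivial edge case. Note that by assumption (A1) the single capacity in which $G$ and $H$ differ is not $(i,j)$ — this lemma is only invoked on edges whose capacity is shared by $G$ and $H$, cf.\ the proof of \autoref{lem:rising-tide-output} — so $c_E^G(i,j)=c_E^H(i,j)=:c$. If $c=0$, then $(i,j)$ is never inserted into the working set $E'$ in either execution, so $\mu_G(i,j)=\mu_H(i,j)=0$ and we are done.

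So assume $c>0$, whence $(i,j)\in E'$ at the start of both runs. Since the while loop of \autoref{alg:rising-tide} runs until $E'=\emptyset$, the edge $(i,j)$ is eventually removed from $E'$ in the run on $G$; by Line~\ref{line:remove-edges} this happens only because $i$, or $j$, or $(i,j)$ is saturated at that instant. The hypothesis that neither $i\to j$ nor $j\to i$ lies in $D_G$ says exactly that neither endpoint is saturated when $(i,j)$ leaves $E'$, so the edge itself must be the one that is saturated, i.e.\ $\mu_G(i,j)=c_E^G(i,j)=c$. The identical argument applies to $H$ — legitimate precisely because $D_H=D_G$, so again neither endpoint is saturated when $(i,j)$ leaves $E'$ — giving $\mu_H(i,j)=c_E^H(i,j)=c$. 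Hence $\mu_G(i,j)=c=\mu_H(i,j)$.

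The only point requiring care is the bookkeeping at a simultaneous-saturation event: if $i$ (or $j$) were to become saturated at exactly the moment $(i,j)$ does, then $D_G$ would contain $j\to i$ (resp.\ $i\to j$), so the hypothesis already excludes this; assumption (A3), which makes all saturation events distinct via infinitesimal perturbations, removes the issue entirely. I do not expect a real obstacle here — the statement is essentially a reformulation of the definitions, and the analytic weight lies in the later lemmas that propagate this equality of $\mu$-values along the DAG $D$ to bound $\sum_i|(c_V^G(i)-\sum_j\mu_G(i,j))-(c_V^H(i)-\sum_j\mu_H(i,j))|$.
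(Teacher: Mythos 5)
Your proof is correct and takes essentially the same route as the paper. The paper's proof is one sentence: at the moment $(i,j)$ leaves $E'$, neither endpoint is saturated (that is exactly the content of the hypothesis on $D$), so by maximality of $\mu_G$ and $\mu_H$ the edge must be saturated, giving $\mu_G(i,j)=c_E^G(i,j)=c_E^H(i,j)=\mu_H(i,j)$. You unfold the same reasoning via Line~\ref{line:remove-edges} directly (the edge leaves $E'$ only because $i$, $j$, or $(i,j)$ is saturated, and the first two are ruled out), which is if anything a touch more transparent than appealing to maximality; you also correctly flag the implicit use of (A1) to get $c_E^G(i,j)=c_E^H(i,j)$ and the trivial $c=0$ case, both of which the paper glosses over.
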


\begin{proof}
From the definition of the dependency graphs,
if both $i$ and $j$ are not saturated by the time $(i, j)$ gets removed from $E'$,
then by the fact (\autoref{lem:rising-tide-correctness}) that both $\mu_G$ and $\mu_H$ are maximal fractional matchings,
$\mu_G(i, j) = c_E^G(i, j) = c_E^H(i, j) = \mu_H(i, j)$.
\end{proof}

Suppose graph $G$ and $H$ have the same capacities except at some vertex $s\in V$. Then,
\autoref{observation:self-saturated-edges} implies that if we run the rising tide algorithm on both instances $G$ and $H$, the first moment they differ from each other, must be the case where on one graph $s$ is saturated but on another graph $s$ is not.
In this case, we can think of $s$ being the \emph{source} of all the disagreement.
Intuitively, if we look at an edge $e$ where $\mu_G(e)\neq \mu_H(e)$, we should be able to trace and blame this disagreement to the source of shenanigans. 

\begin{lemma}\label{observation:path-to-changing-vertex}
Assume that $G$ and $H$ differ only in the capacity of one vertex $s$ and that $D := D_G=D_H$.
Consider any edge $(i, j)$ such that $\mu_G(i, j)\neq \mu_H(i, j)$. Then,
there exists a (possibly empty) paths in the dependency graph $D$ from $i$ and $j$ to $s$.
Moreover, any edge $e$ on this path satisfies $\mu_G(e)\neq \mu_H(e)$.
\end{lemma}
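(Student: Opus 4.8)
We're working with two fractional matching instances $G, H$ that differ only in the capacity of a single vertex $s$, and we assume (using the Continuity Lemma plus infinitesimal perturbations) that $D_G = D_H =: D$ is a common DAG with no simultaneous saturations. We want: if $(i,j)$ is any edge with $\mu_G(i,j) \ne \mu_H(i,j)$, then there is a (possibly empty) directed path in $D$ from $i$ to $s$ and from $j$ to $s$, and every edge on such a path also has differing $\mu$-values. The plan is to trace backward through the \RisingTide{} execution to find where the disagreement originated, using the basic properties of $D_G$ from \autoref{lem:rising-tide-output-properties}.

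**Main argument.** First I would dispose of the trivial direction: by \autoref{observation:self-saturated-edges}, if $(i,j)$ has neither $i\to j$ nor $j\to i$ in $D$, then $\mu_G(i,j) = \mu_H(i,j) = c_E(i,j)$; so since our edge \emph{does} differ, at least one of its endpoints, say $j$, is saturated when $(i,j)$ is removed from $E'$, i.e., $i \to j \in D$ (and symmetrically we must handle the other endpoint). So it suffices to show: whenever $j$ is saturated in (at least) one of the two runs at the moment $(i,j)$ leaves $E'$, and $\mu_G(i,j) \ne \mu_H(i,j)$, then either $j = s$, or there is some edge $(k,j)$ incident to $j$ with $k \to j \in D$ and $\mu_G(k,j) \ne \mu_H(k,j)$ — from which we can induct on the topological order of $D$. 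Here is the key local step: consider the moment $j$ becomes saturated (WLOG in run $G$; by the $D_G = D_H$ assumption it's saturated in both, and by (A3) no other vertex is saturated at that instant). At that instant $\sum_k \mu_G(k,j) = c_V^G(j)$ and $\sum_k \mu_H(k,j) = c_V^H(j)$. If $j \ne s$ then $c_V^G(j) = c_V^H(j)$, so $\sum_k \mu_G(k,j) = \sum_k \mu_H(k,j)$. Since the $\mu$-values of $(i,j)$ disagree while the sums over all incident edges agree, there must be \emph{another} incident edge $(k,j)$ with $\mu_G(k,j) \ne \mu_H(k,j)$. Moreover, because $j$ gets saturated at this moment, \emph{every} edge incident to $j$ that is still in $E'$ is frozen now, and by \autoref{lem:rising-tide-output-properties}(1)–(3) the edges incident to $j$ that were removed earlier have $\mu$-value at most $\mu_G(i,j)$ (they were frozen before $j$ saturated, so by a vertex/edge-saturation at their \emph{other} endpoint); in particular every incident edge with a \emph{larger} $\mu$-value than those frozen-early ones — and in fact the ones attaining the current value — gives a $k$ with $k \to j$ in $D$. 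I would argue that the disagreeing incident edge $(k,j)$ found above must be one frozen \emph{simultaneously with the saturation of $j$} (the early-frozen ones have equal $\mu$ by the analogous argument applied at their own saturation times, pushed by induction), hence $k \to j \in D$. Then apply induction: $(k,j)$ is a disagreeing edge, $j$ is saturated at its removal, so either $j = s$ — giving the path $i \to j = s$, length one plus the empty path — or we recurse to find $\ell \to k$, and so on, and since $D$ is a DAG this walk cannot cycle and must terminate at $s$. Symmetrically, we obtain a path from $i$ to $s$ by considering whether $i$ is saturated; if $i$ is not saturated when $(i,j)$ leaves $E'$ then $i = i$ is an endpoint and we'd need $\mu_G(i,j) = c_E$ forces the $i$-side path to be empty only if... — actually here one must be careful: the claim asserts paths from \emph{both} $i$ and $j$ to $s$, so I would argue that if $i$ is not saturated at removal time, then $(i,j)$ leaving $E'$ was caused by $j$ or $(i,j)$ itself saturating; if $(i,j)$ itself is saturated then $\mu_G(i,j) = c_E^G(i,j) = c_E^H(i,j) = \mu_H(i,j)$, contradiction; so $j$ is saturated, handled above, and separately $i$ must be saturated at some later point or the argument re-applies with roles of $i$ and the whole path structure — the cleanest framing is that the disagreement propagates backward from \emph{either} endpoint and both traces reach the unique source $s$.

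**The main obstacle.** The delicate point is the inductive bookkeeping in the local step: showing that among edges incident to a saturated vertex $j \ne s$, the \emph{only} way a disagreement can exist on $(i,j)$ is for a disagreement to exist on another incident edge $(k,j)$ that is an actual dependency edge $k \to j$ in $D$ — rather than on an edge frozen strictly earlier. This requires a careful induction on the saturation-time order (equivalently, on a topological order of $D$), arguing that all edges incident to $j$ and frozen \emph{before} $j$ saturated already have matching $\mu$-values by the inductive hypothesis applied at their earlier-in-time removal events. I would structure the whole proof as a single induction over saturation events in time order, which simultaneously establishes the path-existence and the "every edge on the path disagrees" clause. The DAG property \autoref{lem:rising-tide-output-properties}(5) is what guarantees termination of the backward trace at $s$.
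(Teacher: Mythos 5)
Your overall plan — a time-ordered induction that traces a disagreeing edge back to $s$ through the dependency graph — is the same strategy the paper uses, and your base case is essentially correct. But there is a directional error in your inductive step that breaks the argument.

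You set up the step with $j$ saturated when $(i,j)$ leaves $E'$, so $i\to j \in D$. You then look for another incident edge whose $\mu$-values disagree, and you assert that this edge must be one frozen simultaneously with $j$'s saturation, ``hence $k\to j \in D$.'' That cannot be right. By \autoref{lem:rising-tide-output-properties}(2), applied to both $G$ and $H$ (using $D_G = D_H$), \emph{every} incoming edge $k\to j$ has $\mu_G(k,j) = \mu_G(i,j)$ and $\mu_H(k,j) = \mu_H(i,j)$, so every such edge has the \emph{same} nonzero difference $\Delta := \mu_G(i,j)-\mu_H(i,j)$. If $j\neq s$ then $c_V^G(j)=c_V^H(j)$, so $\sum_k\bigl(\mu_G(k,j)-\mu_H(k,j)\bigr)=0$ over all edges incident to $j$. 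The incoming edges contribute $m\Delta\neq 0$ for some $m\ge 1$, so the cancellation \emph{must} come from an edge that is \emph{not} incoming to $j$ — i.e., an edge $(j,w)$ frozen strictly earlier, which by \autoref{observation:self-saturated-edges} must satisfy $j\to w \in D$ (it cannot be $w\to j$ because its $\mu_G$-value is strictly smaller than the common incoming value, by \autoref{lem:rising-tide-output-properties}(1)). Your accompanying claim that ``the early-frozen ones have equal $\mu$ by the analogous argument, pushed by induction'' is therefore false as well: the balance equation at $j$ \emph{forces} an early-frozen edge to disagree.

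This direction matters for two reasons. First, recursing on an incoming edge $(k,j)$ gives no progress: by property (2) it is frozen at the same instant as $(i,j)$ with the same $\mu_G$-value, so your induction quantity does not decrease. Second, and more fundamentally, you need a path \emph{from} $j$ \emph{to} $s$, which requires you to follow \emph{outgoing} edges $j\to w$, not incoming ones $k\to j$. The paper's proof finds exactly such an outgoing edge $(j,w)$ (in the paper's notation, $(i,j')$ with $i\to j'$), uses property (1) to conclude $\mu_G(j,w)<\mu_G(i,j)$ so the induction terminates, and then prepends $i\to j$ to the inductively obtained path from $j$ to $s$. If you flip the direction of the edge you recurse on and adjust the inductive hypothesis accordingly (``if an earlier-frozen edge disagrees, both its endpoints have paths to $s$,'' not ``earlier-frozen edges agree''), your argument becomes the paper's proof.
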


\begin{proof}
Without loss of generality, when we consider an edge $(i, j)$ with $\mu_G(i, j)\neq \mu_H(i, j)$, we may always assume $j\to i$ appears in $D_G$. (This edge must exist by \autoref{observation:self-saturated-edges}.)
That is, when $(i, j)$ is removed from the \RisingTide{} algorithm that runs on $G$ it is because $i$ is saturated.
Since $D_G=D_H$, $i$ is also saturated by the time when $(i, j)$ gets removed on both instances $G$ and $H$.
Now we prove this lemma by induction on all edges from the smallest $\mu_G$ value to the largest $\mu_G$ value.

\medskip

\noindent\textbf{Base Case.}
Suppose $(i, j)$ is one of the edges with the minimum $\mu_G$-value such that $\mu_G(i, j)\neq \mu_H(i, j)$.
Since this is the first moment when the algorithm behaves differently, and we assume that $j\to i$ on $G$, 
it follows that at time $\mu_G(i, j)$, the vertex $i$ is saturated in $G$ but not in $H$. 
Moreover, all other edges incident to $i$ have the same $\mu_G$-value at this time.
Therefore $c_V^G(i) < c_V^H(i)$, and hence $i=s$.  There is a trivial path from $i=s$ to $s$ and a path
from $j$ to $s$ via $j\to i$.

\medskip

\noindent\textbf{Inductive Case.}
Now let us prove the inductive case. Suppose $\mu_G(i, j)\neq \mu_H(i, j)$ and when 
$(i, j)$ is removed from $E'$, the vertex $i$ is saturated.  If $i=s$ then we are done. 
Otherwise, we have $c_V^G(i, j) = c_V^H(i, j)$.
By \autoref{lem:rising-tide-output-properties} statements (2) and (3) and summing up all fractional matching values around the vertex $i$,
we know that there exists an edge $(i, j')$ with 
$\mu_G(i, j')\neq \mu_G(i, j)$ and also $\mu_G(i, j') \neq \mu_H(i, j')$.
By \autoref{lem:rising-tide-output-properties} statement (1) we know that $\mu_G(i, j') < \mu_G(i, j)$.
By the induction hypothesis and \autoref{observation:self-saturated-edges}, 
we know that $i\to j'$ in $D$ and there must be a path from $j'$ to $s$ on $D$. 
Therefore, there exists paths from $i$ and $j$ to $s$ in $D$ as well.
\end{proof}

Now, we prove the simplest version of \autoref{lem:rising-tide-output}
where only one vertex capacity is different with the assumption that the dependency graphs are the same.

\begin{lemma}\label{lem:rising-tide-diff-by-one-vertex}
Assume $G$ and $H$ only differ in the capacity of one vertex $s$, and that $D := D_G=D_H$.
Then, the total differences among the remaining vertex capacities can be bounded by
\[
\sum_{i} \left| \left(c_V^G(i) -  \sum_j \mu_G(i, j)\right) - \left(c_V^H(i) - \sum_j \mu_H(i, j)\right)\right|\le |c_V^G(s) - c_V^H(s)|.
\]
\end{lemma}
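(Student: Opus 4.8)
The plan is to prove the statement by a continuity argument that reduces the finite-difference bound to an infinitesimal one. Assume without loss of generality that $c_V^G(s)=c_V^H(s)+\delta$ with $\delta\geq 0$, and interpolate: for $t\in[0,\delta]$ let $G^{(t)}$ be the instance obtained from $H$ by raising $c_V(s)$ to $c_V^H(s)+t$, let $\mu^{(t)}$ be the output of \RisingTide{} on $G^{(t)}$, and let $r^{(t)}(i)=c_V^{(t)}(i)-\sum_j \mu^{(t)}(i,j)$ be the resulting new weights. By the Continuity Lemma (\autoref{lem:continuity-lemma}) and the standing assumption that the dependency graph is fixed, $t\mapsto\mu^{(t)}$ is piecewise linear, so each $r^{(t)}(i)$ is piecewise linear; hence it suffices to show $\sum_i\left|\tfrac{d}{dt}r^{(t)}(i)\right|\leq 1$ wherever the derivative exists, and then integrate over $[0,\delta]$ and apply the triangle inequality to get $\sum_i|r^{(\delta)}(i)-r^{(0)}(i)|\leq\delta$, which is the claim.

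Fix such a $t$ and write $\dot{x}$ for $\tfrac{d}{dt}x^{(t)}$. Since only $c_V(s)$ varies, $\dot r(i)=-\sum_j\dot\mu(i,j)$ for $i\neq s$ while $\dot r(s)=1-\sum_j\dot\mu(s,j)$. Two facts drive the rest. First, the matching at $s$ cannot grow faster than $s$'s capacity, so $\epsilon:=\sum_j\dot\mu(s,j)\in[0,1]$ and therefore $|\dot r(s)|=1-\epsilon$. Second — the key simplification — every vertex $i\neq s$ that is saturated in $G^{(t)}$ has $\sum_j\mu^{(t)}(i,j)=c_V(i)$, a quantity independent of $t$, so $\dot r(i)=0$; thus only $s$ and the unsaturated vertices (equivalently, the sources of the dependency graph) contribute to $\sum_i|\dot r(i)|$. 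It therefore remains to prove $\sum_{i\neq s}|\dot r(i)|\leq\epsilon$, after which $\sum_i|\dot r(i)|=(1-\epsilon)+\sum_{i\neq s}|\dot r(i)|\leq 1$.

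To prove $\sum_{i\neq s}|\dot r(i)|\leq\epsilon$, I would interpret the edge-rates $\dot\mu(e)$ as a flow emanating from $s$: vertex $s$ injects total rate $\epsilon$ onto its incident edges, and at every saturated $i\neq s$ the balance $\sum_j\dot\mu(i,j)=0$ says this flow is conserved (re-routed among $i$'s remaining edges), so it can only be absorbed at the unsaturated vertices — where the absorbed net amount is exactly $|\dot r(i)|$ by the second fact above. By \autoref{observation:self-saturated-edges} and \autoref{observation:path-to-changing-vertex} the edges carrying nonzero rate lie on directed paths of the dependency DAG ending at $s$, and by tracing how raising $c_V(s)$ shifts the successive saturation times — a saturated vertex that receives more (resp. less) matching saturates earlier (resp. later) and hence freezes its remaining edges earlier (resp. later) — one shows that $\mathrm{sign}(\dot\mu(e))$ alternates along these paths; together with \autoref{lem:rising-tide-output-properties} (the monotone-path and acyclicity properties) this makes the flow coherent, the only leakage being into self-loops, which absorb flow without re-emitting it and so can only decrease the total. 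Summing then gives $\sum_{i\neq s}|\dot r(i)|\leq\epsilon$. The main obstacle is exactly making this flow picture rigorous: establishing the sign-alternation of $\dot\mu$ along dependency paths, and the bookkeeping at self-loops and at vertices reachable from $s$ by paths of differing lengths, where positive and negative contributions can cancel — which only helps the inequality but must be tracked carefully.
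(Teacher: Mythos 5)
Your interpolation/differentiation framing is a legitimate alternative to the paper's direct finite-difference argument, and the two easy facts you isolate are correct: at a saturated vertex $i\neq s$, $\sum_j\mu^{(t)}(i,j)=c_V(i)$ is constant, so $\dot r(i)=0$; and $\dot r(s)=1-\epsilon$ with $\epsilon=\sum_j\dot\mu(s,j)$. But the step you flag as ``the main obstacle'' is exactly where the proof is: you need $\sum_{i\neq s}|\dot r(i)|\leq\epsilon$, and the flow-conservation heuristic you invoke does not deliver it. Conservation at the saturated vertices (namely $\sum_j\dot\mu(i,j)=0$ for saturated $i\neq s$) controls the \emph{signed} sum $\sum_{i\neq s}\dot r(i)$, not the sum of absolute values. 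Because the ``flow'' $\dot\mu(e)$ can be negative, distinct sinks (unsaturated vertices) can receive flows of opposite sign, and nothing in the conservation statement prevents $\sum_{i\neq s}|\dot r(i)|$ from exceeding $\epsilon$. The sign-alternation you appeal to is unproven, and even if true along a single path it does not obviously control a vertex reached by multiple dependency paths of different parities or with self-loops and branching; you yourself note that positive and negative contributions could cancel at such a vertex, but you don't rule out that they could \emph{add} on the outgoing side.

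The paper closes precisely this gap with a different mechanism: an induction over downward-closed sets $S$ of the dependency DAG, proving the stronger statement that for \emph{every} choice of coefficients $\nu_{i\to j}\in[-1,1]$ on the boundary $\partial S$, $\bigl|\sum_{(i\to j)\in\partial S}\nu_{i\to j}\,\Delta\mu(i,j)\bigr|\le|c_V^G(s)-c_V^H(s)|$. The quantifier over $\nu$ is what lets one later specialize $\nu_{i\to j}=\operatorname{sgn}(\Delta\mu(i,j))$ and recover absolute values, i.e.\ it is a built-in substitute for the sign-alternation you were trying to prove. The inductive step uses \autoref{lem:rising-tide-output-properties}(2) — all edges directed into a vertex $u$ in $D$ share a common $\mu$-value, hence a common difference $\Delta$ — together with $c_V^G(u)=c_V^H(u)$ at saturated $u\neq s$ to replace the incoming coefficients by their average $\nu_u$ and push it onto the outgoing edges without increasing the bound. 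That aggregation step is the rigorous version of your ``re-routing at saturated vertices,'' and it is the piece your sketch is missing. To complete your derivative version you would need to establish the analogous statement $\bigl|\sum_{(i\to j)\in\partial S}\nu_{i\to j}\,\dot\mu(i,j)\bigr|\le 1$ by the same induction, at which point the two proofs become essentially the same argument in differential form.
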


\newcommand{\COEF}{\nu}
\newcommand{\DDIFF}{D_{\mathrm{diff}}}
\begin{proof}
By \autoref{observation:path-to-changing-vertex}, 
all edges that have different fractional matching values 
 form a subgraph $\DDIFF$ of $D$ with $s$ being the only minimal element.
If $s$ is not saturated then there are no incoming edges to $s$. 
By \autoref{observation:path-to-changing-vertex} 
we know that $\DDIFF=\emptyset \implies \mu_G=\mu_H$ 
and in this case the equality holds for the statement.

\medskip

Observe that whenever there is an incoming edge to a vertex $i$ in $D$, the vertex $i$ must be saturated.
Since we are measuring differences in the remaining vertex capacities, the only place where such disagreement could happen is on all maximal vertices of $\DDIFF$.
Let $T$ be the set of maximal vertices, i.e., those without incoming edges.

\medskip

We prove a certain inequality by induction over 
all sets $S$ such that $S\subseteq V-T$ and $S$ is \emph{downward closed}, meaning there is no outgoing edge from $S$ to $V-S$.  As a consequence $s\in S$.
Let $\partial S$ be the set of incoming edges
from $V-S$ to $S$.  
We will prove that for any coefficients
$\{\COEF_{i\to j} \in [-1, 1]\}_{(i\to j)\in \partial S}$  
we have
\[
\left|\sum_{(i\to j)\in \partial S} \COEF_{i\to j} (\mu_G(i, j) - \mu_H(i, j)) \right| \le |c_V^G(s) - c_V^H(s)|.
\]

\medskip

\paragraph{Base Case.} 
The minimal downward closed set is $S=\{s\}$.
By \autoref{lem:rising-tide-output-properties} statement (2) all incoming edges have the same $\mu_G(i, s)-\mu_H(i, s)$ values.
That is, all terms in $\{\mu_G(i, s) - \mu_H(i, s)\}$ are of the same sign and hence claim is true for the base case.

\medskip

\paragraph{Inductive Case.}
Consider any downward-closed set $S\subseteq V-T$ 
with $|S|\geq 2$, and let $\{\COEF_{i\to j}\in [-1, 1]\}$ 
be any set of coefficients on the fringe $\partial S$.
Let $u\neq s$ be any maximal element in $S$.

Let $X_{\operatorname{in}}$ 
and $X_{\operatorname{out}}$
be the set of 
incoming and outgoing edges incident to $u$.
Since $S$ is downward-closed, we have
\[
\partial S = \partial (S - \{u\}) \cup X_{\operatorname{in}} - X_{\operatorname{out}}.
\]

Now, by \autoref{lem:rising-tide-output-properties} we know that each incoming edge $(i\to u)$ in $X_{\operatorname{in}}$ has the same fractional matching value in both $\mu_G$ and $\mu_H$. We denote the difference by  $\Delta\bydef \mu_G(i, u)-\mu_H(i, u)$.

Let $\COEF_u = \frac{1}{|X_{\operatorname{in}}|}\left(\sum_{(i\to u)\in X_{\operatorname{in}}}\COEF_{i\to u}\right) \in[-1, 1]$ be the average coefficient among all incoming edges.
Since $u$ is saturated, we have
\begin{align*}
&\ \ \ \sum_{(u\to j)\in X_{out}} \COEF_u(\mu_G(u, j) - \mu_H(u, j))
+
\sum_{(i\to u)\in X_{in}} \COEF_{i\to u} (\mu_G(i, u) - \mu_H(i, u)) \\
&= \sum_{(u\to j)\in X_{out}} \COEF_u(\mu_G(u, j) - \mu_H(u, j))
+
\COEF_u |X_{in}|\cdot \Delta \tag{by definition of $\COEF_u$}\\
&=\COEF_u \left(\sum_{(u\to j)\in X_{out}} (\mu_G(u, j) - \mu_H(u, j))
+
\sum_{(i\to u)\in X_{in}} (\mu_G(i, u) - \mu_H(i, u))
\right)
\\
&= \nu_u(c_V^G(u) - c_V^H(u)) \\
&= 0.
\end{align*}

Now, by removing $u$ from $S$ we have obtained a smaller subset which we can apply induction hypothesis on.
Define coefficients $\{\COEF'_{i\to j}\}$ with $\COEF'_{u\to j} = -\nu_u$ for all $(u\to j)\in X_{out}$ and $\COEF'_{i\to u} = \COEF_{i\to j}$ for all unrelated edges not incident to $u$.
Then, we have
\begin{align*}
    \left|\sum_{(i\to j)\in \partial S} \COEF_{i\to j} (\mu_G(i, j) - \mu_H(i, j)) \right|
    &\le  \left|\left(\sum_{(i\to j)\in \partial (S - \{u\})} \COEF'_{i\to j} (\mu_G(i, j) - \mu_H(i, j))\right) + \COEF_u\cdot 0 \right| \tag{vertex $u$ is saturated} \\
    &\le |c_V^G(i) - c_V^H(i)|. \tag{by induction hypothesis}
\end{align*}

By choosing $S=V\setminus T$ and coefficients $\COEF_{i\to j} = \mathrm{sgn}(\mu_G(i, j) - \mu_H(i, j))$ for every edge $(i\to j)\in \partial S$, 
we conclude that 
\begin{align*}
 &\ \ \ \ \sum_{i} \left| \left(c_V^G(i) -  \sum_j \mu_G(i, j)\right) - \left(c_V^H(i) - \sum_j \mu_H(i, j)\right)\right|\\
 &= \sum_{i\neq s}\left|\sum_j\mu_G(i, j) - \sum_j\mu_H(i, j)\right| \tag{for all $i\neq s$, $c_V^G(i)=c_V^H(i)$, and $s$ is saturated}\\
    &=\sum_{i\in T} \left|\sum_j\mu_G(i, j) - \sum_j\mu_H(i, j)\right|\\
    &=\sum_{(i\to j)\in \partial S} \COEF_{i\to j}(\mu_G(i, j) - \mu_H(i, j)) \tag{use $\COEF_{i\to j}$ to remove the absolute operation}\\
    &=\left|\sum_{(i\to j)\in \partial S} \COEF_{i\to j}(\mu_G(i, j) - \mu_H(i, j))\right| \tag{this sum is positive}\\
    &\le |c_V^G(s) - c_V^H(s)|.
\end{align*}

\end{proof}

Similarly, by a reduction to vertices changes, we have the bound for the edge changes.

\begin{lemma}\label{lem:rising-tide-diff-by-one-edge}
Assume that $G$ and $H$ differ only in the capacity 
of one edge $(s, t)\in E$.
Assume that $D:=D_G=D_H$.
Then,
\[
\sum_{i} \left| \left(c_V^G(i) -  \sum_j \mu_G(i, j)\right) - \left(c_V^H(i) - \sum_j \mu_H(i, j)\right)\right|\le 2|c_E^G(s, t) - c_E^H(s, t)|.
\]
\end{lemma}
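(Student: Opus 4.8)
The plan is to prove this by mirroring, almost verbatim, the argument for Lemma~\ref{lem:rising-tide-diff-by-one-vertex}, now letting the changed edge $(s,t)$ play the role that a single changed vertex played there. The factor of $2$ will enter precisely because the disagreement emanating from $(s,t)$ can propagate through \emph{two} endpoints $s$ and $t$, rather than through a single source vertex. First, by the Continuity Lemma (\autoref{lem:continuity-lemma}) and the usual interpolation/perturbation step, I may assume $D:=D_G=D_H$ and that $G,H$ differ in exactly the single capacity on $(s,t)$; write $\xi=|c_E^G(s,t)-c_E^H(s,t)|$.

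Next I would record the two structural facts that mirror \autoref{observation:self-saturated-edges} and \autoref{observation:path-to-changing-vertex}. (i) Any edge $e\neq(s,t)$ for which neither orientation appears in $D$ satisfies $\mu_G(e)=\mu_H(e)$: such an $e$ was removed from $E'$ only because it itself saturated, i.e. $\mu(e)=c_E(e)$, and $c_E(e)$ is common to $G$ and $H$. (ii) Consequently every edge $e$ with $\mu_G(e)\neq\mu_H(e)$ is either $(s,t)$ itself, or lies on a monotone directed path in $D$ consisting entirely of $\mu$-disagreeing edges and terminating at $s$ or at $t$; the induction proving this is identical to that in \autoref{observation:path-to-changing-vertex}, since the unique object whose own capacity differs between $G$ and $H$ is the edge $(s,t)$, whose endpoints are $s$ and $t$. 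I would also note $|\mu_G(s,t)-\mu_H(s,t)|\le\xi$ (equality when $(s,t)$ is edge-saturated in $G$, hence also in $H$ since $D_G=D_H$; and $\le\xi$ otherwise by the same conservation reasoning used in the vertex case).

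With these in hand I would run the downward-closed-set induction of Lemma~\ref{lem:rising-tide-diff-by-one-vertex}. Let $\DDIFF$ be the subgraph of $D$ on $\mu$-disagreeing edges, let $T$ be its set of maximal vertices (those with no incoming $\DDIFF$-edge; these are the only vertices whose remaining capacity can differ between $G$ and $H$), and prove, for every downward-closed $S$ with $\{s,t\}\subseteq S\subseteq V\setminus T$ and every choice of $\{\nu_{i\to j}\in[-1,1]\}$ on the incoming boundary $\partial S$, that
\[
\Bigl|\sum_{(i\to j)\in\partial S}\nu_{i\to j}\bigl(\mu_G(i,j)-\mu_H(i,j)\bigr)\Bigr|\ \le\ 2\xi .
\]
The inductive step (peel a maximal $u\in S\setminus\{s,t\}$; use that $c_V(u)$ is identical in $G$ and $H$ so the signed disagreement entering $u$ equals that leaving it, and that all incoming $\DDIFF$-edges at $u$ carry a common value by \autoref{lem:rising-tide-output-properties}(2)) is literally the vertex-case argument. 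The base case $S=\{s,t\}$ is the new ingredient: the only internal disagreement is on $(s,t)$, and one splits $\partial\{s,t\}$ into edges entering $s$ and edges entering $t$; conservation at $s$ (if $s$ is saturated) and at $t$ (if $t$ is saturated), together with $|\mu_G(s,t)-\mu_H(s,t)|\le\xi$, bounds each piece by $\xi$, for a total of $2\xi$. Finally, taking $S=V\setminus T$ and $\nu_{i\to j}=\sgn(\mu_G(i,j)-\mu_H(i,j))$ collapses the left side to $\sum_{i\in T}\bigl|\sum_j\mu_G(i,j)-\sum_j\mu_H(i,j)\bigr|$, which equals $\sum_i|\mathrm{rem}_G(i)-\mathrm{rem}_H(i)|$, yielding the claimed bound.

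The step I expect to be the main obstacle is the base-case bookkeeping: making precise, across all the ways $(s,t)$ can leave $E'$ (because $(s,t)$, or $s$, or $t$ saturated, with $s$ and $t$ each possibly saturated or not later on, and edges incident to $s$ or $t$ possibly oriented either way in $D$), that the total signed $\mu$-disagreement crossing out of $\{s,t\}$ is at most $2\xi$ and not $\xi$ or $3\xi$. The clean way to phrase it is that the edge $(s,t)$ injects at most $\xi$ of disagreement "into the $s$-side" and at most $\xi$ "into the $t$-side," and every downstream saturated vertex merely redistributes, never amplifies, this budget.
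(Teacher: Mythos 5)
Your proposal is correct in outline, but it takes a genuinely different (and more case-heavy) route than the paper. You re-run the downward-closed-set induction from the vertex case on the dependency graph $\DDIFF$, enlarging the base case to $S=\{s,t\}$ and arguing that the disagreement injected by $(s,t)$ contributes at most $\xi$ through the $s$-side and at most $\xi$ through the $t$-side. The paper instead \emph{reduces} the edge case to the vertex case in a single stroke: subdivide $(s,t)$ into a path $(s,x),(x,t)$ through a fresh vertex $x$ with $c_E(s,x)=c_E(x,t)=\infty$ and $c_V(x)=2c_E(s,t)$, and do the same to $H$. Since $D_G=D_H$, the first saturation among $\{s,t,(s,t)\}$ is the same in both; if $s$ or $t$ saturates first, the edge capacity is never binding and $\mu_G=\mu_H$, so only the case where $(s,t)$ edge-saturates first is interesting, and there the RisingTide execution on the subdivided graph mirrors the original with $x$ playing the role of a single changed vertex of capacity difference exactly $2|c_E^G(s,t)-c_E^H(s,t)|$. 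Applying Lemma~\ref{lem:rising-tide-diff-by-one-vertex} to the pair $G',H'$ then yields the bound immediately.

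The reduction buys you precisely the thing you flagged as the main obstacle: all the base-case bookkeeping disappears. If you pursue your direct route, be careful about two points. First, in the non-trivial case $(s,t)$ edge-saturates, so \emph{neither orientation of $(s,t)$ is an edge of $D$}; thus $(s,t)$ is a $\mu$-disagreeing edge that lies outside $\DDIFF$, unlike every other disagreeing edge, and it cannot simply be treated as an interior $\DDIFF$-edge of the set $S=\{s,t\}$. Second, the constraint $\{s,t\}\subseteq S\subseteq V\setminus T$ forces both $s$ and $t$ to be vertex-saturated; you need additional (easy but non-skippable) sub-cases for when one or both of $s,t$ end up unsaturated, in which case the contribution of that endpoint to the left-hand side is exactly $|\mu_G(s,t)-\mu_H(s,t)|$ and the induction is run from a smaller base. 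Your budget accounting (``$(s,t)$ injects at most $\xi$ into each side; downstream saturated vertices only redistribute'') is the right intuition, but it has to be carried through all these configurations, whereas the subdivision trick handles them uniformly.
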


\begin{proof}
The proof is by reduction to \autoref{lem:rising-tide-diff-by-one-vertex}.
Create $G'$ by subdividing $(s,t)$ into $(s,x),(x,t)$
with $c_E^{G'}(s,x)=c_E^{G'}(x,t)=\infty$
and $c_V^{G'}(x)= 2c_E^G(s,t)$.
Create $H'$ from $H$ in the same way.
Since $D_G=D_H$, the same vertices must be saturated
in both, and in particular, among $s,t,$ and $(s,t)$,
both executions saturate the same element first.
If they both saturate $s$ or $t$ first, then the capacity
of $(s,t)$ has no influence on the execution 
and $\mu_G=\mu_H$.  If they both saturate $(s,t)$ first,
then the executions on $\{G,H\}$ 
proceed identically to 
their counterpart executions on $\{G',H'\}$.
Note that $G',H'$ differ in 
one vertex capacity, with
$|c_V^{G'}(x)-c_V^{H'}(x)|=2|c_E^{G}(s,t)-c_E^{H}(s,t)|$.
The lemma then follows from 
\autoref{lem:rising-tide-diff-by-one-vertex}
applied to $G',H'$.
\end{proof}

\ignore{
\begin{proof}[Proof Sketch.]
The proof is extremely similar to the proof in \autoref{lem:rising-tide-diff-by-one-vertex}, so we only describe the necessary rewriting here.
First, we rewrite \autoref{observation:path-to-changing-vertex} so that for any edge $(i, j)$ that $\mu_G(i, j)\neq \mu_H(i, j)$, there exists a (possibly empty) path in the dependency graph $D$ from at least one endpoint in $\{i, j\}$ to one of the vertices in $\{s, t\}$. This can be proved again by induction (proof omitted because it is the same as the proof in \autoref{observation:path-to-changing-vertex}.)
Furthermore, let $T$ be the set of unsaturated vertices in the subgraph $\DDIFF$ of $D$ where the edges do change their fractional matching values.
We have the similar claim that for any downward-closed set $S$ such that $\{s, t\} \subseteq S\subseteq V\setminus T$, and any set of coefficients 
$\{\COEF_{i\to j} \in [-1, 1]\}_{(i\to j)\in \partial S}$
we have
\[
\left|\sum_{(i\to j)\in \partial S} \COEF_{i\to j} (\mu_G(i, j) - \mu_H(i, j)) \right| \le 2|c_E^G(s, t) - c_E^H(s, t)|.
\]

The base case is where $S=\{s, t\}$ (this is where a leading $2$ occurs if both $s$ and $t$ are saturated) and the inductive part of the proof is exactly the same idea as in \autoref{lem:rising-tide-diff-by-one-vertex}.
\end{proof}
}

We can now prove \autoref{lem:rising-tide-output}.

\begin{proof}[Proof of \autoref{lem:rising-tide-output}]
Imagine continuously transforming $(c_V^G,c_E^G)$ into $(c_V^H,c_E^H)$ 
by modifying one vertex capacity or one edge capacity at a time.  
In this continuous process there are two types of 
\emph{breakpoints} to pay attention to.  
The first is when we switch from transforming one capacity to another,
and the second is when the dependency graph changes. 
Let $G=G_0,G_1,\ldots,G_k=H$ be the sequence of graphs at these breakpoints.
Up to a tie-breaking perturbation, we can assume each pair $(G_i,G_{i+1})$
differ in one edge or vertex capacity, and have the same dependency graph.
By~\autoref{lem:continuity-lemma} the objective function is continuous
in the input, and does not have any discontinuities at breakpoints.
By \autoref{lem:rising-tide-diff-by-one-vertex} and \autoref{lem:rising-tide-diff-by-one-edge}
the objective function is bounded by 
$\sum_i (\eta_V(i)+2\eta_E(i)) = \eta_V + 2\eta_E$.
\end{proof}

\end{document}